\begin{document}
\printtitle									%
  	\vfill
\printauthor								%
\thispagestyle{empty}

\newpage

\section*{Preamble}
Around the year 1927 physicist, lead by Heisenberg and Schr\"odinger, discovered a new theory for subatomic particles, called quantum mechanics. This theory radically differed from classical mechanics in its physical assumptions as well as in its mathematical formulation.
The formal developments of quantum mechanics rapidly made it one of the most successful physical theories.
 Nevertheless, its physical foundations were very obscure and unjustifiable, and its mathematical basis were often shaky.
To provide a clear physical justification and a rigorous mathematical foundation for the theory, several mathematicians, lead by von Neumann, Weyl, and Jordan, set out to define an algebraic approach for quantum mechanics. In this lecture we follow their path and see how the physical formulation of the standard Schr\"odinger quantum mechanics appears from general physical arguments, paired with the brilliant observations of Heisenberg, and impeccable mathematics of von Neumann and friends...

\vspace{2cm}
In the first part of the lecture, we will develop the basic theory of operator algebras. Here we will not make any connections to physics and treat the subject in its pure form. In particular, we  discuss the function calculus and the Gelfand-Naimark-Segal construction of representations for \Cs-algebras. These tools will give us the necessary mathematical background to deal with \Cs-algebras when they appear in physics.

 In the second part of the lecture we establish a connection between pure mathematics and physics. We will see how general physical principles lead us to an algebraic description of physical theories. Combining this description with the Heisenberg's uncertainty principle and the canonical commutation relations developed by Born and Jordan we end up with the \Cs-algebra that describes a quantum mechanical particle. The final piece of the puzzle follows from the Stone-von Neumann theorem that  justifies the almost exclusive use of Schr\"odinger's formulation for quantum mechanics.

\newpage

\tableofcontents
\newpage

\section{Operator algebras: Banach and \Cs-algebras}
\subsection{Basic properties of operator algebras}
\begin{sectionmeta}
    This section is based on the  volume 1 and 3 of the series  by Kadison and Ringrose
    \cite{kadison1997fundamentalsI, kadison1997fundamentalsIII}.
    Some of the reminders and definitions that concern more standard statements
    of functional analysis are mostly from \cite{reed2012methods}. Further reading: for the abstract approach to \Cs-algebras and von Neumann
    algebras the book by Sakai \cite{sakai2012c} is great but is a little more advanced; another very satisfying read is the book by Pedersen
    \cite{pedersen2012analysis}; the four volumes of the aforementioned Kadison and Ringrose bible are anyway a treat  for the soul: the books cover most of the topics of old fashioned \Cs-algebras and develop the theory very slowly and self-consistently.
\end{sectionmeta}
In a nutshell, an algebra is a vector space in which we can multiply vectors. More formally we define the algebra as follows.

\begin{defi}[\textbf{Algebra}]
An algebra \( \mcA \) over a field \( \bbK = (\bbR, \bbC) \) is a vector space over \( \bbK \), equipped with multiplication that
\begin{enumerate}
    \item
    distributes over addition (from left and right):
    \begin{align}
         f \cdot (g + h) = f\cdot g + f \cdot h,
         &&
        (f + g) \cdot h = f \cdot h + g \cdot h;
    \end{align}

    \item
    is associative:
    \begin{align}
        (f \cdot g )\cdot h  = f \cdot (g \cdot h);
    \end{align}

    \item
    is compatible with scalar multiplication:
    \begin{align}
        \lambda (AB) = (\lambda A) B
        = A (\lambda B)
        \qquad (\lambda \in \bbK, \ A,B \in \mcA).
    \end{align}
\end{enumerate}
\end{defi}

A unit of an algebra \( \mcA \) is an element \( I\in \mcA \) such that \( I \cdot A = A \cdot I = A \) for all \( A \in \mcA \). It is clear that if such an element exists it must be unique. However, not every algebra contains a unite; if it does we call the algebra \textit{unital}. We will always use unital algebras in the following, even if we do not mention it explicitly. In one of the exercises, we will see that one can always adjoint a unite to an algebra to make it unital.
For unital algebras we can shorten the definition and state.

\begin{defi}
An algebra \( \mcA \) over a field \( \bbK \) is a vector space, whose underlying additive group has a ring structure compatible with multiplication with scalars.
\end{defi}

To recall the definitions of the vector spaces and rings consult the following remarks.
\begin{info} [Reminder: Vector space.]
A vector space over a field \( \bbK \) is a set of points \( V \) equipped with vector addition \( + \colon V \times V \to V \) and scalar multiplication \(\cdot \colon F \times V \to V \). These operations have to satisfy the following requirements.

For \(f,g \in V \) and \( \alpha, \beta \in \bbK \)
 the scalar multiplication has to
 \begin{enumerate}
    \item
    distribute over the vector addition:
    \( \alpha (f + g) = \alpha f + \alpha g \),

    \item
    distribute over the field addition:
    \( (\alpha + \beta ) f = \alpha f + \beta f\),

    \item
    be compatible with the field multiplication:
    \( (\alpha \beta)  f = \alpha (\beta f) \),

    \item
    has the neutral element:
    \( 1 f  = f \).
 \end{enumerate}

The vector addition has to be commutative and \( V \) with \( +\colon V \to V \) must be a commutative group. That is for \(f,g,h \in V \) and \( \alpha, \beta \in \bbK \) the addition
\begin{enumerate}
    \item
    is commutative (aka. Abelian):
    \( f + g  = g + f\),

    \item
    is associative:
    \( (f+g)+h = f + (g + h) \),

    \item
    has a neutral element \( e \in V \):
    \( e + f = f \)

    \item
    is such that for every \( f \in V \) there is an inverse \( f^{-1} \in V \):
    \( f + f^{-1} = e \).
\end{enumerate}
\end{info}

\begin{info}[Reminder: Ring.]
    A ring is a commutative group \( (R, +) \) equipped with multiplication \( \cdot \colon R \times R \to R \), such that with \( f,g,h \in R \) the multiplication
    \begin{enumerate}
        \item
        distributes from the left over addition:
        \( f \cdot (g + h) = f\cdot g + f \cdot h \)

        \item
        distributes from the right over addition:
        \( (f + g) \cdot h = f \cdot h + g \cdot h \)

        \item
        is associative:
        \( (f \cdot g )\cdot h  = f \cdot (g \cdot h)\)

        \item
        has a neutral element \( I \in R \):
        \( I \cdot f = f \cdot I = f \).
    \end{enumerate}
\end{info}

\begin{expl}
To visualize the concept let us recall some algebras that we know and love.
\begin{enumerate}
    \item
    The simplest example of an algebra is the algebraic field \( \bbK \) itself.

    \item
    A more elaborate example is the space \( \mcM (n, \bbC) \) of \( n \times n \) matrices with complex coefficients. This algebra is unital. The unit being the identity matrix.

    \item
    Yet another example is the space \( \bbR \times \bbR \) with the Hadamard product defined by
    \begin{align}
        (1,0) \cdot (1, 0) = (1,0), && (0,1) \cdot (0,1) = (0,1), && (1,0) \cdot (0,1) = (0,0),
    \end{align}
    and required to be compatible with multiplication by scalars. Notice, however, that this is not quite natural on the vector space \( \bbR^{2} \). This is because the definition of the product depends on the choice of the basis. In particular, this product changes under rotations, which is why it rarely appears in applications to physics.

    \item
    Finally, consider the space of all polynomials in a single variable. We will call this space \( \pol \). It is a vector space, and equipped with the pointwise product it becomes a unital algebra, whose unit is the constant \( 1 \).
\end{enumerate}
\end{expl}

\begin{defi}
A norm \( \| \cdot \|  \) on an algebra \( \mcA \) is said to be continuous if for \( A,B \in \mcA \) it satisfies \( \| A \cdot B \| \leq \| A \| \ \| B \| \).
\begin{enumerate}
    \item
    An algebra with a continuous norm is called a \textit{normed algebra}.

    \item
    A normed algebra that is complete (in the metric induced by the norm) is called a \textit{Banach algebra}.
\end{enumerate}
\end{defi}

As we will see, Banach algebras can be quite small (that is contain few elements). Nevertheless, they are very general, as they have only little structure compared to the algebras we will primarily use. We define this additional structure as follows.

\begin{defi}\label{def:involution}
Let \( \mcA \) be a Banach algebra. A mapping \( x \mapsto x^{\star} \) of \( \mcA \) into itself is called an involution if it satisfies the following conditions:
\begin{enumerate}
    \item
    \( (x^{\star})^{\star} = x \),

    \item
    \( (x + y)^{\star} = x^{\star} + y^{\star} \),

    \item
    \( (xy)^{\star} = y^{\star} x^{\star} \),

    \item
     \( (\lambda x)^{\star} = \bar{\lambda} x^{\star} \),     for \( \lambda \in \bbC \) and \( \bar{\lambda} \) being its complex conjugate.
\end{enumerate}
\end{defi}

\begin{info}[Reminder: Adjoint.]
Let \( B(H) \) be the space of bounded operators on the Hilbert space \( H \), and let \( \langle \cdot , \cdot \rangle \) denote the inner product on \( H \). Then, for \( A \in B(H) \) we define its adjoint as the operator \( A^{\dagger} \in B(H)\) such that
\begin{align}
    \langle x , A y \rangle = \langle A^{\dagger} x , y \rangle
    \qquad
    (x, y \in H).
\end{align}
\end{info}

It is not difficult to see that the adjoint on the space of bounded linear operators on a Hilbert space satisfies all the requirements of Definition \ref{def:involution}. Indeed, the adjoint is the reason why we made these requirements in the first place. However, in the case of bounded linear operators, we define an adjoint using the scalar product and then check that it satisfies the properties in the Definition. For algebras, the logic is reversed. We don't have the scalar product at our disposal, and so we require the properties as the defining features of the mapping.
In this sense the involution, sometimes simply called the \textit{star (operation)}, generalizes the concept of the adjoint for bounded linear operators.

With this additional structure we can now define the algebras, that will accompany us for the rest of the lecture.

\begin{defi}
    A \textit{Banach \( \star \)-algebra \( \mfB \)} is a Banach algebra with an (isometric) involution, that is with the property that \( \| A^{\star} \| = \| A \| \) for every \( A \in \mfB \).
\end{defi}
\begin{defi}
    A \textit{\Cs-algebra \( \mfA \)} is a complex Banach \( \star \)-algebra that has the \Cs-property:
    \begin{align}
        \| A^{\star} A \| = \| A \|^{2} \qquad (A \in \mfA).
    \end{align}
\end{defi}

In the definition of Banach \( \star \)-algebras some authors do not require the involution to be isometric. In our case, this difference will not play a big role, because we will only use either Banach algebras (without the star), or \Cs-algebras. For the latter, however, the involution is always isometric: since if \( A \) is an element in the \Cs-algebra \( \mfA \) then the \Cs-property implies
\begin{align}
    \| A \|^{2} = \| A^{\star} A \| \leq \| A^{\star} \| \| A \|,
\end{align}
and thus \( \| A \| \leq \| A^{\star} \|  \). Upon replacing \( A \to A^{\star} \) we get the inverse inequality and thus \( \| A \| = \| A^{\star} \| \). Notice, however, that the converse is not true: an isometric involution is not enough to ensure the \Cs-property. Thus, in general \Cs-algebras have more structure than Banach \( \star \)-algebras.

The name giving for the elements of the Banach \( \star \)-algebra follows the usual convention. We introduce the standard notation.

\begin{info}[Common notation.]
    Let \( \mfA \) be a unital algebra with an involution.
    For \( A \in \mfA \) we call the element \( A^{\star} \) the \textit{adjoint of} \( A \). We say that \( A \) is
    \begin{enumerate}
        \item
        \textit{self-adjoint} if \( A^{\star} = A \);

        \item
        \textit{normal} if \( AA^{\star} = A^{\star}A \);

        \item
        \textit{unitary} if \( AA^{\star} = A^{\star}A = I \), with the unit \( I \in \mfA \);

        \item
        \textit{invertible} if there exists a \( B \in \mfA \) with
        \( AB = BA = I \). If such \( B \) exists it is unique and we call it the \textit{two-sided inverse} of \( A \) and denote it by \( A^{-1} \);

        \item
        \textit{self-inverse} if \( A = A^{-1} \);
    \end{enumerate}
    It follows that the unit \( I \) is self-adjoint, self-inverse, and unitary since
    \begin{align}
        (I^{\star} I)^{\star} = (I^{\star})^{\star}  = I = I^{\star} I = I^{\star}.
    \end{align}
    Further, we can state the following:
    \begin{enumerate}
        \item
        The set of all self-adjoint elements of \( \mfA \) is a real vector space.

        \item
        The set of all unitary elements of \( \mfA \) forms a multiplicative group.

        \item
        Each element \( A \in \mfA \) can be uniquely expressed as \( H + \imath K \), where \( H = \frac{1}{2}(A + A^{\star}) \) and \( K = \frac{\imath}{2}(A^{\star} - A) \) are self-adjoint elements of \( \mfA \), called the \textit{real} and the \textit{imaginary} part of \( A \), respectively.
        Moreover, \( A \) is normal if and only if \( H \) and \( K \) commute.
    \end{enumerate}
\end{info}

\begin{prop}
Let \( A \) be an element in a \Cs-algebra. Then \( A \) has a (two-sided) inverse if and only if \( A^{\star} \) is invertible and then \( (A^{\star})^{-1} =  ( A^{-1})^{\star} \).
\end{prop}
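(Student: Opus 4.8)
The plan is to exploit the anti-multiplicative property of the involution (item 3 of Definition \ref{def:involution}) together with the fact, already recorded in the common-notation reminder, that the unit is self-adjoint, \( I^{\star} = I \). None of the analytic structure — neither the norm, nor completeness, nor the \Cs-property — will actually be needed; the statement is purely algebraic and in fact holds in any unital \( \star \)-algebra.

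First I would establish the forward implication. Suppose \( A \) has a two-sided inverse \( A^{-1} \), so that \( A A^{-1} = A^{-1} A = I \). Applying the involution to each of these identities and using \( (xy)^{\star} = y^{\star} x^{\star} \) together with \( I^{\star} = I \) yields
\begin{align}
    (A^{-1})^{\star} A^{\star} = I, && A^{\star} (A^{-1})^{\star} = I.
\end{align}
Thus \( (A^{-1})^{\star} \) serves simultaneously as a left and a right inverse of \( A^{\star} \), so \( A^{\star} \) is invertible. By the uniqueness of the two-sided inverse (noted in the common-notation reminder), this forces \( (A^{\star})^{-1} = (A^{-1})^{\star} \), which is exactly the claimed formula.

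For the converse I would avoid repeating the computation and instead apply what has just been proved to the element \( A^{\star} \) in place of \( A \): if \( A^{\star} \) is invertible, then \( (A^{\star})^{\star} \) is invertible. Since the star is an involution, \( (A^{\star})^{\star} = A \) by item 1 of Definition \ref{def:involution}, so \( A \) is invertible, and the equivalence follows.

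I expect no genuine obstacle here. The only point demanding a moment's care is that the involution \emph{reverses} the order of products, so applying the star to \( A A^{-1} = I \) produces \( (A^{-1})^{\star} A^{\star} \) rather than \( A^{\star} (A^{-1})^{\star} \). Because invertibility requires both a left and a right inverse, one must therefore apply the involution to \emph{both} defining equations \( A A^{-1} = I \) and \( A^{-1} A = I \); treating only one of them would merely produce a one-sided inverse and leave a gap in the argument.
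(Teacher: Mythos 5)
Your proof is correct and follows essentially the same route as the paper: apply the involution to both defining equations \( AA^{-1} = I \) and \( A^{-1}A = I \), use anti-multiplicativity and \( I^{\star} = I \), and conclude by uniqueness of the two-sided inverse. Your explicit treatment of the converse via \( (A^{\star})^{\star} = A \) is a small completeness bonus that the paper leaves implicit.
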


\begin{proof}
Calculate
\begin{align}
    (A^{-1})^{\star} A^{\star} = (AA^{-1})^{\star} = \mathds{1},
\end{align}
and
\begin{align}
    A^{\star} (A^{-1})^{\star} = (A^{-1}A)^{\star} = \mathds{1}.
\end{align}
It follows that \( (A^{-1})^{\star} \) is a two-sided inverse of \( A^{\star} \). By uniqueness of the two-sided inverse we get \( (A^{-1})^{\star}  = (A^{\star})^{-1}\).
\end{proof}

Before we look at  some examples of \Cs-algebras, let us quickly refresh some basic definitions in topology.
\begin{info}[Reminder: Hausdorff space.]
    A topological space \( X \) is called Hausdorff, if for every pair of distinct points \( x,y \in X \) there exists two open sets \( U,V \subset X \) with \( U \cap V  = \emptyset \) and \( x \in U \) and \( y \in V \).
\end{info}
\begin{info}[Reminder: Compact set.]
    A subset \( K \) of a topological space \( X \) is compact, if each open cover of \( K \), contains a finite sub-cover of \( K \).
\end{info}

\begin{expl}[\textbf{Commutative \Cs-algebra}]
Let \( X \) be a compact Hausdorff space. Let \( \cb{X} \) denote the space of (bounded) complex valued continuous functions on \( X \). Define a ring structure on \( \cb{X} \) by the pointwise product: for \( f,g \in \cb{X} \) define \(     (fg)(x) \coloneqq f(x) g(x) \), \( (x \in X) \).
Next, topologize \( \cb{X} \) by the supremum norm: for \( f \in \cb{X} \) set     \( \| f \| = \sup_{x \in X} | f(x) | \).
Define an involution on \( \cb{X} \) by complex conjugation: for \( f \in \cb{X} \) define \( f^{\star} = \bar{f} \).
Then \( \cb{X} \) is a commutative unital \Cs-algebra and the unit is a constant function \( I = 1 \).

\begin{proof}
Clearly, \( \cb{X} \) is a vector space. Since a pointwise product of two continuous functions is again a continuous function, the space \( \cb{X} \) is algebraically closed. Since the uniform limit of continuous functions is continuous, the space \( \cb{X} \) is complete. Hence, \( \cb{X} \) with the supremum norm and the pointwise product is a Banach algebra.
It is straight forward to check that complex conjugation defines an involution. In fact, to a great extent complex conjugation inspired our definition of the involution.
Hence, \( \cb{X} \) is a Banach \( \star \)-algebra. (Here we do not assume the involution to be continuous. The continuity will follow, after we prove the \Cs-property.)
For the \Cs-property we observe for \( f \in \cb{X} \): \(    | f(x) \bar{f}(x) | = | f(x) | |\bar{f} (x) | = |f(x)| |f(x)| = |f(x)|^{2}
  \), at any \( x \in X \).
Hence,
\begin{align}
    \| f \bar{f} \| = \sup_{x \in X} |f(x)\bar{f}(x)| = \sup_{x \in X} |f(x)|^{2} = \big(\sup_{x\in X}|f(x)| \big)^{2} = \| f \|^{2}.
\end{align}
This completes the proof.
\end{proof}
\end{expl}

\begin{info}[Comment:]
Indeed, by the Gelfand transform (that we meet in a few lectures) we will see that every unital commutative \Cs-algebra is isometrically isomorphic to a \Cs-algebra of continuous functions on some compact Hausdorff space.
\end{info}

Now we consider a non-commutative example of a \Cs-algebra.
\begin{expl}
Consider our previous example: the algebra \( \mcM (n,\bbC) \) of \( n \times n \) matrices with complex coefficients. Equipped with the operator norm and adjoint as an involution, \( \mcM (n,\bbC) \) is a unital \Cs-algebra.
\end{expl}

\begin{proof}
This is a special case of the following example.
\end{proof}

\begin{expl}
Let \( H \) be a separable Hilbert space. Denote by \( B(H) \) the space of linear bounded operators on \( H \). Then with the operator product, the operator norm, and the adjoint as involution, the space \( B(H) \) is a unital \Cs-algebra.
\end{expl}
\begin{proof}
We only show the \Cs-property as the rest is trivial. Let \( A \) be in \( B(H) \) then
\begin{align}
    \| A \| ^{2} = \sup_{x \in H; \ \| x \| = 1}  \| A x \|
        = \sup_{x \in H; \ \| x \| = 1} \langle Ax, Ax \rangle
        = \sup_{x \in H; \ \| x \| = 1} \langle x , A^{\dagger}A x \rangle
        = \| A^{\dagger} A \|.
\end{align}
The last equality holds because \( A^{\dagger}A \) is a self-adjoint operator.
\end{proof}

Whenever we speak about \Cs-algebras it  is enlightening to consider, in the back of your head, the space of linear bounded operators on a Hilbert space. However, you should keep in mind that an abstract \Cs-algebra is more general than \( B(H) \) for a fixed Hilbert space \( H \); it is more general because a \Cs-algebra is not defined by the action of operators on a Hilbert space. Instead, it is only a collection of elements that satisfy some relations under multiplication.
In particular, let \( A \) be an element of a \( B(H) \). To characterize it we often define its action on \( H \). For example, if \( A \) maps a vector \( v \) in \( H \) to a rotated vector \( \tilde{v} \), we say that \( A \) is a rotation. But how do we characterize elements in an abstract algebra if we don't define their action...
we use their spectrum.

\begin{defi}
If \( A \) is an element of a Banach algebra \( \mfA \) we say that \( \lambda \in \bbC \) is a spectral value for \( A \) (relative to \( \mfA \)) when \( A - \lambda \mathds{1} \) does not have a two-sided inverse in \( \mfA \).
The set of spectral values of \( A \) is called the spectrum of \( A \) and is denoted \( \spectr_{\mfA}(A) \).
\end{defi}

\begin{expl}
Let \( \mfA = \cb{X} \) be the \Cs-algebra of continuous real-valued functions on a compact Hausdorff space \( X \).
For \( f \in \mfA \) what is its spectrum?

For \( \lambda \notin \Img(f) \) (the image of \( f \)) consider \( f - \lambda 1\). The function \( f - \lambda 1\) is continuous and defined on a compact domain. Hence, it attains all its minima. If \( f - \lambda 1 \) never vanishes then \( \inf |f - \lambda 1| > 0 \) and the function \( g = \frac{1}{f - \lambda} \) is  continuous. Clearly, \( g \) is the inverse of \( f - \lambda 1 \). Hence \( \spectr_{\mfA}(f) \subset \Img(f) \) .

Take \( \lambda \in \Img(f) \), so for some \( x  \in X \) we have \( (f - \lambda 1)(x) = 0 \). Since \( f - \lambda 1 \) is continuous there is a sequence \( ( x_{n}) \) in \( X \) that converges to \( x \) and satisfies \( (f - \lambda 1)(x_{n}) < 1/n \). Let \( g \) be the two-sided inverse such that \( (f-\lambda1)g = g(f - \lambda 1) = 1 \). Then
\begin{align}
    1 = | \big((f-\lambda1)g\big)(x_{n}) | = | g(x_{n})| | (f-\lambda1)(x_{n}) | \leq |g(x)| \frac{1}{n},
\end{align}
for every \( x_{n} \).
Thus, \( |g(x_{n})| \geq n \), and \( g \) is not continuous at \( x \). Consequently \( g \notin \cb{X} \) and \( \Img(f) \subset \spectr_{\mfA}(f) \). Together with the opposite inclusion we get \( \spectr_{\mfA}(f) = \Img(f) \).
\end{expl}

\begin{expl}
Let \( \mfM \) be the \Cs-algebra of complex \( n\times n \) matrices. For \( M \in \mfM \) consider \( M - \lambda \mathds{1} \). This element is invertible if and only if \( \det(M - \lambda \mathds{1}) \neq 0 \). Hence, if \( M \) is diagonalizable its spectrum is the set of eigenvalues of \( M \).
\end{expl}

\begin{info}[Warning.]
For a Banach algebra the spectrum of an element does indeed depend on the algebra. Therefore, we use the subscript to specify the ambient algebra. The following example shows that changing the algebra can change the spectrum.
\end{info}

\begin{expl}
Let \( H \) be a separable Hilbert space with the basis \( \{ e_{j} \} \) for \( j \in \bbZ \).
Let \( B(H) \) be the Banach algebra of linear bounded operators on \( H \).
(Above we have shown that \( B(H) \) is a \Cs-algebra, so in particular it is a Banach algebra). Consider the left shift \( L \) in \( B(H) \)  defined on the basis of \( H \) by \(L( e_{j} ) = e_{j+1} \) and extended by linearity and continuity to the whole \( H \).
The adjoint of \( L \) is the operator \( L^{\dagger}(e_{j} )= e_{j-1} \). One can easily verify that \( L L^{\dagger} = L^{\dagger} L = \mathds{1} \). Thus, \( L^{\dagger} \) is the two-sided inverse of \( L \). In other words, \( 0 \notin \spectr_{B(H)} (L) \).

Now consider a Banach algebra \( \mfB \) generated by \( \mathds{1} \) and \( L \). It is a subalgebra of \( B(H) \).
Is it still true that \( 0 \) is not in the spectrum of \( L \) relative to \( \mfB \)? No. To see this consider a closed subspace \( S \) of \( H \) spanned by \( \{ e_{j} \ \colon \ j = 1,2,\dots \} \). Then \( L \) maps this subspace into itself, and thus, each polynomial of \( L \) leaves this subspace invariant. It follows that every element in \( \mfB \) (the norm closure of such polynomials) maps \( S \) into itself.
Assume that \( \tilde{L} \) is a two-sided inverse of \( L \) in \( \mfB \). Then \( \tilde{L} \) is also in \( B(H) \) since \( \mfB \subset B(H) \). By the uniqueness of the two-sided inverse we thus have \( L^{\dagger} = \tilde{L} \).
However, \( L^{\dagger} \) does not preserve \( S \) as it maps the element \( e_{1} \) to \( e_{0}  \notin S\). Thus, \( L^{\dagger} \) is not in the algebra \( \mfB \).
We can thus conclude that \( L \) does not have a two-sided inverse in \( \mfB \) and \( 0 \in \spectr_{\mfB}(L) \).
\end{expl}

The above example shows that the spectrum of an element depends on the ambient Banach algebra. One of the main results that we will obtain in the first part of the lecture will be that for \Cs-algebras this cannot happen.

\begin{lem}\label{lem:spectrum of stared}
For a Banach \( \star \)-algebra \( \mfA \), let \( A \) be in \( \mfA \) and let \( \spectr_{\mfA}(A) \) be its spectrum. Then the spectrum of \( A^{\star} \) is
\begin{align}
    \spectr_{\mfA}(A^{\star}) = \big\{\  \bar{\lambda} \ \colon \ \lambda \in \spectr_{\mfA}(A) \big\} \Big(  = \overline{\spectr_{\mfA}(A)}\Big).
\end{align}
\end{lem}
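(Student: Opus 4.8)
The plan is to reduce the whole statement to two ingredients: the elementary algebraic identity $(A - \bar\mu\mathds{1})^{\star} = A^{\star} - \mu\mathds{1}$, and the fact that the involution preserves invertibility. Since the spectrum is defined purely through the (non-)existence of a two-sided inverse, once I can recognize $A^{\star} - \mu\mathds{1}$ as the adjoint of a suitable shift of $A$, the conjugation of the spectrum will drop out automatically.

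First I would record the algebraic identity. Using that the unit is self-adjoint, so $\mathds{1}^{\star} = \mathds{1}$, together with the involution axioms $(x+y)^{\star} = x^{\star} + y^{\star}$ and $(\lambda x)^{\star} = \bar{\lambda}\,x^{\star}$, I compute for an arbitrary $\mu \in \bbC$
\begin{align}
    (A - \bar\mu\mathds{1})^{\star} = A^{\star} + \overline{(-\bar\mu)}\,\mathds{1}^{\star} = A^{\star} - \mu\mathds{1},
\end{align}
so that $A^{\star} - \mu\mathds{1}$ is exactly the adjoint of the element $A - \bar\mu\mathds{1}$.

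Next I would invoke the invertibility fact. The Proposition above is phrased for \Cs-algebras, but its proof uses only the involution axioms and $\mathds{1}^{\star} = \mathds{1}$, never the \Cs-property; hence it holds verbatim in any unital Banach $\star$-algebra. It asserts that an element is invertible precisely when its adjoint is, so $A^{\star} - \mu\mathds{1} = (A - \bar\mu\mathds{1})^{\star}$ has a two-sided inverse if and only if $A - \bar\mu\mathds{1}$ does. Negating both sides, $\mu \in \spectr_{\mfA}(A^{\star})$ if and only if $\bar\mu \in \spectr_{\mfA}(A)$, i.e. if and only if $\mu \in \overline{\spectr_{\mfA}(A)}$. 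Letting $\mu$ range over $\bbC$ yields the claimed set equality.

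I do not expect a genuine obstacle here; the only point deserving a word of care is the transfer of the earlier Proposition from the \Cs-setting to the Banach $\star$-setting, which I would dispatch by remarking that its proof is insensitive to the \Cs-property. Everything else is the bookkeeping of complex conjugation under the bijection $\mu \mapsto \bar\mu$ of $\bbC$.
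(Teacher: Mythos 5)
Your proof is correct and follows essentially the same route as the paper's: both rest on the identity \( (A - \bar\mu\mathds{1})^{\star} = A^{\star} - \mu\mathds{1} \) together with the fact that the involution preserves invertibility, so that \( \mu \in \spectr_{\mfA}(A^{\star}) \) if and only if \( \bar\mu \in \spectr_{\mfA}(A) \). The only packaging difference is that the paper proves one inclusion and then interchanges \( A \) and \( A^{\star} \), whereas you run the equivalence in both directions at once; your explicit remark that the earlier Proposition's proof never uses the \Cs-property and so transfers to Banach \( \star \)-algebras is a welcome touch of care that the paper leaves implicit.
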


\begin{proof}
If \( \lambda \notin \spectr_{\mfA}(A) \) then \( (A-\lambda I)^{-1} \) exists in \( \mfA \). Hence
\begin{align}
    \big( (A- \lambda I)^{\star} \big)^{-1} = (A^{\star} - \bar{\lambda} I)^{-1} = \big( (A - \lambda I)^{-1} \big)^{\star},
\end{align}
and \( (A- \lambda I)^{\star} \big)^{-1} \) exists. Thus, \( \bar{\lambda} \notin \spectr_{\mfA}(A^{\star}) \). Interchanging \( A \) and \( A^{\star} \) in the above argument yields the desired conclusion.
\end{proof}

\begin{lem}\label{lem:spectrum of invertible}
For a Banach algebra \( \mfA \), let \( A \) be an invertible element in  \( \mfA \). Then the spectrum of \( A^{-1} \) is
\begin{align}
    \spectr_{\mfA}(A^{-1})
    = \{ \lambda^{-1} \colon \lambda \in \spectr_{\mfA}(A) \}
    \Big( = \spectr_{\mfA}(A)^{-1} \Big)
\end{align}
\end{lem}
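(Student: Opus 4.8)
The plan is to reduce the set identity to a single biconditional, using that the reciprocal map is a bijection on the relevant domain. First I observe that since $A$ is invertible we have $0 \notin \spectr_{\mfA}(A)$: indeed $A = A - 0 \cdot I$, so $0 \in \spectr_{\mfA}(A)$ would say precisely that $A$ fails to be invertible. Hence every $\lambda \in \spectr_{\mfA}(A)$ is nonzero and $\lambda^{-1}$ is well defined; likewise $A^{-1}$ is invertible (its inverse being $A$), so $0 \notin \spectr_{\mfA}(A^{-1})$. The map $\lambda \mapsto \lambda^{-1}$ is an involutive bijection of $\bbC \setminus \{0\}$ onto itself, so it suffices to prove, for every nonzero $\lambda \in \bbC$, the equivalence
\begin{align}
    \lambda \in \spectr_{\mfA}(A) \iff \lambda^{-1} \in \spectr_{\mfA}(A^{-1}).
\end{align}
Applying this bijection to both sides then yields the claimed set equality.

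The heart of the argument is a single algebraic factorization. For $\lambda \neq 0$ I would write
\begin{align}
    A^{-1} - \lambda^{-1} I = \lambda^{-1} A^{-1}(\lambda I - A) = -\lambda^{-1} A^{-1}(A - \lambda I),
\end{align}
which one checks by expanding the middle expression: $\lambda^{-1} A^{-1}(\lambda I - A) = A^{-1} - \lambda^{-1} A^{-1} A = A^{-1} - \lambda^{-1} I$. The prefactor $-\lambda^{-1} A^{-1}$ is a nonzero scalar multiple of the invertible element $A^{-1}$, hence is itself invertible.

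From this factorization the biconditional follows in both directions. If $A - \lambda I$ is invertible, then $A^{-1} - \lambda^{-1} I$ is a product of two invertible elements and so is invertible. Conversely, if $A^{-1} - \lambda^{-1} I$ is invertible, then multiplying the identity on the left by the inverse of $-\lambda^{-1} A^{-1}$, namely by $-\lambda A$, exhibits $A - \lambda I = -\lambda A \,(A^{-1} - \lambda^{-1} I)$ as a product of invertibles, hence invertible. Passing to contrapositives gives exactly the equivalence of spectral membership displayed above, and the bijectivity of $\lambda \mapsto \lambda^{-1}$ completes the proof.

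I do not expect a genuine obstacle here; the content is entirely the factorization. The only points requiring a little care are bookkeeping ones: first, confirming $0$ lies in neither spectrum so that all reciprocals are meaningful and no $\lambda = 0$ case must be separately handled; and second, checking invertibility transfers in \emph{both} directions, which is why I record the explicit inverse $-\lambda A$ of the scalar-operator prefactor rather than only observing that a product of invertibles is invertible.
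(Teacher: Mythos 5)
Your proof is correct and rests on exactly the same factorization as the paper's, which writes \( A^{-1} - \lambda^{-1}\mathds{1} = (A\lambda)^{-1}(\lambda - A) \) --- identical to your \( \lambda^{-1}A^{-1}(\lambda I - A) \). Your version is slightly more scrupulous than the paper's (you verify both directions of the invertibility transfer and make the bijection \( \lambda \mapsto \lambda^{-1} \) explicit, steps the paper compresses into a single ``if and only if''), but the approach is the same.
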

\begin{proof}
If \( A \) is invertible then \( 0 \notin \spectr_{\mfA}(A) \). Thus, if \( \lambda \) is in \( \spectr_{\mfA}(A) \) then \( \lambda \neq 0 \) and
\begin{align}
    A^{-1} -  \lambda^{-1} \mathds{1} = (A \lambda)^{-1}(\lambda - A).
\end{align}
Hence, \( A^{-1} - \lambda^{-1} \mathds{1} \) is invertible if and only if \( A - \lambda \mathds{1} \) is invertible, meaning
\begin{align}
    \spectr_{\mfA}(A^{-1})
    =  \{ \lambda^{-1} \colon \ \lambda \in \spectr_{\mfA}(A) \}.
\end{align}
This shows the assertion.
\end{proof}
In order to study the spectrum of an element it is good to have better understanding of invertible elements. The next few lemmas provide us with this understanding.
\begin{lem}[\textbf{Neumann series}]
If \( A \) is an element of a Banach algebra \( \mfA \) and \( \| A \| < 1 \) then
\begin{align}
    \sum_{k=0}^{n} A^{k} \qquad (\text{with } A^{0} = I),
\end{align}
has a limit \( B \in \mfA \) as \( n \) goes to infinity. Moreover, it holds that
\begin{align}
    B(I - A) = (I - A) B = I
 \end{align}
 and \( B \) is the (two-sided) inverse of \( I - A \).
\end{lem}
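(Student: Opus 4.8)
The plan is to first establish convergence of the partial sums and then verify that the limit is the desired two-sided inverse, relying throughout on the submultiplicativity of the norm (the ``continuous norm'' property) and on the completeness of \( \mfA \).

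First I would set \( B_{n} = \sum_{k=0}^{n} A^{k} \) and show that the sequence \( (B_{n}) \) is Cauchy. By induction on the inequality \( \| XY \| \leq \| X \| \| Y \| \) one obtains \( \| A^{k} \| \leq \| A \|^{k} \) for every \( k \). Hence, for \( m > n \),
\begin{align}
    \| B_{m} - B_{n} \| = \Big\| \sum_{k=n+1}^{m} A^{k} \Big\| \leq \sum_{k=n+1}^{m} \| A \|^{k}.
\end{align}
Since \( \| A \| < 1 \), the geometric series \( \sum_{k} \| A \|^{k} \) converges, so the right-hand side tends to \( 0 \) as \( n \to \infty \). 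Therefore \( (B_{n}) \) is Cauchy, and by completeness of the Banach algebra \( \mfA \) it converges to some \( B \in \mfA \).

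Next I would identify \( B \) as the inverse of \( I - A \) through a telescoping computation. Expanding the products gives
\begin{align}
    B_{n}(I - A) = (I - A) B_{n} = \sum_{k=0}^{n} A^{k} - \sum_{k=1}^{n+1} A^{k} = I - A^{n+1}.
\end{align}
Because \( \| A^{n+1} \| \leq \| A \|^{n+1} \to 0 \), we have \( A^{n+1} \to 0 \), so \( B_{n}(I - A) \to I \) and \( (I - A) B_{n} \to I \). To conclude \( B(I - A) = (I - A) B = I \), I would pass to the limit inside the two products.

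The only genuinely delicate point is this last passage to the limit: one must justify that \( B_{n}(I - A) \to B(I - A) \) and \( (I - A) B_{n} \to (I - A) B \), rather than merely manipulating symbols. This is legitimate because submultiplicativity of the norm makes multiplication jointly continuous, via an estimate such as \( \| B_{n}(I - A) - B(I - A) \| \leq \| B_{n} - B \| \, \| I - A \| \). Everything else is the standard geometric-series estimate together with a purely algebraic telescoping identity. Uniqueness of the two-sided inverse, already recorded in the excerpt, then justifies writing \( B = (I - A)^{-1} \).
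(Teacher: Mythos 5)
Your proposal is correct and follows essentially the same route as the paper's proof: the geometric-series estimate plus completeness to get the limit \( B \), the telescoping identity \( B_{n}(I-A) = (I-A)B_{n} = I - A^{n+1} \), and continuity of multiplication to pass to the limit. The only difference is cosmetic --- you make the continuity of multiplication explicit via the estimate \( \| B_{n}(I-A) - B(I-A) \| \leq \| B_{n} - B \| \, \| I-A \| \), which the paper simply invokes by name.
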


\begin{proof}
For \( n < m \) we can estimate
\begin{align}
    \| \sum_{k=0}^{n} A^{k} - \sum_{k=0}^{m} A^{k} \|
    = \| \sum_{k=n+1}^{m} A^{k} \| \leq \sum_{k=n+1}^{m} \| A\|^{k}.
\end{align}
Since \( \| A \| <1 \) the right-hand side of the above estimate goes to zero. Thus, \( \sum_{k=0}^{n} A^{k} \) is a Cauchy sequence. Since \( \mfA \) is complete this sequence converges to an element \( B \) in \( \mfA \) as \( n\) goes to infinity. Then
\begin{align}
    \big( \sum_{k=0}^{n} A^{k} \big) (I -A) = \sum_{k=0}^{n}A^{k} - \sum_{k=0}^{n} A^{k+1}
    =(I-A) \big( \sum_{k=0}^{n} A^{k} \big) = I - A^{n+1}.
    \label{eq:proof neumann series}
\end{align}
By continuity of the product we have \( \| A^{n+1} \| \leq \| A \|^{n+1} \overset{n\to \infty}{\longrightarrow} 0\). Thus \( A^{n+1} \) goes to zero as \( n \) goes to infinity. Thence the right-hand side of \eqref{eq:proof neumann series} converges to \( I \) and we obtain (again by continuity of multiplication)
\begin{align}
    B(I-A) = (I - A) B = I.
\end{align}
This completes the proof.
\end{proof}

\begin{prop} \label{prop: continuity of inverse}
If \( \mfA \) is a Banach algebra, the set \( \pN \subset \mfA\) of invertible elements in \( \mfA \) is an open subset of \( \mfA \) and the operation
\begin{align}
    \inv \colon A \to A^{-1}
\end{align}
on \( \pN \) is continuous.
\end{prop}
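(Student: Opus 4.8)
The plan is to prove the two assertions separately, in both cases reducing the problem to the Neumann series lemma applied to a small perturbation of the identity. The only tool I need is that \( I - C \) is invertible with inverse \( \sum_{k=0}^{\infty} C^{k} \) whenever \( \| C \| < 1 \).

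For openness, I would fix an invertible element \( A \in \pN \) and exhibit an explicit radius \( r > 0 \) so that the open ball of radius \( r \) about \( A \) lies inside \( \pN \). The idea is to factor any nearby element \( B \) through \( A \): writing
\begin{align}
    B = A - (A - B) = A\big(I - A^{-1}(A - B)\big),
\end{align}
I see that \( B \) is a product of the invertible element \( A \) with \( I - C \), where \( C = A^{-1}(A - B) \). Continuity of the norm gives \( \| C \| \leq \| A^{-1} \| \, \| A - B \| \), so choosing \( r = \| A^{-1} \|^{-1} \) makes \( \| A - B \| < r \) force \( \| C \| < 1 \). The Neumann series lemma then renders \( I - C \) invertible, and since a product of invertible elements is invertible (with inverse the reversed product of the inverses), \( B \) is invertible. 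Hence the ball of radius \( \| A^{-1} \|^{-1} \) about \( A \) is contained in \( \pN \), proving that \( \pN \) is open.

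For continuity of \( \inv \) I would again fix \( A \in \pN \) and work in the neighborhood just constructed. The starting point is the algebraic identity
\begin{align}
    A^{-1} - B^{-1} = A^{-1}(B - A)B^{-1},
\end{align}
valid for any two invertible \( A, B \), which yields the estimate \( \| A^{-1} - B^{-1} \| \leq \| A^{-1} \| \, \| B^{-1} \| \, \| A - B \| \). This already tends to zero as \( B \to A \), provided I can keep \( \| B^{-1} \| \) bounded on a neighborhood of \( A \). Securing this uniform bound is the crux of the argument. I would obtain it from the factorization above: shrinking \( r \) so that \( \| C \| \leq \tfrac{1}{2} \), the Neumann series gives \( B^{-1} = (I - C)^{-1} A^{-1} \) with
\begin{align}
    \| (I - C)^{-1} \| \leq \sum_{k=0}^{\infty} \| C \|^{k} = \frac{1}{1 - \| C \|} \leq 2,
\end{align}
so that \( \| B^{-1} \| \leq 2 \| A^{-1} \| \) throughout this smaller ball. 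Substituting back gives \( \| A^{-1} - B^{-1} \| \leq 2 \| A^{-1} \|^{2} \, \| A - B \| \), which vanishes as \( B \to A \), establishing continuity of \( \inv \) at \( A \); since \( A \) was arbitrary, \( \inv \) is continuous on all of \( \pN \).

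The main obstacle is precisely the uniform control of \( \| B^{-1} \| \): the resolvent-type identity is immediate, but it is useless for continuity unless the factor \( \| B^{-1} \| \) is prevented from blowing up as \( B \) varies, and it is the Neumann series bound on \( \| (I - C)^{-1} \| \) that rules this out.
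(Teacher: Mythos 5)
Your proof is correct, and it takes a genuinely different route from the paper's, most visibly in the continuity part. For openness, both arguments ultimately rest on the same factorization \( B = A\big(I - A^{-1}(A-B)\big) \) plus the Neumann series; the paper phrases this structurally (the left-multiplication \( L_{A} \) is a homeomorphism carrying the unit ball \( B_{1}(I) \subset \pN \) onto an open neighborhood of \( A \) inside \( \pN \)), while you make it quantitative with the explicit radius \( \| A^{-1} \|^{-1} \). For continuity, the approaches diverge: the paper proves continuity of \( \inv \) only at the identity and then transports it to an arbitrary \( A \in \pN \) via the algebraic identity \( \inv = R_{A^{-1}} \circ \inv \circ L_{A^{-1}} \), writing \( \inv \) as a composition of continuous maps; you instead work directly at \( A \), using the resolvent-type identity \( A^{-1} - B^{-1} = A^{-1}(B-A)B^{-1} \) together with the Neumann-series bound \( \| B^{-1} \| \leq 2 \| A^{-1} \| \) on a small ball — and you correctly identify this uniform bound as the crux, since without it the identity gives nothing. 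The trade-off: the paper's conjugation trick is conceptually cleaner and dodges all estimates away from \( I \), whereas your direct argument yields strictly more information, namely the local Lipschitz bound \( \| A^{-1} - B^{-1} \| \leq 2 \| A^{-1} \|^{2} \| A - B \| \), rather than bare continuity.
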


The following definition will be convenient for the proof of the proposition.
\begin{defi}
Let \( A \) be an element of a Banach algebra \( \mfA \). We define the \textit{left multiplication by \( A \)} as the mapping \( L_{A} \colon \mfA \to \mfA \) such that \( L_{A}(B) = AB \). In the similar way we define the \textit{right multiplication by \( A \)} as the mapping \( R_{A} \colon \mfA \to \mfA \) such that \( R_{A}(B) = BA \).
\end{defi}

\begin{info}
From the (joint) continuity of the product in a Banach algebra it follows directly that \( L_{A} \) and \( R_{A} \) are continuous for every \( A  \) in \( \mfA \).
\end{info}

\begin{proof}[Proof of proposition \ref{prop: continuity of inverse}]
    Even though the proof is not complicated, it is still helpful to understand the general idea of the argument.
    \begin{info}[Idea:]
    To show that \( \pN \) is an open set we need to show that for every element in \( \pN \) there exists an open set containing this element and that is contained in \( \pN \). We know that an open ball of radius 1 around the identity \( I \) is in \( \pN \). The idea is to shift this ball appropriately to prove the proposition. The multiplication operators that we just defined provide the ``shifting''. We now formalize the idea.
\end{info}

\begin{figure}
    \centering
    \def\svgwidth{16cm}
	\input{./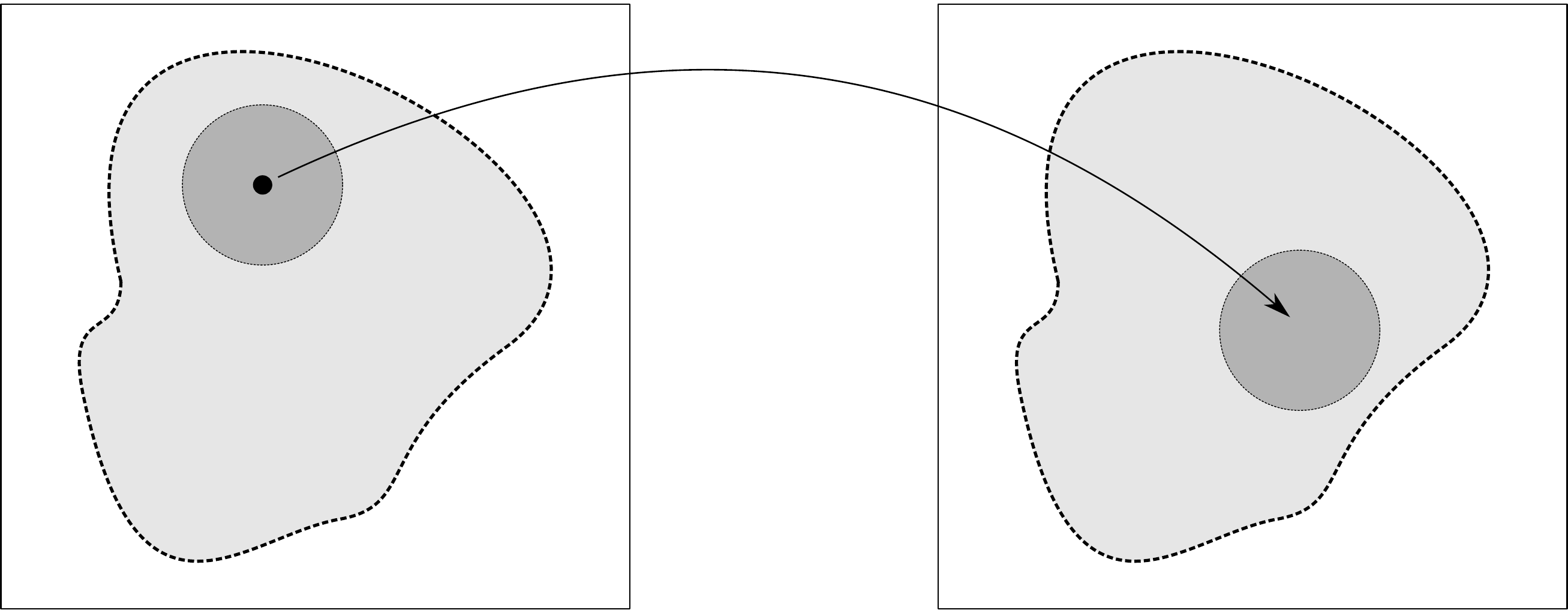_tex}
    \caption{Idea of proof.}
\end{figure}

For \( A \in \pN \) let \( L_{A} \) be the left multiplication by \( A \). Since \( A \) is invertible then so is \( L_{A} \) with the inverse \( L_{A^{-1}} \). As we mentioned above \( L_{A} \) and \( L_{A^{-1}} \) are both continuous (thus \( L_{A} \) is a homeomorphism).

From the Neumann series we know that \( B_{1}(I) \) --- an open ball of radius 1 around \( I \) --- is a subset of \( \pN \). Since \( L_{A} \) is a homeomorphism (in particular open) it maps \( B_{1}(I) \) to an open set \( \pU \) in \( \mfA \). The set \( \pU \) contains \( A \) since \( I \in B_{1}(I)\) and \( L_{A}(I)= A \); also \( \pU \) is contained in \( \pN \) since if \( A \) and \( B \) are invertible then so is \( AB \) (with the inverse \( B^{-1} A^{-1} \)).
Thus, for every \( A \in \pN \) there is an open set \( L_{A}(B_{1}(I)) \subset \pN \) containing \( A \) and \( \pN \) is open.

Now we prove that the mapping \( \inv \) is continuous on \( \pN \).
Assume that \( B \in \mfA \) is such that \( \| I - B \| < 1 \), then by the Neumann series
\begin{align}
    \sum_{k=0}^{\infty} (I - B)^{k} = \big(I - (I-B) \big)^{-1} = B^{-1}.
\end{align}
Choosing \( \| I - B \| < \epsilon < 2^{-1} \) and the fact that \( I \) is self-inverse (i.e.\ \( I I = I \)) we can estimate
\begin{align}
    \| \inv(B) - \inv(I) \|
    &= \| B^{-1} - I \| \leq \sum_{k=1}^{\infty} \| I - B \|^{k} \\
    &= \| I - B \| \Big(\sum_{k=0}^{\infty}  \| I - B \|^{k} \Big)
    = \frac{\| I - B \| }{1 - \| I - B \| }
    \leq 2 \| I - B \|
    \leq 2 \epsilon.
\end{align}
Thus \( \inv \) is continuous at the identity. To show that it is continuous at every element observe that for any \( B \in \pN \) the following identity holds
\begin{align}
    \inv = R_{B^{-1}} \circ \inv \circ L_{B^{-1}}.
\end{align}
Thus for each fixed element \( A \in \pN \) choosing the right-hand side of the above equality to be \( R_{A^{-1}} \circ \inv \circ L_{A^{-1}} \) shows that the map \( \inv \) is a combination of continuous mappings \( A \overset{L_{A^{-1}}}{\to} I \overset{\inv}{\to} I \overset{R_{A^{-1}}}{\to} A^{-1} \). Thus for each \( A \in \pN \) the mapping \( \inv \) is continuous.
\end{proof}

We will now show that every element in a Banach algebra has a spectrum. The proof uses some known, but not entirely trivial statements from the complex and functional analysis. Since, these topics are themselves very extensive,  we only recall the basic needed statements before continuing with the theorem.
\begin{info}[Reminder: Holomorphic function.]
    A complex valued function \( f \) is called holomorphic at \( z_{0} \in \bbC \) if on some neighborhood of \( z_{0} \) the limit
    \begin{align}
        \lim_{z\to z_{0}} \frac{f(z) - f(z_{0})}{z - z_{0}}
    \end{align}
    exists. A function that is holomorphic at every point of some open set \( \pO \subset \bbC \) is said to be holomorphic on \( \pO \). If \( f(z) \) is holomorphic on an entire complex plain we call it \textit{an entire} function. By Liouville's theorem, every bounded entire function must be constant. In particular, if an entire function vanishes at infinite it must be zero.
\end{info}

\begin{info}[Reminder: Hahn-Banach theorems.]
    The name ``Hahn-Banach theorem'' refers to several related results (consult e.g.\ \cite[chap. 1  2]{kadison1997fundamentalsI}). There are two types of such theorems: separation and extension results.
    One version of the Hahn-Banach extension theorem is the following.

    \begin{thm*}
    Let \( X \) be a vector space and \( X_{0} \) be a subspace of \( X \). Let \( \rho_{0} \colon X_{0} \to \bbC \) (or \( \bbR \)) be a linear functional, and \( p \colon X \to \bbR \) be a semi-norm such that
    \begin{align}
        | \rho_{0} (y) | \leq p(y) \qquad (y \in X_{0}).
    \end{align}
    Then there exists a linear functional \( \rho \colon X \to \bbC \) (or \( \bbR \)) such that
    \begin{align}
        | \rho(x) | \leq p(x) \qquad (x \in X), && \rho(y) = \rho_{0}(y) \qquad (y \in X_{0}).
    \end{align}
    \end{thm*}
The corollaries of this very general theorem are often more useful than the theorem itself. For example the next corollary assures that we can always extend a linear functional.
\begin{cor*}
Let \( X_{0} \) be a subspace of a normed space \( X \). Let \( \rho_{0} \) be a bounded linear functional on \( X_{0} \). There exists a bounded linear functional \( \rho \) on the whole \( X \) such that \( \| \rho \|  = \| \rho_{0} \| \) and \( \rho(x) = \rho_{0}(x) \) for \( x \in X_{0} \).
\end{cor*}
Another useful corollary is the one we use in the proof of Theorem \ref{thm:non-empty spectrum}.
    \begin{cor*}
    If \( x \) is a non-zero vector in a normed space \( X \), there is a bounded linear functional \( \rho \) on \( X \) such that \( \| \rho \| = 1 \) and \( \rho(x) = \| x \| \).
    \end{cor*}
\end{info}

Equipped with all that fundamental knowledge. We can now prove that the spectrum of an element in a Banach algebra is never empty.

\begin{thm}\label{thm:non-empty spectrum}
If \( A \) is an element of a Banach algebra \( \mfA \) then \( \spectr_{\mfA}(A) \) is a non-empty closed subset of the closed disc in \( \bbC \) with center 0 and radius \( \| A \|  \).
\end{thm}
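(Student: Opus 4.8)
The plan is to split the statement into three assertions—that $\spectr_{\mfA}(A)$ lies in the disc of radius $\|A\|$, that it is closed, and that it is non-empty—and to treat them in that order. The first two fall out quickly from the machinery already developed, while non-emptiness is the substantial part and will rest on an analyticity-plus-Liouville argument. For the containment, I would take $\lambda \in \bbC$ with $|\lambda| > \|A\|$ and factor $A - \lambda I = -\lambda(I - \lambda^{-1}A)$. Since $\|\lambda^{-1}A\| < 1$, the Neumann series lemma shows that $I - \lambda^{-1}A$ is invertible, and as $\lambda \neq 0$ so is $A - \lambda I$; hence every such $\lambda$ lies in the resolvent set, giving $\spectr_{\mfA}(A) \subseteq \{\lambda \colon |\lambda| \leq \|A\|\}$. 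For closedness, I would consider the affine, hence continuous, map $\phi\colon \bbC \to \mfA$, $\phi(\lambda) = A - \lambda I$. By Proposition \ref{prop: continuity of inverse} the set $\pN$ of invertible elements is open, so its preimage $\phi^{-1}(\pN)$—which is exactly the resolvent set—is open, and therefore $\spectr_{\mfA}(A)$ is closed.

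The heart of the proof is non-emptiness, which I would establish by contradiction: assume $\spectr_{\mfA}(A) = \emptyset$, so that the resolvent $R(\lambda) = (A - \lambda I)^{-1}$ is defined on all of $\bbC$. The key tool is the resolvent identity $R(\lambda) - R(\mu) = (\lambda - \mu)R(\lambda)R(\mu)$, which I would obtain by inserting $(A - \mu I) - (A - \lambda I) = (\lambda - \mu)I$ between the two inverses. Dividing by $\lambda - \mu$ and letting $\lambda \to \mu$, while invoking the continuity of inversion from Proposition \ref{prop: continuity of inverse} together with continuity of multiplication, shows that $R$ is norm-differentiable with derivative $R(\mu)^{2}$. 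Consequently, for any bounded linear functional $\rho \colon \mfA \to \bbC$, the scalar function $f(\lambda) = \rho(R(\lambda))$ is holomorphic at every point, i.e.\ entire.

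Finally, I would control $f$ at infinity. For $|\lambda| > \|A\|$ the Neumann expansion gives $R(\lambda) = -\lambda^{-1}\sum_{k=0}^{\infty} (\lambda^{-1}A)^{k}$, whence $\|R(\lambda)\| \leq (|\lambda| - \|A\|)^{-1} \to 0$ as $|\lambda| \to \infty$, so that $|f(\lambda)| \leq \|\rho\|\,\|R(\lambda)\| \to 0$ as well. Thus $f$ is a bounded entire function that vanishes at infinity, and Liouville's theorem forces $f \equiv 0$. Since $\rho$ was arbitrary, the Hahn-Banach corollary—if $R(\lambda) \neq 0$ there is a functional with $\rho(R(\lambda)) = \|R(\lambda)\|$—forces $R(\lambda) = 0$ for every $\lambda$. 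But $R(\lambda)$ is invertible and so cannot be the zero element, which is the desired contradiction.

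I expect the main obstacle to be assembling the non-emptiness argument cleanly rather than any single hard estimate: one must first upgrade the pointwise inverse $R(\lambda)$ to a genuinely analytic $\mfA$-valued function via the resolvent identity and continuity of inversion, and then pass correctly from this vector-valued object to the scalar functions $\rho(R(\lambda))$ so that Liouville's theorem applies, using Hahn-Banach to recover information about $R(\lambda)$ itself. The containment and closedness steps, by contrast, are essentially immediate from the Neumann series lemma and Proposition \ref{prop: continuity of inverse}.
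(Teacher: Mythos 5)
Your proposal is correct and follows essentially the same route as the paper's proof: Neumann series for the containment in the disc of radius $\| A \|$, openness of the set of invertible elements (Proposition \ref{prop: continuity of inverse}) for closedness, and the resolvent identity combined with holomorphy of $\lambda \mapsto \rho\big((A-\lambda I)^{-1}\big)$, decay at infinity, Liouville's theorem, and the Hahn-Banach corollary for non-emptiness. The only cosmetic difference is that the paper derives its contradiction by evaluating at $\lambda = 0$ to get $A^{-1} = 0$, while you observe that $R(\lambda)$, being invertible, can never vanish; these are the same argument.
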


\begin{proof}
Let \( \lambda \in \bbC \) be such that \( | \lambda | > \| A \|  \) then
\begin{align}
    A - \lambda I = \lambda \Big(\frac{1}{\lambda} A - I \Big).
\end{align}
Since \( \| A\lambda^{-1} \| < 1 \) the element \( \frac{1}{\lambda} A - I \) is invertible (by the Neumann series). Thus, \( A - \lambda I \) has the inverse \( \frac{1}{\lambda} ( \frac{1}{\lambda}A - I)^{-1} \). It follows that \( \spectr_{\mfA}(A) \subset D_{\| A\|}(0) \) (a disc with center 0 and radius \( \| A\| \)).

Now we need to show that \( \spectr_{\mfA}(A) \) is not empty. Let \( \lambda \notin \spectr_{\mfA}(A) \). Then by the proposition \ref{prop: continuity of inverse}, the element \( A - \lambda I \) is invertible in a small neighborhood around \( \lambda \). In particular the set \( \bbC \setminus \spectr_{\mfA}(A) \) is open.

\begin{info}[Reminder: Resolvent equation.]
Recall the first resolvent equation \cite[thm. VI.5]{reed2012methods}
\begin{align}
    (A - \lambda' I)^{-1} - (A - \lambda I )^{-1}
    &=(\lambda' - \lambda)(A-\lambda' I) (A - \lambda I)^{-1}.
\end{align}
\end{info}

Let \( \lambda' \) be close enough to \( \lambda \) so that \( A - \lambda' I \) is invertible.
Let \( \rho \colon \mfA \to \bbC \) be a continuous linear functional on \( \mfA \). Then by the resolvent equation
\begin{align*}
    \frac{\rho \Big( (A - \lambda' I)^{-1} \Big)  - \rho \Big( (A - \lambda I)^{-1}\Big)}{\lambda' - \lambda}
    =\rho \Big( (A - \lambda' I)^{-1} (A - \lambda I)^{-1}\Big)
    \overset{\lambda' \to \lambda}{\to} \rho \Big((A-\lambda I)^{-2} \Big).
\end{align*}
The last limit follows by the continuity of \( \rho \), the product, and the mapping \( \inv \) on \( \pN \).
Thus the function \( \lambda \mapsto \rho(A - \lambda I) \) is holomorphic on \( \bbC \setminus \spectr_{\mfA}(A) \).
Also since \( (\frac{A}{\lambda} - I) \) approaches \( -I \) as \( | \lambda | \) grows, we have
\begin{align}
    \rho \Big( (A - \lambda I)^{-1} \Big) = \frac{1}{\lambda} \rho \Big( (\frac{1}{\lambda}A - I)^{-1} \Big) \to 0, \qquad \text{for }| \lambda | \to \infty.
\end{align}
Now, if \( \spectr_{\mfA}(A) \) were empty, the function \( \lambda \mapsto \rho (A - \lambda I) \) would be an entire function that vanishes at infinity. By the Liouville's theorem we then had \( \rho(A - \lambda I)^{-1} = 0 \)  for all \( \lambda \in \bbC \) and in particular for \( \lambda = 0 \). But then \( \rho(A^{-1}) = 0 \) for every continuous linear functional \( \rho \), and by the Hahn-Banach theorem \( A^{-1} = 0 \). This is a contradiction. Hence \( \spectr_{\mfA}(A) \neq \emptyset\). This completes the proof.
\end{proof}

The above theorem states that the spectrum of an element in the Banach algebra is bounded by its norm. However, the norm is just \textit{an upper bound} not \textit{the smallest} upper bound. To study how different or similar is the norm and the smallest upper bound of the spectrum, we give the latter quantity a proper name first.

\begin{defi}[Spectral radius]
The \textit{spectral radius} \( r_{\mfA}(A) \) of an element \( A \) in a Banach algebra \( \mfA \) is
\begin{align}
    r_{\mfA}(A) = \sup \{|\lambda| \colon \lambda \in \spectr_{\mfA}(A) \}
\end{align}
\end{defi}

\begin{info}
The spectral radius is the radius of the smallest disc in \( \bbC \) with center 0 containing \( \spectr_{\mfA}(A) \). Also from the preceding theorem, \( \| A \|  \) is the radius of the largest disc in \( \bbC \) with center 0 that contains \( \spectr_{\mfA}(A) \). Hence we always have \( r_{\mfA}(A) \leq \| A \|  \).
\end{info}
\begin{figure}
    \centering
    \def\svgwidth{7cm}
	%% Creator: Inkscape 1.1 (c4e8f9e, 2021-05-24), www.inkscape.org
%% PDF/EPS/PS + LaTeX output extension by Johan Engelen, 2010
%% Accompanies image file './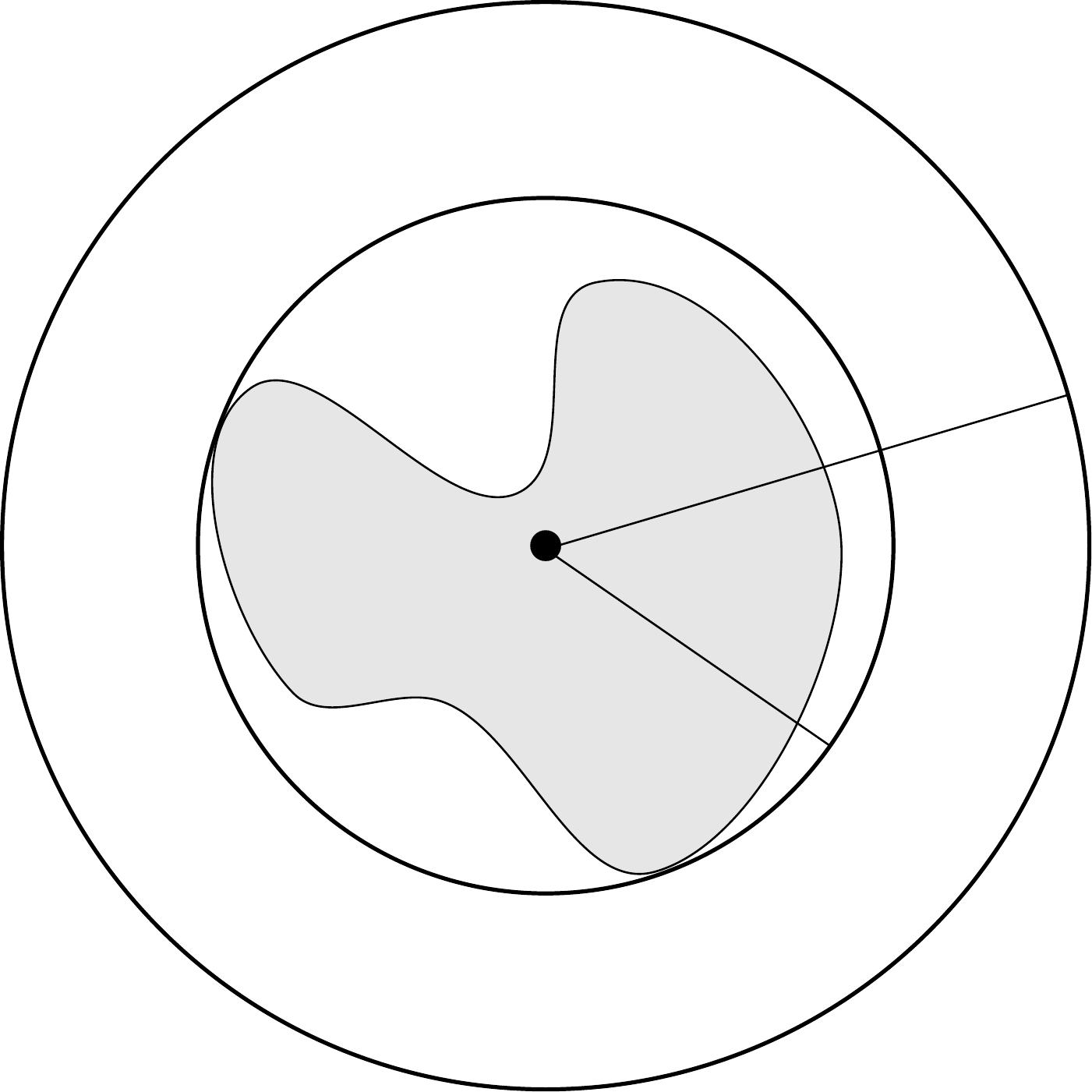' (pdf, eps, ps)
%%
%% To include the image in your LaTeX document, write
%%   \input{<filename>.pdf_tex}
%%  instead of
%%   \includegraphics{<filename>.pdf}
%% To scale the image, write
%%   \def\svgwidth{<desired width>}
%%   \input{<filename>.pdf_tex}
%%  instead of
%%   \includegraphics[width=<desired width>]{<filename>.pdf}
%%
%% Images with a different path to the parent latex file can
%% be accessed with the `import' package (which may need to be
%% installed) using
%%   \usepackage{import}
%% in the preamble, and then including the image with
%%   \import{<path to file>}{<filename>.pdf_tex}
%% Alternatively, one can specify
%%   \graphicspath{{<path to file>/}}
%%
%% For more information, please see info/svg-inkscape on CTAN:
%%   http://tug.ctan.org/tex-archive/info/svg-inkscape
%%
\begingroup%
  \makeatletter%
  \providecommand\color[2][]{%
    \errmessage{(Inkscape) Color is used for the text in Inkscape, but the package 'color.sty' is not loaded}%
    \renewcommand\color[2][]{}%
  }%
  \providecommand\transparent[1]{%
    \errmessage{(Inkscape) Transparency is used (non-zero) for the text in Inkscape, but the package 'transparent.sty' is not loaded}%
    \renewcommand\transparent[1]{}%
  }%
  \providecommand\rotatebox[2]{#2}%
  \newcommand*\fsize{\dimexpr\f@size pt\relax}%
  \newcommand*\lineheight[1]{\fontsize{\fsize}{#1\fsize}\selectfont}%
  \ifx\svgwidth\undefined%
    \setlength{\unitlength}{403.25998486bp}%
    \ifx\svgscale\undefined%
      \relax%
    \else%
      \setlength{\unitlength}{\unitlength * \real{\svgscale}}%
    \fi%
  \else%
    \setlength{\unitlength}{\svgwidth}%
  \fi%
  \global\let\svgwidth\undefined%
  \global\let\svgscale\undefined%
  \makeatother%
  \begin{picture}(1,1)%
    \lineheight{1}%
    \setlength\tabcolsep{0pt}%
    \put(0,0){\includegraphics[width=\unitlength,page=1]{./Pics/spectral_radius.pdf}}%
    \put(0.03078416,0.93065022){\color[rgb]{0,0,0}\makebox(0,0)[lt]{\lineheight{1.25}\smash{\begin{tabular}[t]{l}$\bbC$\end{tabular}}}}%
    \put(0.99084058,0.65104895){\color[rgb]{0,0,0}\makebox(0,0)[lt]{\lineheight{1.25}\smash{\begin{tabular}[t]{l}$\| A \|$\end{tabular}}}}%
    \put(0.78364538,0.30395774){\color[rgb]{0,0,0}\makebox(0,0)[lt]{\lineheight{1.25}\smash{\begin{tabular}[t]{l}$r_{\mfA}(A)$\end{tabular}}}}%
    \put(0.26779479,0.41965485){\color[rgb]{0,0,0}\makebox(0,0)[lt]{\lineheight{1.25}\smash{\begin{tabular}[t]{l}$\spectr_{\mfA}(A)$\end{tabular}}}}%
  \end{picture}%
\endgroup%

    \caption{Spectrum of \( A \) relative to \( \mfA \) bounded by the spectral radius and by \(\| A  || \).}
\end{figure}

Surprisingly, the spectral radius --- rather an algebraic property --- can be stated in analytical terms, and thus it relates to topology. For the proof consult e.g.\ \cite[thm. 3.3.3]{kadison1997fundamentalsI}

\begin{thm}[\textbf{Spectral radius formula}]
Let \( A \) be an element of a Banach algebra \( \mfA \). The sequence \( ( \| A^{n} \|^{1/n} ) \) converges to \( r_{\mfA}(A) \) such that
\begin{align}
    r_{\mfA}(A) = \lim_{n \to \infty} \| A^{n} \|^{1/n}.
\end{align}
\end{thm}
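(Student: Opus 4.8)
The plan is to prove the two inequalities
\begin{align}
    r_{\mfA}(A) \leq \liminf_{n\to\infty} \| A^{n} \|^{1/n}
    \quad\text{and}\quad
    \limsup_{n\to\infty}\| A^{n}\|^{1/n} \leq r_{\mfA}(A),
\end{align}
which together force the sequence \( \big(\|A^{n}\|^{1/n}\big) \) to converge to \( r_{\mfA}(A) \). The first inequality is elementary and rests on a spectral mapping property for powers; the second is the substantial one and reuses the weak-holomorphy machinery from the proof of Theorem \ref{thm:non-empty spectrum}.

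For the lower bound I would first show that \( \lambda \in \spectr_{\mfA}(A) \) implies \( \lambda^{n} \in \spectr_{\mfA}(A^{n}) \). This follows from the factorization
\begin{align}
    A^{n} - \lambda^{n} I = (A - \lambda I)\sum_{k=0}^{n-1}\lambda^{\,n-1-k}A^{k},
\end{align}
in which both factors are polynomials in \( A \) and hence commute: an inverse of the left-hand side would, after multiplying by the (commuting) second factor, produce a two-sided inverse of \( A - \lambda I \), contradicting \( \lambda \in \spectr_{\mfA}(A) \). Applying Theorem \ref{thm:non-empty spectrum} to \( A^{n} \) then gives \( |\lambda|^{n} = |\lambda^{n}| \leq \| A^{n}\| \), so \( |\lambda| \leq \|A^{n}\|^{1/n} \) for every \( \lambda\in\spectr_{\mfA}(A) \) and every \( n \). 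Taking the supremum over the spectrum yields \( r_{\mfA}(A) \leq \inf_n \|A^n\|^{1/n} \leq \liminf_{n} \|A^{n}\|^{1/n} \).

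For the upper bound I would work with the resolvent. Fix a continuous linear functional \( \rho \) on \( \mfA \); as established inside the proof of Theorem \ref{thm:non-empty spectrum}, the scalar function \( \lambda \mapsto \rho\big((A-\lambda I)^{-1}\big) \) is holomorphic on \( \bbC\setminus\spectr_{\mfA}(A) \), hence in particular on the exterior region \( \{|\lambda| > r_{\mfA}(A)\} \). For \( |\lambda| > \|A\| \) the Neumann series gives the explicit expansion
\begin{align}
    \rho\big((A - \lambda I)^{-1}\big) = -\sum_{n=0}^{\infty} \frac{\rho(A^{n})}{\lambda^{\,n+1}},
\end{align}
a Laurent series in \( 1/\lambda \). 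By uniqueness of Laurent coefficients, this must be the Laurent expansion of a function holomorphic on the \emph{whole} annulus \( \{|\lambda| > r_{\mfA}(A)\} \), so the series already converges on that larger region. Consequently, for each fixed \( |\lambda| > r_{\mfA}(A) \) the terms tend to zero and hence \( \sup_n |\rho(A^{n}/\lambda^{n})| < \infty \) for every \( \rho \).

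Finally, viewing each \( A^{n}/\lambda^{n} \) as a bounded linear functional on the dual space \( \mfA' \) through the canonical pairing \( \rho \mapsto \rho(A^{n}/\lambda^{n}) \) — an embedding that is isometric by the Hahn–Banach corollary quoted above — the previous bound says this family is pointwise bounded on the Banach space \( \mfA' \). The uniform boundedness principle then upgrades this to a uniform bound \( \sup_n \|A^{n}\|/|\lambda|^{n} = M_{\lambda} < \infty \), whence \( \|A^{n}\|^{1/n} \leq M_{\lambda}^{1/n}|\lambda| \to |\lambda| \) and thus \( \limsup_n \|A^{n}\|^{1/n} \leq |\lambda| \) for every \( |\lambda| > r_{\mfA}(A) \); letting \( |\lambda| \downarrow r_{\mfA}(A) \) completes the argument. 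The main obstacle is exactly this passage from the norm to the spectral radius: one must argue that the Neumann series, a priori valid only on \( \{|\lambda|>\|A\|\} \), is genuinely the Laurent expansion on the larger domain \( \{|\lambda| > r_{\mfA}(A)\} \) (so that its coefficients, and hence \( \|A^{n}\|/|\lambda|^{n} \), remain controlled there), and then to convert the scalar boundedness obtained separately for each \( \rho \) into a single norm bound via Banach–Steinhaus.
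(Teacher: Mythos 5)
Your proposal is correct: the lower bound via \( \lambda \in \spectr_{\mfA}(A) \Rightarrow \lambda^{n} \in \spectr_{\mfA}(A^{n}) \), and the upper bound via weak holomorphy of the resolvent, uniqueness of the Laurent expansion on \( \{ |\lambda| > r_{\mfA}(A) \} \), and the uniform boundedness principle applied through the canonical isometric embedding into \( \mfA'' \), all go through without gaps. Note that the paper itself gives no proof of this theorem but defers to \cite[thm. 3.3.3]{kadison1997fundamentalsI}, and your argument is essentially the standard one found there, so the two approaches coincide.
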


This relation between the spectrum and topology is as surprising as it is powerful. We have already seen that the spectrum of an element depends on the ambient Banach algebra. Nevertheless, from the spectral radius formula if follows that if \( \mfB \) is a Banach subalgebra of \( \mfA \) then the spectral radii of \( A \in \mfB \) and \( A \in \mfA \) coincide, i.e. \( r_{\mfA}(A) = r_{\mfB}(A) \). In a \Cs-algebra the spectral radius formula provides even stronger results.

\begin{prop}\label{prop:spectral radius and norm in cstar}
Suppose that \( A \) is an element of a \Cs-algebra \( \mfA \). Then
\begin{enumerate}
    \item if \( A \) is normal then \( r_{\mfA(A)} = \| A \| \); \label{it:a}
    \item if \( A \) is self-adjoint then \( \spectr_{\mfA}(A) \) is a compact subset of the real line and it contains at least one of the two real numbers \( \pm \| A \| \);\label{it:b}
    \item if \( A \) is unitary then \( \| A \| = 1 \) and \( \spectr_{\mfA}(A) \) is a compact subset of the unite circle \( \{  a \in \bbC \colon |a | =1 \} \). \label{it:c}
\end{enumerate}
\end{prop}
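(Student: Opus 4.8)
The three items are best attacked in an order reflecting their logical dependence: item \ref{it:a} for \emph{self-adjoint} elements is the foundation, from which the general normal case and then items \ref{it:b} and \ref{it:c} follow. The only nontrivial analytic inputs are the spectral radius formula and the \Cs-property; the lemmas on \( \spectr_{\mfA}(A^{\star}) \) and \( \spectr_{\mfA}(A^{-1}) \) supply the rest.

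First I would prove \( r_{\mfA}(B) = \| B \| \) for self-adjoint \( B \). Here \( B^{\star} = B \), so the \Cs-property gives \( \| B^{2} \| = \| B^{\star} B \| = \| B \|^{2} \), and induction yields \( \| B^{2^{k}} \| = \| B \|^{2^{k}} \). Evaluating the spectral radius formula along the subsequence \( n = 2^{k} \) gives \( r_{\mfA}(B) = \lim_{k} \| B^{2^{k}} \|^{1/2^{k}} = \| B \| \). To upgrade to an arbitrary \emph{normal} \( A \), I would use that \( A^{\star} A \) is self-adjoint and that normality forces \( (A^{n})^{\star} A^{n} = (A^{\star} A)^{n} \); hence \( \| A^{n} \|^{2} = \| (A^{\star} A)^{n} \| \), and taking \( n \)-th roots in the limit gives \( r_{\mfA}(A)^{2} = r_{\mfA}(A^{\star} A) = \| A^{\star} A \| = \| A \|^{2} \), which proves item \ref{it:a}.

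For item \ref{it:b}, compactness is free: Theorem \ref{thm:non-empty spectrum} already gives that \( \spectr_{\mfA}(A) \) is closed and contained in the disc of radius \( \| A \| \). Once reality is established, the claim that \( \pm \| A \| \) meets the spectrum follows because a self-adjoint element is normal, so \( r_{\mfA}(A) = \| A \| \) by the above, and the supremum defining \( r_{\mfA}(A) \) is attained on the compact real spectrum. The main obstacle is reality, since Lemma \ref{lem:spectrum of stared} only shows \( \spectr_{\mfA}(A) \) is symmetric about \( \bbR \), not contained in it. I would argue by contradiction. If \( \lambda = \alpha + \imath \beta \in \spectr_{\mfA}(A) \), then for every real \( t \) the element \( A + \imath t I - (\lambda + \imath t) I = A - \lambda I \) fails to be invertible, so \( \lambda + \imath t \in \spectr_{\mfA}(A + \imath t I) \). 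The spectral radius bound and the \Cs-property then yield
\begin{align*}
    | \lambda + \imath t |^{2} \leq \| A + \imath t I \|^{2} = \| (A - \imath t I)(A + \imath t I) \| = \| A^{2} + t^{2} I \| \leq \| A \|^{2} + t^{2}.
\end{align*}
Expanding the left side as \( \alpha^{2} + \beta^{2} + 2 \beta t + t^{2} \) and cancelling \( t^{2} \) leaves \( \alpha^{2} + \beta^{2} + 2 \beta t \leq \| A \|^{2} \) for all real \( t \); letting \( t \) tend to infinity with the sign of \( \beta \) forces \( \beta = 0 \). Hence \( \spectr_{\mfA}(A) \subset \bbR \).

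Finally, for item \ref{it:c} let \( A \) be unitary. Then \( \| A \|^{2} = \| A^{\star} A \| = \| I \| = 1 \) (and \( \| I \| = 1 \) follows from \( \| I \| = \| I^{\star} I \| = \| I \|^{2} \) in a nonzero algebra), so \( \| A \| = 1 \) and \( r_{\mfA}(A) \leq 1 \), placing \( \spectr_{\mfA}(A) \) in the closed unit disc. Since \( A^{-1} = A^{\star} \) is again unitary, the same bound applies to \( \spectr_{\mfA}(A^{-1}) \). By Lemma \ref{lem:spectrum of invertible}, \( \spectr_{\mfA}(A^{-1}) = \spectr_{\mfA}(A)^{-1} \), so each \( \lambda \in \spectr_{\mfA}(A) \) satisfies \( | \lambda^{-1} | \leq 1 \), i.e.\ \( | \lambda | \geq 1 \). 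Combined with \( | \lambda | \leq 1 \) this gives \( | \lambda | = 1 \), and compactness is once more immediate from Theorem \ref{thm:non-empty spectrum}. I expect the reality argument in item \ref{it:b} to be the only genuinely delicate step; everything else is bookkeeping with the \Cs-property and the two spectrum lemmas.
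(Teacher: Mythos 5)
Your proposal is correct and follows essentially the same route as the paper's proof: the self-adjoint case of \ref{it:a} via \( \| H^{2^{k}} \| = \| H \|^{2^{k}} \) and the spectral radius formula, reality of the spectrum in \ref{it:b} by perturbing \( A \) with imaginary multiples of \( I \) and applying the \Cs-property, and \ref{it:c} via \( A^{-1} = A^{\star} \) together with Lemma \ref{lem:spectrum of invertible}. The only cosmetic differences are that for the normal case of \ref{it:a} you use the equality \( \| A^{n} \|^{2} = \| (A^{\star}A)^{n} \| \) and pass directly to the limit, whereas the paper squeezes \( \| A \|^{2} = r_{\mfA}(A^{\star}A) \leq r_{\mfA}(A^{\star}) r_{\mfA}(A) \leq \| A \|^{2} \) (which additionally needs Lemma \ref{lem:spectrum of stared}), and that in \ref{it:b} you use a continuous real parameter \( t \) where the paper takes the discrete choice \( t = nb \).
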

\begin{proof}
\ref{it:a} Let \( H \) be a self-adjoint element in \( \mfA \) and \( n \in \bbN \). Then by the \Cs-property we have
\begin{align}
    \| H^{2n} \| = \| (H^{n})^{\star}H^{n} \| = \| H^{n} \|^{2}.
\end{align}
Hence, if \( q = 2^{m} \) for some \( m=1,2,\dots \) then by induction we get
\begin{align}
    \| H^{q} \| = \| H \|^{q}.
\end{align}
Using the spectral radius formula we get
\begin{align}
    r_{\mfA}(H) = \lim_{q \to \infty} \| H^{q} \|^{1/q} = \| H \|.
\end{align}
So the assertion holds for self-adjoint elements. Now, let \( A \) be normal and \( H = A^{\star}A \) then
\begin{align}
    \| A \|^{2} \overset{(1)}{=} \| A^{\star}A\| \overset{(2)}{=} r_{\mfA}(A^{\star}A) \overset{(3)}{\leq} r_{\mfA}(A^{\star}) r_{\mfA}(A) \overset{(4)}{=} r_{\mfA}(A)^{2} \overset{(5)}{\leq} \| A \|^{2}.
\end{align}
With the following justifications for the estimates:
(1) is the \Cs-property; (2) follows from what we have just shown above; (3) follows from the fact that for normal elements \( \| (A^{\star} A)^{n}\| \leq \|(A^{\star})^{n} \| \|A^{n} \| \) and thus \( \lim \| (A^{\star} A)^{n} \|^{1/n} \leq \lim \| (A^{\star})^{n} \|^{1/n} \| A^{n} \|^{1/n} \); (4) follows from the fact that \( \spectr_{\mfA}(A^{\star}) = \overline{\spectr_{\mfA}(A)}\); (5) follows since \( \| A \| \) defines the largest radius of a disc that contains the spectrum.

Since the left and the right hand side of the above estimate are equal all the "less or equal" signs in the estimate have to be actual equalities. Hence \( r_{\mfA}(A) = \| A \| \).

\vspace{0.3cm}
\ref{it:b} By Theorem \ref{thm:non-empty spectrum} the spectrum of \( A \in \mfA \) is a compact set of \( \bbC \). Thus the spectral radius attains its maximum and there is a \( \lambda \in \spectr_{\mfA}(A) \) such that \( |\lambda | = r_{\mfA}(A) \). If \( A \) is self-adjoint, then from the first part of the proposition we also know that \( r_{\mfA}(A) = \| A \| \) and thus there is an element \( \lambda \in \spectr_{\mfA}(A) \) with \( |\lambda | = \| A \| \). To prove the assertion we only need to show that the spectrum of a self-adjoint element is a subset of a real line.

Assume that for some \( a,b \in \bbR \) the element \( a + \imath b \) is in \( \spectr_{\mfA}(A) \).
For \( n \in \bbN \) consider the element \( B_{n} = A + \imath n b I \). Then, since \( a + \imath (n+1)b \in \spectr_{\mfA}(B_{n}) \), we have
\begin{align}\label{eq:square of spectrum2}
    | a + \imath (n+1) b |^{2}
    \leq \| B_{n} \|^{2}
    \leq \| B_{n}^{\star} B_{n} \| %
    = \| A^{2} + n^2 b^2 \|
    \leq \| A \|^{2} + n^2 b^2.
\end{align}
On the other hand,
\begin{align}\label{eq:square of spectrum1}
    | a + \imath (n+1) b |^{2} = (a+\imath (n+1) b)(a - \imath (n+1)b) = a^2 + n^2 b^2 + 2nb^2 + b^2.
\end{align}
So combining \eqref{eq:square of spectrum2} and \eqref{eq:square of spectrum1},
we get \( a^{2} + (2n+1)b^2  \leq \| A \|^{2} \) for every \( n \in \{ 0,1,2,\dots \} \). Since \( A \) is, however, bounded \( b \) must be zero.

\vspace{0.3cm}
\ref{it:c}
If \( A \) is unitary in \( \mfA \), then \( \| A \|^2 =  \| A^{\star} A \| = 1 \). Thus \( \| A \| = 1 \). Therefore, if \( \lambda \) is in \( \spectr_{\mfA}(A) \) then \( | \lambda | \leq \| A \| \leq 1 \). On the other hand, since the inverse of \( A \) is its adjoint \( A^{\star} \)  we have from Lemma \ref{lem:spectrum of invertible} that \( \lambda^{-1} \in \spectr_{\mfA}(A^{-1}) = \spectr_{\mfA}(A^{\star}) \) and \( | \lambda^{-1} | \leq \| A^{\star} \| = \| A \| \leq 1 \).  Thus, \( | \lambda | \leq 1 \) and \( | \lambda^{-1} | \leq 1 \) meaning that \( | \lambda | = 1 \).
\end{proof}

\begin{info}[Algebra vs. bounded operators on a Hilbert space.]
For bounded operators on a Hilbert space we can use the inner product to prove that a self-adjoint operator on a Hilbert space has a real spectrum. This makes the proof simpler (but also less general).
    Consider a self-adjoint bounded linear operator \( A \) on a Hilbert space \( H \). Let \( \lambda \in \bbC \) be such that \( | \Im(\lambda) | > 0 \). Then for every \( x \in H \) we have by the Cauchy-Schwarz inequality,
    \begin{align}
        \| x \| \| (A-\lambda \mathds{1}) x \|
        \geq   | \langle x, (A-\lambda \mathds{1}) x \rangle | \geq |\Im(\lambda)| \| x \|^{2} >0.
    \end{align}
In particular, \( \|(A- \lambda \mathds{1}) x \| \geq |\Im(\lambda)| \| x \| \)  for every \( x \in H \). Hence \( A - \lambda \mathds{1} \) is injective and has closed range. Also, since \( | \Im(\bar{\lambda}) | = | \Im(\lambda) | \), with the same argument we get that \( (A - \lambda \mathds{1})^{\dagger} \) is also injective, and \( A - \lambda \mathds{1} \) has dense range in \( H \). It follows that \( A - \lambda \mathds{1} \) is invertible and \( \lambda \notin \spectr_{B(H)}(A) \).
\end{info}

\subsection{Function calculus for \Cs-algebras}
Function calculus on \Cs-algebras generalizes the idea of diagonalizing matrices. Diagonalizing a matrix allows us to define powers, square roots, exponential, logarithms of matrices, and what not...
For bounded operators on a Hilbert space the generalization leads to  the spectral calculus.
The functional calculus in \Cs-algebras essentially bears the same meaning.
We can even consider it a generalization of the spectral calculus, as it is defined for abstract algebra elements.
Generally speaking the function calculus allows us to associate algebra elements to continuous functions, calculate with those functions, and then associate the result back to algebra elements. This process, is very powerful. It is hard to overestimate the use of the function calculus in the abstract theory and applications...

In the following we will use the following notation. If \( p \) is a polynomial, say \( p(t) = \sum_{k=0}^{n} a_{k} t^{k} \) with \( a_k \in \bbC \), then we will write \( p(A) \) to denote the algebra element \( \sum_{k=0}^{n} a_k A^{k}\).

\begin{prop}[\textbf{Spectral calculus for polynomials}]\label{prop:spectral for polys}
If \( A \) is an element in a Banach algebra \( \mfA \) and \( p \) is a polynomial in one variable, then
\begin{align}
    \spectr_{\mfA}(p(A)) = \{ p(\lambda) \colon \lambda \in \spectr_{\mfA}(A) \} \Big(= p(\spectr_{\mfA}(A))\Big).
\end{align}
\end{prop}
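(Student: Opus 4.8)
The plan is to reduce the statement, one point $\mu \in \bbC$ at a time, to the question of when $p(A) - \mu\mathds{1}$ is invertible, and to answer that question by factoring the polynomial $p(t) - \mu$ over $\bbC$. First I would dispose of the trivial case in which $p$ is constant, say $p \equiv a_{0}$: then $p(A) = a_{0}\mathds{1}$, whose spectrum is $\{a_{0}\}$, while $p(\spectr_{\mfA}(A)) = \{a_{0}\}$ as well, since $\spectr_{\mfA}(A)$ is non-empty by Theorem \ref{thm:non-empty spectrum}. So from now on I assume $p$ has degree $n \geq 1$ with leading coefficient $c \neq 0$.

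Fix $\mu \in \bbC$. By the fundamental theorem of algebra the polynomial $p(t) - \mu$ factors as
\begin{align}
    p(t) - \mu = c \prod_{j=1}^{n}(t - \lambda_{j}),
\end{align}
where $\lambda_{1}, \dots, \lambda_{n}$ are exactly the complex roots, i.e.\ the solutions of $p(\lambda) = \mu$. Substituting the algebra element $A$ for $t$ gives
\begin{align}
    p(A) - \mu\mathds{1} = c \prod_{j=1}^{n}(A - \lambda_{j}\mathds{1}),
\end{align}
and the crucial observation is that all the factors $A - \lambda_{j}\mathds{1}$ are polynomials in $A$ and hence commute with one another.

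The heart of the argument is then the following purely algebraic lemma: if $B_{1}, \dots, B_{n}$ are pairwise commuting elements of $\mfA$, then $B_{1}\cdots B_{n}$ is invertible if and only if every $B_{j}$ is invertible. The easy direction is that a product of invertibles is invertible, with inverse the product of the inverses in reverse order. The converse is where I expect the main obstacle, since it genuinely uses commutativity: supposing $C$ is a two-sided inverse of $B_{1}\cdots B_{n}$, one first notes that $C$ commutes with each $B_{j}$ (because the full product does, and the inverse of something commuting with $B_{j}$ again commutes with $B_{j}$), and then sets $D_{j} = C\prod_{i \neq j}B_{i}$; reordering the commuting factors shows $B_{j}D_{j} = D_{j}B_{j} = \mathds{1}$, so each $B_{j}$ is invertible. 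The nonzero scalar $c$ plays no role in invertibility.

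Applying the lemma, $p(A) - \mu\mathds{1}$ fails to be invertible precisely when at least one factor $A - \lambda_{j}\mathds{1}$ fails to be invertible, i.e.\ precisely when some root $\lambda_{j}$ lies in $\spectr_{\mfA}(A)$. Since the $\lambda_{j}$ are exactly the $\lambda$ with $p(\lambda) = \mu$, this says $\mu \in \spectr_{\mfA}(p(A))$ if and only if $\mu = p(\lambda)$ for some $\lambda \in \spectr_{\mfA}(A)$. As $\mu$ was arbitrary, this yields the claimed equality $\spectr_{\mfA}(p(A)) = p(\spectr_{\mfA}(A))$.
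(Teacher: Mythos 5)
Your proof is correct, and while it rests on the same two pillars as the paper's proof (factor the polynomial into linear factors; relate invertibility of a commuting product to invertibility of its factors), it organizes them genuinely differently. The paper proves the two inclusions separately with two different factorizations: for \( p(\spectr_{\mfA}(A)) \subseteq \spectr_{\mfA}(p(A)) \) it telescopes \( A^{k} - \lambda^{k}I = (A-\lambda I)(A^{k-1} + \lambda A^{k-2} + \cdots + \lambda^{k-1}I) \) to pull \( A - \lambda I \) out of \( p(A) - p(\lambda)I \), and only for the reverse inclusion does it factor \( p(A) - \gamma I \) over the roots of \( p - \gamma \). You run both directions at once through the single root factorization \( p(A) - \mu\mathds{1} = c\prod_{j}(A - \lambda_{j}\mathds{1}) \), at the price of needing the full strength of your commuting-invertibility lemma; the paper needs only the easy half of that lemma, plus the parenthetical claim that invertibility of \( (A-\lambda I)D = D(A-\lambda I) \) forces invertibility of \( A - \lambda I \), which it asserts without proof --- and which is exactly the nontrivial half that you do prove (your observation that the inverse of the product automatically commutes with each factor is the key step, and it is right). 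Your treatment also buys two small things the paper glosses over: the constant-polynomial case, where the root factorization degenerates and the equality \( \spectr_{\mfA}(a_{0}\mathds{1}) = p(\spectr_{\mfA}(A)) \) genuinely requires the non-emptiness of the spectrum from Theorem \ref{thm:non-empty spectrum}; and a symmetric treatment of the two inclusions that avoids repeating the factor-out argument in two different guises.
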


\begin{proof}
If \( \lambda \in \spectr_{\mfA}(A) \), then \( A - \lambda I \) has no two-sided inverse in \( \mfA \).
If \( p(x) = a_{n}x^{n} + \cdots + a_{0} \) then
\begin{align}
    p(A) - p(\lambda) I = a_{n} (A^{n} - \lambda^{n}I) + \cdots + a_{1}(A - \lambda I).
\end{align}
For \( k\geq 2 \) we can rewrite \( A^{k} - \lambda^{k}I \) as
\begin{align*}
    A^{k} - \lambda^{k}I
    &= (A - \lambda I)(A^{k-1} + \lambda A^{k-2} + \cdots + \lambda^{k-1}I)\\
    &= (A^{k-1} + \lambda A^{k-2} + \cdots + \lambda^{k-1}I)(A - \lambda I).
\end{align*}
Hence, we get
\begin{align*}
    p(A) - p(\lambda) I = (A - \lambda I)\big(a_{n}(A^{n-1} + \lambda A^{n-1} + \cdots + \lambda^{n-1}I) + \cdots + a_{1} \big)
    = \big( \cdots \big)(A - \lambda I),
\end{align*}
and \( p(A) - p(\lambda) I \) does not have a two-sided inverse (otherwise \( A - \lambda I \) would have a two-sided inverse). It follows that \( p(\lambda) \in \spectr_{\mfA}(p(A)) \), and thus \( p(\spectr_{\mfA}(A)) \subset \spectr_{\mfA}(p(A)) \).

For the opposite inclusion consider that if \( \gamma \) is in \( \spectr_{\mfA}(p(A)) \) and \( \lambda_{1},\dots , \lambda_{n} \) are the \( n \) roots of \( p(x) - \gamma \) then
\begin{align}
    p(A) - \gamma I = c (A - \lambda_{1} I ) \cdots (A - \lambda_{n} I),
\end{align}
and at least one of \( A - \lambda_{j} I \) is not invertible. Hence, \( \lambda_{j} \in \spectr_{\mfA}(A) \) for at least one of \( \lambda_{j} \)'s and by previous discussion \( \gamma = p(\lambda_{j}) = \spectr_{\mfA}(p(A)) \) (the first equality follows since \( p(\lambda_{j}) - \gamma = 0 \) by definition of roots). Hence, \( \spectr_{\mfA}(p(A)) \subset p(\spectr_{\mfA}(A)) \).
\end{proof}

To unclutter the notation in the proofs we will use the following common notations. We summarize them in the following paragraph.

\begin{info}[Common notation.]
By \( \pol \) we will denote the space of polynomials with complex coefficients.
If \( A \) is an element in a \Cs-algebra \( \mfA \) with the spectrum \( \spectr_{\mfA}(A) \) then,
\begin{itemize}
    \item
    by \( \cc{\spectr_{\mfA}(A)} \) we will denote the \Cs-algebra of continuous functions on \( \spectr_{\mfA}(A) \);

    \item
    by \( \id \colon \spectr_{\mfA}(A) \to \spectr_{\mfA}(A) \) we will denote the identity mapping \( z \mapsto z \).

     \item
     when we write \( \pol \subset \cc{\spectr_{\mfA}(A)} \) we consider elements of \( \pol \) as continuous functions on \( \spectr_{\mfA}(A) \).
\end{itemize}

\end{info}

It is time to introduce the function calculus for self-adjoint operators. Almost the same theorem, with minor modifications, holds also for normal elements in a \Cs-algebra (see the Remark at the end of the section). However, we will prove only the following result, as it does not require any additional technicalities that are needed for normal elements. Before the proof, we recall the Weierstrass approximation theorem for convenience.

\begin{info}[Reminder: Weierstrass approximation theorem.]
    Any continuous function on a bounded interval can be uniformly approximated by polynomial functions.
\end{info}

\begin{thm}[\textbf{Function calculus for self-adjoint elements}]\label{thm:spectral calculus}
Let \( A \) be a self-adjoint element of a \Cs-algebra \( \mfA \). There is a unique isometric \( \star \)-isomorphism \( \varphi \) from \( \cc{\spectr_{\mfA}(A)} \) into \( \mfA \) such that \( \varphi(\id) = A \). Moreover, for \( f \in \cc{\spectr_{\mfA}(A)} \),
\begin{enumerate}
    \item\label{it:sc normal}
    \( \varphi(f) \) is a normal element and it is self-adjoint if and only if \( f \) is real valued;

    \item\label{it:sc smallest comm}
    the set \( \{ \varphi(f) \colon f \in \cc{\spectr_{\mfA}(A)} \}\) is a commutative \Cs-algebra and it is the smallest \Cs-subalgebra of \( \mfA \) that contains \( A \);

    \item\label{it:sc lim of pol}
    \( \varphi(f) \) is the limit of a sequence of polynomials in \( I \) and \( A \);

    \item\label{it:sc comm with others}
    if \( B \) is in  \( \mfA \) that commutes with \( A \), then it also commutes with \( \varphi(f) \).
\end{enumerate}
\end{thm}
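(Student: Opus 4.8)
The plan is to construct $\varphi$ first on polynomials and then extend it by density. Define $\varphi$ on $\pol \subset \cc{\spectr_{\mfA}(A)}$ by $\varphi(p) = p(A)$, that is $\varphi\bigl(\sum_k a_k \id^k\bigr) = \sum_k a_k A^k$; this is visibly linear and multiplicative and sends $\id$ to $A$ and the constant $1$ to $I$. The crux of the whole theorem — and the step I expect to be the main obstacle — is to show that $\varphi$ is \emph{isometric} on $\pol$, i.e.\ that $\| p(A)\|_{\mfA} = \sup_{\lambda \in \spectr_{\mfA}(A)} |p(\lambda)| = \| p \|_{\cc{\spectr_{\mfA}(A)}}$. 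Everything else is comparatively routine once this is in hand.

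To establish the isometry I would argue as follows. Since $A$ is self-adjoint, $p(A)^{\star} = \bar p(A)$, where $\bar p$ has the conjugated coefficients; as $p(A)$ and $\bar p(A)$ are both polynomials in $A$ they commute, so $p(A)$ is \emph{normal}. Proposition \ref{prop:spectral radius and norm in cstar}\ref{it:a} then gives $\| p(A)\| = r_{\mfA}(p(A))$. By the spectral mapping theorem for polynomials (Proposition \ref{prop:spectral for polys}) we have $\spectr_{\mfA}(p(A)) = p(\spectr_{\mfA}(A))$, whence $r_{\mfA}(p(A)) = \sup\{ |p(\lambda)| : \lambda \in \spectr_{\mfA}(A)\}$, which is exactly the supremum norm of $p$ on the spectrum. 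This is where the \Cs-property (through Proposition \ref{prop:spectral radius and norm in cstar}) does the real work.

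For the extension, I would use that $A$ is self-adjoint once more: Proposition \ref{prop:spectral radius and norm in cstar}\ref{it:b} says $\spectr_{\mfA}(A)$ is a compact subset of $\bbR$. On a real set $\bar \lambda = \lambda$, so $\pol$ is a unital, point-separating, conjugation-closed subalgebra of $\cc{\spectr_{\mfA}(A)}$, and by the Weierstrass (Stone--Weierstrass) approximation theorem it is dense. Since $\varphi$ is isometric on the dense subspace $\pol$ and $\mfA$ is complete, $\varphi$ extends uniquely to an isometry $\varphi \colon \cc{\spectr_{\mfA}(A)} \to \mfA$. Passing the identities $\varphi(fg) = \varphi(f)\varphi(g)$, $\varphi(\lambda f + g) = \lambda \varphi(f) + \varphi(g)$ and $\varphi(\bar f) = \varphi(f)^{\star}$ to the limit — using continuity of multiplication, of the linear operations, and of the (isometric, hence continuous) involution — shows $\varphi$ is a $\star$-homomorphism; being isometric it is injective, hence a $\star$-isomorphism onto its image. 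Uniqueness holds because any $\star$-isomorphism sending $\id$ to $A$ must agree with $\varphi$ on $\pol$ and, by continuity, on its closure.

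Finally I would dispatch the four ``moreover'' claims. Item \ref{it:sc normal}: $\varphi(f)^{\star}\varphi(f) = \varphi(\bar f f) = \varphi(f \bar f) = \varphi(f)\varphi(f)^{\star}$ by commutativity of $\cc{\spectr_{\mfA}(A)}$, so $\varphi(f)$ is normal, and it is self-adjoint iff $\varphi(\bar f) = \varphi(f)$, i.e.\ by injectivity iff $\bar f = f$. Item \ref{it:sc lim of pol} is immediate from the construction, since a uniform limit $f = \lim p_n$ gives $\varphi(f) = \lim p_n(A)$. Item \ref{it:sc smallest comm}: the image is the isometric $\star$-isomorphic (hence complete, therefore closed) image of a commutative \Cs-algebra, so it is a commutative \Cs-subalgebra containing $A = \varphi(\id)$; any \Cs-subalgebra containing $A$ must contain every polynomial in $I$ and $A$ and, being closed, their limits, hence all of $\varphi(\cc{\spectr_{\mfA}(A)})$ by \ref{it:sc lim of pol}. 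Item \ref{it:sc comm with others}: if $B$ commutes with $A$ it commutes with every $p(A)$, and taking limits (continuity of $L_B$ and $R_B$) it commutes with $\varphi(f)$.
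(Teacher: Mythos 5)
Your proposal is correct and follows essentially the same route as the paper's proof: isometry on polynomials via normality of \( p(A) \), Proposition \ref{prop:spectral radius and norm in cstar}\,\ref{it:a} and the spectral mapping theorem (Proposition \ref{prop:spectral for polys}), then extension by Weierstrass density and completeness, with the remaining claims passed through limits of polynomials. The only detail the paper spells out that you leave implicit is well-definedness of \( \varphi \) on \( \pol \) viewed inside \( \cc{\spectr_{\mfA}(A)} \) (two polynomials agreeing on the spectrum give the same element), but this is an immediate consequence of the isometry identity you establish.
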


\begin{proof}
Assume a polynomial \( p \in \pol \) of the form,
\begin{align}
    p(t) = \sum_{k=0}^{n} a_{k} \ t^{k}.
\end{align}
By Proposition \ref{prop:spectral radius and norm in cstar}\,\ref{it:b}, the spectrum of \( A \) is a compact subset of the real line. Then we can write \( p \) restricted to this subset using the identity mapping in \( \cc{\spectr_{\mfA}(A)} \) such that
\begin{align}
    p = \sum_{k=0}^{n} a_{k} \ \id^{k} \qquad (\text{with } \id^{0} = 1).
\end{align}
Define \( \varphi _0 \colon \pol \to \mfA \) by
\begin{align}
    \varphi_0(p) = \sum_{k=0}^{n} a_{k} A^{k} = p(A) && \text{and} &&
    \varphi_0(p)^{\star}= \sum_{k=0}^{n} \bar{a}_{k} A^{k} = p(A)^{\star} .
\end{align}
Then, \( \varphi_{0} \) is linear and \( \varphi_0(\id^{k}) \mapsto A^{k} \) for \( k = \bbN \).
Also, \( \varphi_0(p) \) and \( \varphi_0(p)^{\star} \) commute, implying that \( \varphi_0(p) \) is  normal. Thus, Proposition \ref{prop:spectral radius and norm in cstar}\,\ref{it:a} applies and together with Proposition \ref{prop:spectral for polys} we have
\begin{align}
    \| \varphi_0(p) \| = r_{\mfA}(p(A))
        = \sup \{  |\lambda | \colon \lambda \in \spectr_{\mfA}(p(A))\}
        = \sup \{ |p(\lambda) | \colon \lambda \in \spectr_{\mfA}(A) \}
        = \| p \|,
\end{align}
where the norm on the left-hand side is the \Cs-norm on \( \mfA \) and the norm on the right-hand side is the supremum norm on \( \cc{\spectr_{\mfA}(A)} \). Thus, \( \varphi_0 \) is an isometry.
If \( p \) and \( q \) are two polynomials in \( \pol \), identically equal on \( \spectr_{\mfA}(A) \), then
\begin{align}
    \| \varphi_0(p) - \varphi_0(q) \| = \| p - q \| = 0,
\end{align}
and then \( \varphi_0(p) = \varphi_0(q) \). Thus,  the mapping \( \varphi_0 \) is well defined (unambiguous).

By the Weierstrass approximation theorem, \( \pol \subset \cc{\spectr_{\mfA}(A)} \) is everywhere dense in \( \cc{\spectr_{\mfA}(A)} \). Since \( \mfA \) is complete, \( \varphi_0 \) extends to the mapping \( \varphi \) defined on \( \cc{\spectr_{\mfA}(A)} \). This extension remains isometric; since  if \( f \in \cc{\spectr_{\mfA}(A)} \) and \( \{ p_n \} \) is a sequence of polynomials that approximates \( f \) then,
\begin{align}
    \| \varphi(f) \|
    = \lim_{n\to \infty} \| \varphi(p_n) \|
    = \lim_{n\to\infty} \| p_n \|
    = \| f \|.
\end{align}
(This proves existence of the asserted mapping.)

Thence, since  \( \cc{\spectr_{\mfA}(A)} \) is complete
its image, \( \{ \varphi(f) \colon f \in \cc{\spectr_{\mfA}(A)} \} \), is complete as well. Therefore, it is a commutative \Cs-subalgebra of \( \mfA \) that contains \( I \) and \( A \). We will denote this subalgebra \( \mfA(A) \). Since polynomials are everywhere dense in \( \cc{\spectr_{\mfA}(A)} \) then every element of \( \mfA(A) \) is a limit of a sequence of polynomials in \( A \). A closed \Cs-subalgebra \( \mfB \) of \( \mfA \) that contains \( I \) and \( A \) necessarily contains all polynomials in \( A \) and therefore it contains \( \mfA(A) \). Hence \( \mfA(A) \) is the smallest \Cs-subalgebra that contains \( I  \) and \( A \).
(This proves \ref{it:sc smallest comm} and \ref{it:sc lim of pol}.)

By definition, \( \varphi_0 \) is linear, multiplicative (i.e.\ \( \varphi(pq) = \varphi(p) \varphi(q) \) for \( p,q \in \pol \)), preserves the star  (i.e.\ \( \varphi(\bar{p}) = \varphi(p)^{\star} \)), and carries the unite of \( \pol \) onto that of \( \mfA \). These properties extend by continuity to \( \varphi \) and thus, \( \varphi \) is a \( \star \)-homomorphism. Since it is isometric it is one-to-one and thus, a \( \star \)-isomorphism. (This proves the isometry assertion.)

If \( \phi \) is another such \( \star \)-isomorphism that maps \( \id \) to \( A \) then \( \phi \) and \( \varphi \) coincide on the dense space \( \pol \); thus they are equal. (This proves uniqueness of the asserted mapping.)

Since we can approximate every element in \( \mfA(A) \) by polynomials in \( A \), it follows at once that \( \varphi(f) \) is a normal element, and that it is self-adjoint if and only if \( f \) is real valued. If \( B \in \mfA \)  commutes with \( A \) then it commutes with all polynomials in \( A \), and thus it commutes with every element in \( \mfA(A) \).
This concludes the proof.
\end{proof}

The mapping \( \varphi \) is called the function calculus for the self-adjoint element \( A \). To simplify the notation, it is often common to write \( f(A) \) instead of \( \varphi(f) \). Henceforth, we will use this convention.

We have already seen that a spectrum of a Banach algebra depends on the ambient Banach algebra. For \Cs-algebras this is not the case. Here, the spectrum of an element is independent of the \Cs-algebra in which we consider this element to be. Stated sloppy, \Cs-algebras are always large enough to contain an inverse of an element if it exists. With the aid of the function calculus we can now prove this statement.

\begin{prop}\label{prop:independence of spectrum}
Let \( \mfA \) be a \Cs-algebra, and \( \mfB \) be a \Cs-subalgebra of \( \mfA \). If \( B \) is in \( \mfB \), then
\begin{align}
\spectr_{\mfA}(B) = \spectr_{\mfB}(B).
\end{align}
\end{prop}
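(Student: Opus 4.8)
The plan is to prove the two inclusions separately. The inclusion $\spectr_{\mfA}(B) \subseteq \spectr_{\mfB}(B)$ is immediate: if $\lambda \notin \spectr_{\mfB}(B)$ then $B - \lambda I$ has an inverse lying in $\mfB \subseteq \mfA$, and this element is then also a two-sided inverse in $\mfA$, so $\lambda \notin \spectr_{\mfA}(B)$. Hence only the reverse inclusion needs work, and it is equivalent to the assertion that every $B \in \mfB$ which is invertible in $\mfA$ is already invertible in $\mfB$ (apply this to $B - \lambda I \in \mfB$ for each $\lambda$).

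First I would reduce this to self-adjoint elements. Suppose $B \in \mfB$ is invertible in $\mfA$. By the earlier proposition that $A$ is invertible iff $A^{\star}$ is, the element $B^{\star}$ is invertible in $\mfA$, hence so is the self-adjoint element $H := B^{\star}B \in \mfB$. Granting for the moment that $H^{-1} \in \mfB$, the element $H^{-1}B^{\star} = (B^{\star}B)^{-1}B^{\star}$ is a left inverse of $B$, since $\big((B^{\star}B)^{-1}B^{\star}\big)B = (B^{\star}B)^{-1}(B^{\star}B) = I$; as $B$ is invertible in $\mfA$, its unique two-sided inverse coincides with any left inverse, so $B^{-1} = (B^{\star}B)^{-1}B^{\star}$ is a product of elements of $\mfB$, whence $B^{-1}\in\mfB$. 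It therefore suffices to show: a self-adjoint $H \in \mfB$ that is invertible in $\mfA$ is invertible in $\mfB$.

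For this the key tool is Theorem \ref{thm:spectral calculus} applied to $H$ inside $\mfA$. It produces the smallest \Cs-subalgebra $\mfA(H) \subseteq \mfA$ containing $I$ and $H$, together with an isometric $\star$-isomorphism $\varphi \colon \cc{\spectr_{\mfA}(H)} \to \mfA(H)$ satisfying $\varphi(\id) = H$. Since $\mfB$ is itself a \Cs-subalgebra containing $I$ and $H$, minimality of $\mfA(H)$ forces $\mfA(H) \subseteq \mfB$. Because $\varphi$ is an algebra isomorphism it preserves invertibility, hence preserves spectra, so $\spectr_{\mfA(H)}(H) = \spectr_{\cc{\spectr_{\mfA}(H)}}(\id)$; and by the earlier computation identifying the spectrum of a continuous function with its range, this last set equals $\Img(\id) = \spectr_{\mfA}(H)$. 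Therefore $\spectr_{\mfA(H)}(H) = \spectr_{\mfA}(H)$. Now if $H$ is invertible in $\mfA$, i.e.\ $0 \notin \spectr_{\mfA}(H)$, then $0 \notin \spectr_{\mfA(H)}(H)$, so $H$ has an inverse in $\mfA(H) \subseteq \mfB$, giving $H^{-1} \in \mfB$ as required.

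The crux — and the step I expect to carry all the weight — is the identity $\spectr_{\mfA(H)}(H) = \spectr_{\mfA}(H)$, which says that the small subalgebra manufactured by the function calculus already contains every inverse that $\mfA$ does. Everything hinges on $\varphi$ being an isometric $\star$-isomorphism \emph{onto} $\mfA(H)$ together with the fact that the spectrum of $\id$ in $\cc{\spectr_{\mfA}(H)}$ is exactly $\spectr_{\mfA}(H)$. The self-adjointness of $H$ is essential, since it is what allows Theorem \ref{thm:spectral calculus} (stated only for self-adjoint elements) to be invoked at all; the passage from the general $B$ to $B^{\star}B$ is precisely the device that makes self-adjointness available.
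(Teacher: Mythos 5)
Your proof is correct and follows essentially the same route as the paper: the same easy inclusion, the same reduction of a general invertible \( B \) to the self-adjoint element \( B^{\star}B \) (the paper writes the resulting inverse as \( (A^{-1}(A^{-1})^{\star})A^{\star} \), which is exactly your \( (B^{\star}B)^{-1}B^{\star} \)), and the same appeal to Theorem \ref{thm:spectral calculus}\,\ref{it:sc smallest comm} to place \( \mfA(H) \) inside \( \mfB \). The only difference is cosmetic: where you transport the spectrum through the isomorphism \( \varphi \) and invoke \( \spectr(\id) = \Img(\id) \), the paper simply applies \( \varphi \) to \( f(t) = t^{-1} \) (continuous on \( \spectr_{\mfA}(H) \) because \( 0 \notin \spectr_{\mfA}(H) \)) and uses multiplicativity to get \( Hf(H) = I \) --- the same inverse, produced directly.
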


\begin{proof}
We have already discussed that \( \spectr_{\mfA}(B) \subseteq \spectr_{\mfB}(B) \). Thus we need only to prove the opposite. For this we will show that if \( A \in \mfB \) has an inverse in \( \mfA \) then this inverse is also in \( \mfB \).

We consider first a self-adjoint element \( A \). Since \( 0 \notin \spectr_{\mfA}(A) \), the function \( f(t) = t^{-1} \) is continuous on \( \spectr_{\mfA}(A) \). Thus, by the function calculus relative to the algebra \( \mfA \) we get an element \( f(A) \) in \( \mfA \). From Theorem \ref{thm:spectral calculus}\,\ref{it:sc smallest comm} this element is contained in \( \mfB \).
Since \( t f(t) = 1 \) for each \( t \) in \( \spectr_{\mfA}(A) \), it follows by multiplicativity of the function calculus that
\begin{align}
    I = \varphi(1) = \varphi(\id\ \cdot f) = \varphi(\id) \varphi(f) = A f(A).
\end{align}
Thus \( A^{-1} = f(A) \in \mfB \).

Now, consider a generic \( A \in \mfB \) that has an inverse \( A^{-1} \) in \( \mfA \). Then \( A^{\star} \) is also in \( \mfB \) and has an inverse \( (A^{-1})^{\star} \) in \( \mfA \). Now, \( A^{\star}A \) is a self-adjoint element in \( \mfB \) and it has an inverse \( A^{-1}(A^{-1})^{\star} \) in \( \mfA \). By the first part of the proof, \( A^{-1}(A^{-1})^{\star} \) is in \( \mfB \). Thus, \( A^{-1} = (A^{-1}(A^{-1})^{\star})A^{\star}  \in \mfB\).
\end{proof}

The above proposition shows that the spectrum of an element in a \Cs-algebra is independent of the \Cs-algebra this element is in. Thus, from now on, we can (and will) omit the subscript referring to the ambient \Cs-algebra and write \( \spectr(A) \) instead of \( \spectr_{\mfA}(A) \).

In the proof of Proposition \ref{prop:independence of spectrum} we use the phrase ``... by the function calculus relative to the algebra \( \mfA \) ...''. Implying, that the function calculus may depend on the original algebra. Nevertheless, using Proposition \ref{prop:independence of spectrum} we can see that this cannot happen.
Namely, if \( \mfB \) is a \Cs-subalgebra of \( \mfA \), and \( A \) is a self-adjoint element in \( \mfB \), then let \( \varphi \) be the spectral calculus \( \spectr(A) \to \mfA \) and \( \psi \) be the spectral calculus \( \spectr(A) \to \mfB \). However, since the spectrum of \( A \) relative to either algebra is the same, we can consider \( \psi \) as a spectral calculus into the larger algebra \( \mfA \). By uniqueness of the spectral calculus, \( \psi \)  must then coincide with \( \varphi \). Thus, we do not need to worry about the ambient \Cs-algebra when talking about the spectral calculus. In particular, we can always choose a sufficiently small \Cs-algebra, \( \mfA(A) = \{ f(A) \colon f \in \cc{\spectr(A)} \}\) such that the spectral calculus, \(\cc{\spectr(A)} \to \mfA(A) \), is onto. Then, we can state the following about the spectrum of \( f(A) \).

\begin{cor}\label{cor:spectr of function}
If \( A \) is a self-adjoint element of a \Cs-algebra \( \mfA \), and \( f \in \cc{\spectr(A)} \), then
\begin{align}
    \spectr(f(A) ) = \{ f(t)\colon \ t \in \spectr(A) \}.
\end{align}
\end{cor}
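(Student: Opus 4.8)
The plan is to exploit that the function calculus $\varphi \colon \cc{\spectr(A)} \to \mfA$ from Theorem \ref{thm:spectral calculus} is an isometric $\star$-isomorphism onto the \Cs-subalgebra $\mfA(A) = \{ g(A) \colon g \in \cc{\spectr(A)} \}$, and to turn the membership $\lambda \in \spectr(f(A))$ into a statement about where the continuous function $f$ takes the value $\lambda$. Concretely, I would fix $\lambda \in \bbC$, consider the function $g = f - \lambda 1 \in \cc{\spectr(A)}$ (here $1$ denotes the constant function, the unit of $\cc{\spectr(A)}$), and note that since $\varphi$ is linear and unital its image is $\varphi(g) = f(A) - \lambda I$. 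The goal is then the chain of equivalences: $\lambda \notin \spectr(f(A))$ $\iff$ $f(A) - \lambda I$ is invertible $\iff$ $g$ is invertible in $\cc{\spectr(A)}$ $\iff$ $g$ never vanishes on $\spectr(A)$ $\iff$ $\lambda \notin f(\spectr(A))$.

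First I would justify the two middle equivalences. Because $\varphi$ is a bijective unital $\star$-homomorphism onto $\mfA(A)$, it maps invertible elements to invertible elements and, conversely, any inverse of $\varphi(g)$ in $\mfA(A)$ is of the form $\varphi(h)$ with $\varphi(gh) = I = \varphi(1)$, whence $gh = 1 = hg$ by injectivity; thus $g$ is invertible in $\cc{\spectr(A)}$ if and only if $f(A) - \lambda I$ is invertible in $\mfA(A)$. This is the point at which the earlier machinery is indispensable: the isomorphism only controls invertibility inside the small algebra $\mfA(A)$, whereas $\spectr(f(A))$ is by definition computed in $\mfA$. That gap is closed by Proposition \ref{prop:independence of spectrum}: since $\mfA(A)$ is a \Cs-subalgebra of $\mfA$ containing $f(A)$, invertibility of $f(A) - \lambda I$ in $\mfA(A)$ is equivalent to its invertibility in $\mfA$, so the spectrum of $f(A)$ does not depend on which of the two algebras we use.

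Next I would identify the invertible elements of $\cc{\spectr(A)}$. By exactly the computation carried out in the example of the commutative \Cs-algebra of continuous functions on a compact Hausdorff space, a continuous $g$ is invertible in $\cc{\spectr(A)}$ precisely when it is nowhere zero on the compact set $\spectr(A)$ (a nonvanishing continuous function on a compact set has continuous reciprocal, and conversely the reciprocal cannot exist where $g$ vanishes). Applying this to $g = f - \lambda 1$ gives that $g$ is invertible if and only if $f(t) \neq \lambda$ for every $t \in \spectr(A)$, i.e. if and only if $\lambda \notin f(\spectr(A))$. Combining all the equivalences yields $\lambda \in \spectr(f(A)) \iff \lambda \in f(\spectr(A))$, which is the assertion.

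The step I expect to be the genuine obstacle is the ambient-algebra bookkeeping: the function calculus furnishes a clean isomorphism only onto $\mfA(A)$, not onto $\mfA$, so without Proposition \ref{prop:independence of spectrum} one would obtain only the spectrum relative to $\mfA(A)$, and the identification with $\spectr_{\mfA}(f(A))$ would be left dangling; everything else is soft. As an alternative route one could approximate $f$ uniformly by polynomials $p_n$, invoke the polynomial spectral mapping theorem (Proposition \ref{prop:spectral for polys}) and pass to the limit using that $\varphi$ is isometric, so that $p_n(A) \to f(A)$ in norm; but controlling the limit of the spectra requires extra continuity arguments, so the isomorphism argument above is the cleaner path.
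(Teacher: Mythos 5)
Your argument is correct and is essentially the paper's own proof, just written out in full: the paper simply observes that the function calculus is a \( \star \)-isomorphism from \( \cc{\spectr(A)} \) onto \( \mfA(A) \) and that \( \star \)-isomorphisms preserve spectra, relying (as you do) on Proposition \ref{prop:independence of spectrum} to identify \( \spectr_{\mfA(A)}(f(A)) \) with \( \spectr_{\mfA}(f(A)) \) and on the earlier example identifying invertible elements of a \( \cc{X} \)-algebra with nowhere-vanishing functions. Your write-up just makes these three ingredients explicit, including the ambient-algebra bookkeeping that the paper handles in the paragraph preceding the corollary.
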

\begin{proof}
    The function calculus \( f \to f(A) \) is a \( \star \)-isomorphism from \(  \cc{\spectr(A)} \) onto the \Cs-subalgebra \( \mfA(A)\) of \( \mfA \), and \( \star \)-isomorphisms preserve the spectrum.
\end{proof}

Another useful consequence of the function calculus is the definition of a square root.

\begin{cor}\label{cor:square root}
Let \( A \) be a self-adjoint element of a \Cs-algebra \( \mfA \), with \( \spectr(A) \subseteq \bbR^{+} \). There exists a unique self-adjoint element \( H \) in the commutative \Cs-subalgebra \( \mfA(A) \) of \( \mfA \) such that \( A = H H \). Moreover,  \( \spectr(H) \subseteq \bbR^{+} \), and if \( B \in \mfA \) commutes with \( A \) then it also commutes with \( H \).
\end{cor}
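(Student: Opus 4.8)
The plan is to \emph{define} $H$ as the image of the square-root function under the function calculus of Theorem \ref{thm:spectral calculus}, and then read off each asserted property from the properties of that calculus. Since $A$ is self-adjoint with $\spectr(A) \subseteq \bbR^{+}$, Proposition \ref{prop:spectral radius and norm in cstar}\,\ref{it:b} tells us that $\spectr(A)$ is a compact subset of $[0, \| A \|]$, so the function $f(t) = \sqrt{t}$ is real-valued and continuous on $\spectr(A)$, i.e.\ $f \in \cc{\spectr(A)}$. I then set $H \coloneqq \varphi(f) = f(A)$. By Theorem \ref{thm:spectral calculus}\,\ref{it:sc smallest comm} this element lies in $\mfA(A)$, and by part \ref{it:sc normal} it is self-adjoint because $f$ is real-valued.

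Next I would verify the remaining ``existence'' claims using multiplicativity of the function calculus. Since $\varphi$ is a $\star$-homomorphism and $f \cdot f = \id$ as functions on $\spectr(A)$ (indeed $\sqrt{t}\,\sqrt{t} = t$ for $t \in \spectr(A) \subseteq \bbR^{+}$), I obtain
\begin{align}
    HH = \varphi(f)\varphi(f) = \varphi(f \cdot f) = \varphi(\id) = A.
\end{align}
The spectral claim $\spectr(H) \subseteq \bbR^{+}$ is then immediate from Corollary \ref{cor:spectr of function}, since $\spectr(H) = \{ f(t) \colon t \in \spectr(A) \} = \{ \sqrt{t} \colon t \in \spectr(A) \} \subseteq \bbR^{+}$. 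The commutation statement follows directly from Theorem \ref{thm:spectral calculus}\,\ref{it:sc comm with others}: any $B$ that commutes with $A$ commutes with every $\varphi(g)$, in particular with $H = \varphi(f)$.

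The delicate point is uniqueness, and here the hypothesis $\spectr(H) \subseteq \bbR^{+}$ is essential: without it $-H$ would be a second self-adjoint square root in $\mfA(A)$, and if $\spectr(A)$ is disconnected there are even more sign-flipped square roots. So I would state uniqueness as: $H$ is the unique self-adjoint element of $\mfA(A)$ with $\spectr(H) \subseteq \bbR^{+}$ satisfying $HH = A$. To prove it, suppose $H'$ is another such element. Because $\varphi \colon \cc{\spectr(A)} \to \mfA(A)$ is an \emph{onto} $\star$-isomorphism (the spectrum being unambiguous by Proposition \ref{prop:independence of spectrum}), I may write $H' = \varphi(g)$ with $g \in \cc{\spectr(A)}$; by Theorem \ref{thm:spectral calculus}\,\ref{it:sc normal} $g$ is real-valued, and by Corollary \ref{cor:spectr of function} together with $\spectr(H') \subseteq \bbR^{+}$ the function $g$ is pointwise nonnegative. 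From $H'H' = A = \varphi(\id)$ and injectivity of $\varphi$ I conclude $g^{2} = \id$ on $\spectr(A)$, i.e.\ $g(t)^{2} = t$ for every $t \in \spectr(A)$; combined with $g(t) \geq 0$ this forces $g(t) = \sqrt{t} = f(t)$, whence $H' = \varphi(f) = H$. The only genuine obstacle is thus recognizing that positivity of the spectrum is exactly what pins down the sign, after which injectivity of the function calculus finishes the argument.
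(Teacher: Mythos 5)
Your proposal is correct, and the existence half (defining \( H = \varphi(\sqrt{\id}) \) and reading off self-adjointness, positivity of the spectrum, and commutation from Theorem \ref{thm:spectral calculus} and Corollary \ref{cor:spectr of function}) is essentially identical to the paper's. Where you genuinely diverge is uniqueness. You stay inside \( \mfA(A) \): using that \( \varphi \colon \cc{\spectr(A)} \to \mfA(A) \) is an onto \( \star \)-isomorphism, you pull any competing square root \( H' \) back to a function \( g \), and the problem collapses to the functional equation \( g^{2} = \id \), \( g \geq 0 \Rightarrow g = \sqrt{\id} \). The paper instead takes an arbitrary self-adjoint \( K \in \mfA \) (not assumed to lie in \( \mfA(A) \)) with \( \spectr(K) \subseteq \bbR^{+} \) and \( K^{2} = A \), approximates \( \sqrt{\id} \) by polynomials \( p_{n} \), sets \( q_{n}(t) = p_{n}(t^{2}) \), observes that \( q_{n}(t) \to t \) uniformly on \( \spectr(K) \) (this is exactly where positivity of \( \spectr(K) \) enters, since \( \sqrt{t^{2}} = t \) only for \( t \geq 0 \)), and concludes \( K = \lim q_{n}(K) = \lim p_{n}(A) = H \) via the function calculus for \( K \). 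Your route is shorter and more algebraic, but it only excludes rival square roots that already lie in \( \mfA(A) \) — which does match the literal wording of the corollary; the paper's route proves the stronger and more frequently cited fact that \emph{any} positive square root anywhere in \( \mfA \) coincides with \( H \), so membership in \( \mfA(A) \) comes out as a conclusion rather than a hypothesis. Your observation that the condition \( \spectr(H) \subseteq \bbR^{+} \) must be built into the uniqueness claim (else \( -H \) is a counterexample) is a correct and worthwhile sharpening; the paper's proof makes the same assumption implicitly by only considering \( K \) with positive spectrum.
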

\begin{proof}
Since \( \spectr(A) \) is a subset of positive real numbers, the function \( f(t) = t^{1/2} \big(= \sqrt{\id} \big) \) is real-valued, positive, and continuous on \( \spectr(A) \). Define \( H \) as \( f(A) \). If \( \varphi \) denotes the corresponding function calculus then
\begin{align}
    HH = f(A)f(A) = \varphi(f) \varphi(f) = \varphi(f^{2}) = \varphi(\id) = A.
\end{align}
Since, \( f \) is real valued, the element \( H \) is self-adjoint by \ref{thm:spectral calculus} \ref{it:sc normal}. Since \( f \) is positive the spectrum of \( H \) is positive by \ref{cor:spectr of function}. If \( B \in \mfA \) commutes with \( A \) then it commutes with \( H \) by \ref{thm:spectral calculus} \ref{it:sc comm with others}.

To show uniqueness, suppose that \( K \) is any element in \( \mfA \) with positive spectrum and such that \( K^{2} = A \). Let \( \{ p_{n} \} \) be a sequence of polynomials that converge to \( f = \sqrt{id} \). Set \( q_{n}(t) = p_{n}(t^{2})  \). Then we have,
\begin{align}
    \lim_{n\to \infty}  q_{n}(t) = \lim_{n\to \infty} p_{n}(t^{2}) = f(t^{2}) = t,
\end{align}
uniformly for \( t \in \spectr(K) \). Hence, by the function calculus for \( K \) we get
\begin{align}
    K = \lim_{n\to \infty} q_{n}(K) = \lim_{n \to \infty} p_{n}(K^{2})
        = \lim_{n \to \infty} p_{n}(A) = f(A) = H,
\end{align}
which proves uniqueness.
\end{proof}
There are more useful corollaries that we can derive from the function calculus. For now, however, we close this chapter and proceed to the next  ingredient that will be necessary to understand the construction of quantum mechanics: the GNS construction.

\begin{info}
    The function calculus can be extended to normal elements. The methods to prove it, however, are a bit different. We provide the full theorem without proof.

    \begin{thm}[\textbf{Function calculus for normal elements}]
    Let \( A \) be a normal element of a \Cs-algebra \( \mfA \). There is a unique isometric \( \star \)-isomorphism \( \varphi \) from \( \cc{\spectr_{\mfA}(A)} \) into \( \mfA \) such that \( \varphi(\id) = A \). Moreover, for \( f \in \cc{\spectr_{\mfA}(A)} \),
    \begin{enumerate}
        \item
        \( \varphi(f) \) is a normal element of \( \mfA \);

        \item
        the set \( \{ \varphi(f) \colon f \in \cc{\spectr_{\mfA}(A)} \}\) is a commutative \Cs-algebra and it is the smallest \Cs-subalgebra of \( \mfA \) that contains \( A \);

        \item
        \( \varphi(f) \) is the limit of a sequence of polynomials in \( I \), \( A \), and \( A^{\star} \);

        \item
        moreover, the element \( A \) is
            \vspace{-0.5cm}
    \begin{multicols}{2}
        \begin{itemize}
            \item
            self-adjoint iff \( \spectr(A) \subseteq \bbR\);
            \item
            positive iff \( \spectr(A) \subseteq \bbR^{+} \);
            \item
            unitary iff \( \spectr(A) \subseteq \bbC_1 \);
            \item
            projection iff \( \spectr(A) \subseteq \{ 0,1\} \).
        \end{itemize}
    \end{multicols}
\end{enumerate}
\end{thm}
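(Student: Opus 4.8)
The plan is to follow the same skeleton as the self-adjoint case, replacing ordinary polynomials (which were dense by Weierstrass) with $\star$-polynomials, i.e.\ polynomials in the two commuting elements $A$ and $A^{\star}$, and replacing the density argument by the Stone--Weierstrass theorem. Concretely, I would first define $\varphi_0$ on the subalgebra of $\star$-polynomials by sending the coordinate function $\id$ to $A$ and its conjugate $\overline{\id}$ to $A^{\star}$; because $A$ is normal, $A$ and $A^{\star}$ commute, so $\varphi_0\big(\sum a_{jk}\,\id^{j}\,\overline{\id}^{k}\big) = \sum a_{jk}\,A^{j}(A^{\star})^{k}$ is unambiguously defined and each image is a normal element. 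This $\varphi_0$ is manifestly linear, multiplicative, unit-preserving, and star-preserving. The algebra of $\star$-polynomials is dense in $\cc{\spectr_{\mfA}(A)}$ by Stone--Weierstrass: it contains the constants, separates points (the function $\id$ alone does), and is closed under conjugation.

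The heart of the proof, exactly as before, is the isometry $\|\varphi_0(p)\| = \|p\|_{\infty}$. Since $\varphi_0(p)$ is normal, Proposition \ref{prop:spectral radius and norm in cstar}\,\ref{it:a} gives $\|\varphi_0(p)\| = r_{\mfA}(\varphi_0(p))$, so the isometry is equivalent to the $\star$-polynomial spectral mapping theorem
\[
    \spectr_{\mfA}\big(p(A,A^{\star})\big) = \{\,p(\lambda,\bar\lambda) : \lambda \in \spectr_{\mfA}(A)\,\}.
\]
This is the step where the self-adjoint proof genuinely breaks down: a two-variable $\star$-polynomial $p(z,\bar z) - \mu$ does not factor into linear factors the way a one-variable polynomial does, so Proposition \ref{prop:spectral for polys} has no direct analogue. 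I expect this to be the main obstacle.

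To overcome it, I would invoke the commutative theory foreshadowed in the Comment after the commutative \Cs-algebra example (and thus not yet available in the excerpt): let $\mfB$ be the \Cs-subalgebra generated by $I$ and $A$, which is commutative precisely because $A$ is normal. The commutative Gelfand--Naimark theorem identifies $\mfB$ isometrically and $\star$-isomorphically with $\cc{\Omega}$, where $\Omega$ is the (compact Hausdorff) space of characters of $\mfB$; under this identification each character $\omega$ satisfies $\omega(A^{\star}) = \overline{\omega(A)}$, and the evaluation map $\omega \mapsto \omega(A)$ is a homeomorphism of $\Omega$ onto $\spectr_{\mfB}(A)$, which by Proposition \ref{prop:independence of spectrum} equals $\spectr_{\mfA}(A)$. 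Transporting the Gelfand transform through this homeomorphism yields the desired isometric $\star$-isomorphism $\varphi \colon \cc{\spectr_{\mfA}(A)} \to \mfB \subseteq \mfA$ with $\varphi(\id) = A$, and simultaneously delivers the spectral mapping formula above; the extension of $\varphi_0$ to all of $\cc{\spectr_{\mfA}(A)}$ is then by continuity and completeness, exactly as in Theorem \ref{thm:spectral calculus}.

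The remaining assertions are routine and mirror the self-adjoint case. Normality of $\varphi(f)$ follows from $\varphi(f)^{\star}\varphi(f) = \varphi(\bar f f) = \varphi(f \bar f) = \varphi(f)\varphi(f)^{\star}$; the image is a commutative \Cs-subalgebra (the isometric image of a complete algebra) and is the smallest one containing $A$, because any such subalgebra must contain all $\star$-polynomials in $A$, which are dense; property (3) is immediate from Stone--Weierstrass density; and uniqueness holds because $\varphi(\id)=A$ together with the $\star$-homomorphism property forces the values on all $\star$-polynomials, a dense set. Finally, each characterization in (4) is read off from the fact that $\varphi$ is a spectrum-preserving $\star$-isomorphism: $A = \varphi(\id)$ is self-adjoint, positive, unitary, or a projection exactly when $\id$ equals $\overline{\id}$, is nonnegative, satisfies $\overline{\id}\,\id = 1$, or satisfies $\id = \overline{\id} = \id^{2}$ on $\spectr_{\mfA}(A)$, i.e.\ exactly when $\spectr_{\mfA}(A)$ lies in $\bbR$, $\bbR^{+}$, the unit circle, or $\{0,1\}$ respectively.
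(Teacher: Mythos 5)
The paper states this theorem \emph{without proof} (it is given inside an info box with the remark that ``the methods to prove it are a bit different''), so there is no proof of record to compare against; your proposal has to stand on its own, and it does. You correctly identify the exact point where the self-adjoint argument collapses: the isometry of \( \varphi_{0} \) reduces to a spectral mapping theorem for two-variable \( \star \)-polynomials \( p(A,A^{\star}) \), and the factorization trick behind Proposition \ref{prop:spectral for polys} is unavailable because \( p(z,\bar z)-\mu \) does not split into linear factors. Your resolution --- pass to the commutative \Cs-subalgebra \( \mfB \) generated by \( I \) and \( A \), apply the commutative Gelfand--Naimark theorem to get \( \mfB \cong \cc{\Omega} \), identify \( \Omega \) with \( \spectr_{\mfB}(A) \) via \( \omega \mapsto \omega(A) \), and use Proposition \ref{prop:independence of spectrum} to replace \( \spectr_{\mfB}(A) \) by \( \spectr_{\mfA}(A) \) --- is the standard textbook route and is sound; note there is no circularity in citing spectral permanence, since its proof in the paper uses only the self-adjoint calculus of Theorem \ref{thm:spectral calculus}. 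The derivations of normality, minimality of the image, density of \( \star \)-polynomials via Stone--Weierstrass, uniqueness, and the four spectral characterizations in part (4) are all correct as written. The one caveat worth making explicit: your proof is complete only modulo the commutative Gelfand--Naimark theorem, which within this paper corresponds to Theorem \ref{thm:Gelfand rep} --- itself stated later and without proof --- together with the standard facts that characters of a \Cs-algebra are automatically hermitian (this follows from the real spectrum of self-adjoint elements, Proposition \ref{prop:spectral radius and norm in cstar}) and that the evaluation map \( \omega \mapsto \omega(A) \) is a homeomorphism onto the spectrum when \( A \) generates the algebra. Flagging and proving those facts would make the argument self-contained at the level of rigor the paper uses for the self-adjoint case.
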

\end{info}

\newpage
\subsection{The GNS construction}
In this section we will see that any abstract \Cs-algebra can be ``represented'' as an algebra of bounded linear operators on some appropriate Hilbert space. The construction used to show this was developed by Gelfand, Naimark, and Segal, hence the name GNS-construction. This construction is vital for our purpose: to use \Cs-algebras in applications to quantum mechanics. To discuss the GNS construction we need additional tools --- the order structure on the \Cs-algebra, the space of states, and a brief discussion of representations of a \Cs-algebras. We begin with the order structure.

\subsubsection{Order structure on the \Cs-algebra}
\begin{defi}
Let \( V \) be a vector space. A (positive) \textit{cone} \( \pC \) of \( V \) is a subset of \( V \) with the following properties:
\begin{enumerate}
    \item
    if \( A \) and \( -A \) are in \( \pC \), then \( A = 0 \);

    \item
    if \( a \) is a positive scalar and \( A  \) is in \( \pC \), then \( aA \) is in \( \pC \);

    \item
    if \( A, B \) are in \( \pC \) then \( A + B   \) is also in \( \pC \).
\end{enumerate}
\end{defi}

A cone on a vector space allows us to compare elements in that vector space. In particular, we can define an element \( A \in V \) to be ``smaller then'' \( B \in V \) whenever \( B - A \) is in the cone \( \pC \). This relation defines a \textit{partial ordering} on \( V \). An element \( I \in V \) is called an \textit{order unit} when, for a given \( A \in V \) it holds that \( -aI \leq A \leq aI \) for some positive scalar \( a \) that may depend on \( A \).

\begin{info}[Reminder: partial order.]
A partial order on a set \( \Omega \) is a relation over the set \( \Omega \) such that if \( a,b,c \in \Omega \):
\begin{enumerate}
\item
\( a \leq a \), that is the relation is reflexive;

\item
if \( a \leq b \) and \( b \leq a \) then \( a = b \), that is the relation is anti-symmetric;

\item
if \( a \leq b \) and \( b \leq c \) then \( a \leq c \), that is the relation is transitive.
\end{enumerate}
\end{info}

\begin{figure}[ht]
    \centering
    \subfloat[A cone in a vector space.]{%
        \def\svgwidth{7cm}
    	%% Creator: Inkscape 1.1 (c4e8f9e, 2021-05-24), www.inkscape.org
%% PDF/EPS/PS + LaTeX output extension by Johan Engelen, 2010
%% Accompanies image file './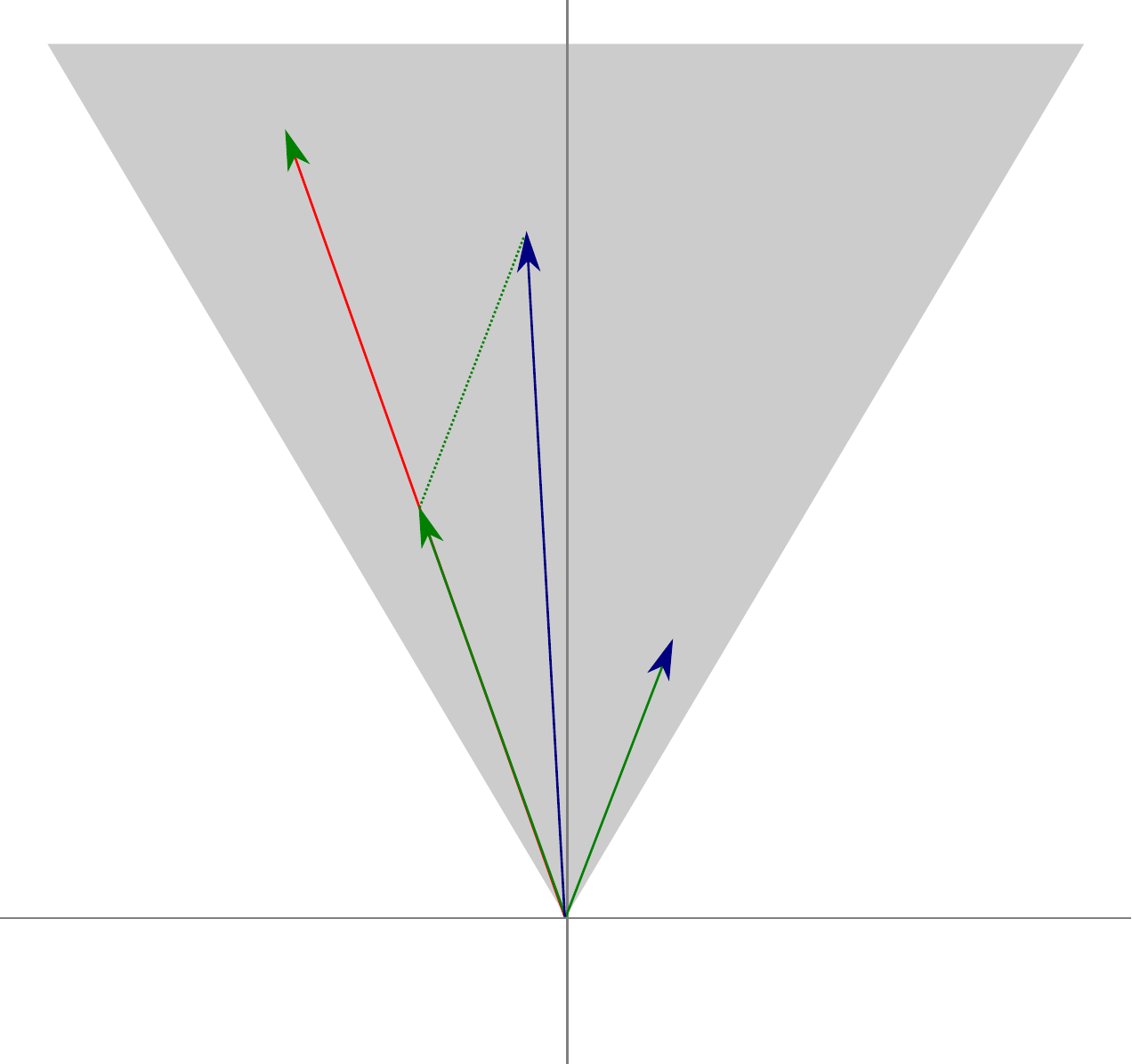' (pdf, eps, ps)
%%
%% To include the image in your LaTeX document, write
%%   \input{<filename>.pdf_tex}
%%  instead of
%%   \includegraphics{<filename>.pdf}
%% To scale the image, write
%%   \def\svgwidth{<desired width>}
%%   \input{<filename>.pdf_tex}
%%  instead of
%%   \includegraphics[width=<desired width>]{<filename>.pdf}
%%
%% Images with a different path to the parent latex file can
%% be accessed with the `import' package (which may need to be
%% installed) using
%%   \usepackage{import}
%% in the preamble, and then including the image with
%%   \import{<path to file>}{<filename>.pdf_tex}
%% Alternatively, one can specify
%%   \graphicspath{{<path to file>/}}
%%
%% For more information, please see info/svg-inkscape on CTAN:
%%   http://tug.ctan.org/tex-archive/info/svg-inkscape
%%
\begingroup%
  \makeatletter%
  \providecommand\color[2][]{%
    \errmessage{(Inkscape) Color is used for the text in Inkscape, but the package 'color.sty' is not loaded}%
    \renewcommand\color[2][]{}%
  }%
  \providecommand\transparent[1]{%
    \errmessage{(Inkscape) Transparency is used (non-zero) for the text in Inkscape, but the package 'transparent.sty' is not loaded}%
    \renewcommand\transparent[1]{}%
  }%
  \providecommand\rotatebox[2]{#2}%
  \newcommand*\fsize{\dimexpr\f@size pt\relax}%
  \newcommand*\lineheight[1]{\fontsize{\fsize}{#1\fsize}\selectfont}%
  \ifx\svgwidth\undefined%
    \setlength{\unitlength}{366.18125102bp}%
    \ifx\svgscale\undefined%
      \relax%
    \else%
      \setlength{\unitlength}{\unitlength * \real{\svgscale}}%
    \fi%
  \else%
    \setlength{\unitlength}{\svgwidth}%
  \fi%
  \global\let\svgwidth\undefined%
  \global\let\svgscale\undefined%
  \makeatother%
  \begin{picture}(1,0.94064793)%
    \lineheight{1}%
    \setlength\tabcolsep{0pt}%
    \put(0,0){\includegraphics[width=\unitlength,page=1]{./Pics/cone1.pdf}}%
    \put(0.39891702,0.46295132){\color[rgb]{0,0,0}\makebox(0,0)[lt]{\lineheight{1.25}\smash{\begin{tabular}[t]{l}$A$\end{tabular}}}}%
    \put(0.56905798,0.38704233){\color[rgb]{0,0,0}\makebox(0,0)[lt]{\lineheight{1.25}\smash{\begin{tabular}[t]{l}$B$\end{tabular}}}}%
    \put(0.39891702,0.74302963){\color[rgb]{0,0,0}\makebox(0,0)[lt]{\lineheight{1.25}\smash{\begin{tabular}[t]{l}$A + B$\end{tabular}}}}%
    \put(0.23924626,0.83726154){\color[rgb]{0,0,0}\makebox(0,0)[lt]{\lineheight{1.25}\smash{\begin{tabular}[t]{l}$aA$\end{tabular}}}}%
  \end{picture}%
\endgroup%

        \label{fig:cone}%
        }
        \hspace{1cm}
        \subfloat[Partial order induced by the cone.]{%
            \def\svgwidth{7cm}
        	%% Creator: Inkscape 1.1 (c4e8f9e, 2021-05-24), www.inkscape.org
%% PDF/EPS/PS + LaTeX output extension by Johan Engelen, 2010
%% Accompanies image file './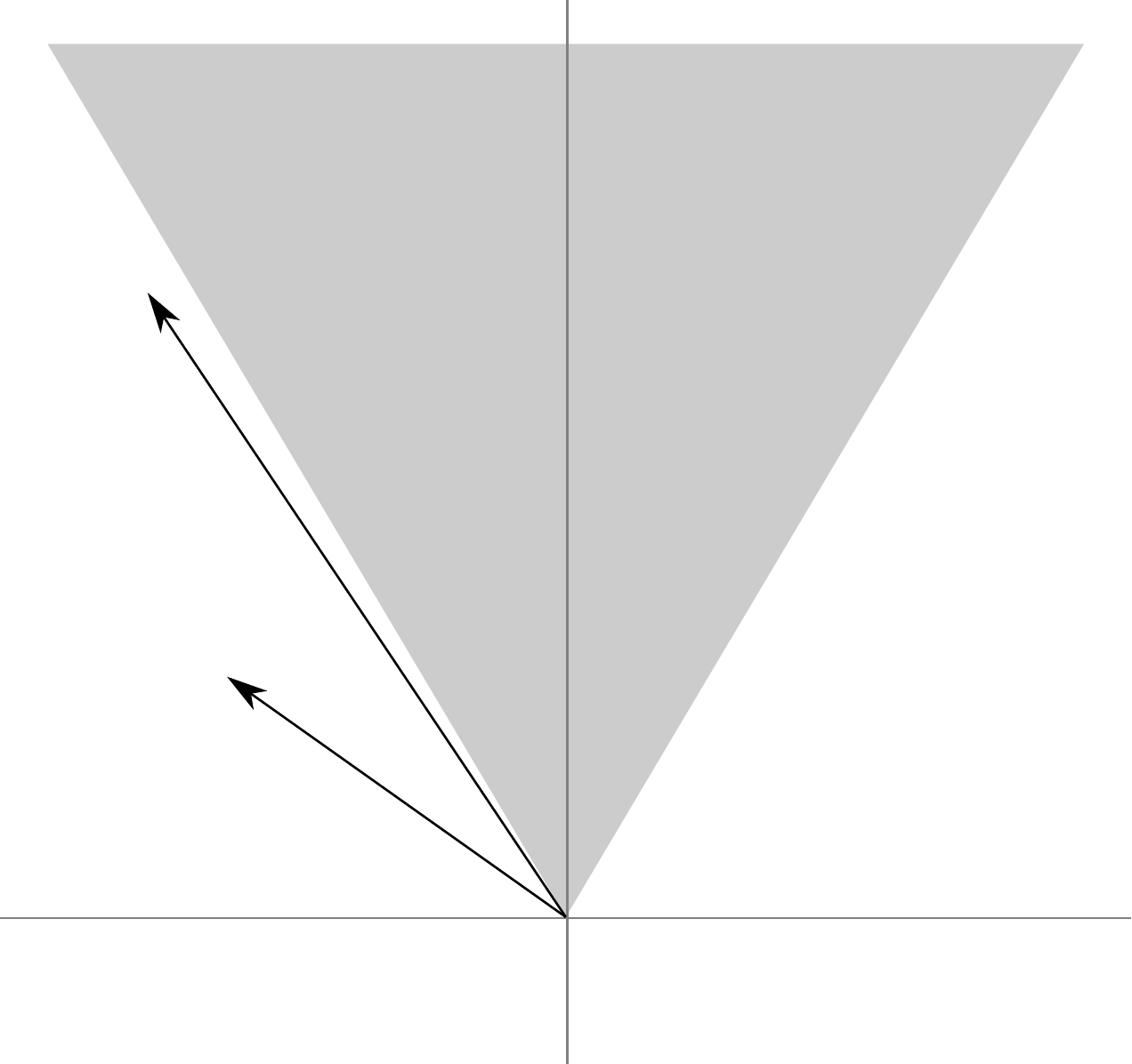' (pdf, eps, ps)
%%
%% To include the image in your LaTeX document, write
%%   \input{<filename>.pdf_tex}
%%  instead of
%%   \includegraphics{<filename>.pdf}
%% To scale the image, write
%%   \def\svgwidth{<desired width>}
%%   \input{<filename>.pdf_tex}
%%  instead of
%%   \includegraphics[width=<desired width>]{<filename>.pdf}
%%
%% Images with a different path to the parent latex file can
%% be accessed with the `import' package (which may need to be
%% installed) using
%%   \usepackage{import}
%% in the preamble, and then including the image with
%%   \import{<path to file>}{<filename>.pdf_tex}
%% Alternatively, one can specify
%%   \graphicspath{{<path to file>/}}
%%
%% For more information, please see info/svg-inkscape on CTAN:
%%   http://tug.ctan.org/tex-archive/info/svg-inkscape
%%
\begingroup%
  \makeatletter%
  \providecommand\color[2][]{%
    \errmessage{(Inkscape) Color is used for the text in Inkscape, but the package 'color.sty' is not loaded}%
    \renewcommand\color[2][]{}%
  }%
  \providecommand\transparent[1]{%
    \errmessage{(Inkscape) Transparency is used (non-zero) for the text in Inkscape, but the package 'transparent.sty' is not loaded}%
    \renewcommand\transparent[1]{}%
  }%
  \providecommand\rotatebox[2]{#2}%
  \newcommand*\fsize{\dimexpr\f@size pt\relax}%
  \newcommand*\lineheight[1]{\fontsize{\fsize}{#1\fsize}\selectfont}%
  \ifx\svgwidth\undefined%
    \setlength{\unitlength}{366.18125102bp}%
    \ifx\svgscale\undefined%
      \relax%
    \else%
      \setlength{\unitlength}{\unitlength * \real{\svgscale}}%
    \fi%
  \else%
    \setlength{\unitlength}{\svgwidth}%
  \fi%
  \global\let\svgwidth\undefined%
  \global\let\svgscale\undefined%
  \makeatother%
  \begin{picture}(1,0.94064793)%
    \lineheight{1}%
    \setlength\tabcolsep{0pt}%
    \put(0,0){\includegraphics[width=\unitlength,page=1]{./Pics/cone2.pdf}}%
    \put(0.17118982,0.35824915){\color[rgb]{0,0,0}\makebox(0,0)[lt]{\lineheight{1.25}\smash{\begin{tabular}[t]{l}$A$\end{tabular}}}}%
    \put(0.08742805,0.69329617){\color[rgb]{0,0,0}\makebox(0,0)[lt]{\lineheight{1.25}\smash{\begin{tabular}[t]{l}$B$\end{tabular}}}}%
    \put(0,0){\includegraphics[width=\unitlength,page=2]{./Pics/cone2.pdf}}%
    \put(0.41246005,0.49510855){\color[rgb]{0,0,0}\makebox(0,0)[lt]{\lineheight{1.25}\smash{\begin{tabular}[t]{l}$ B - A$\end{tabular}}}}%
  \end{picture}%
\endgroup%

            \label{fig:partial order}%
            }
    \caption{
    Cone structure and the partial order.}
    \label{fig:cone-order}
\end{figure}

\begin{expl}[Cone on \( \bbR^{2} \)]
Let \( V = \bbR^{2} \) a two dimensional Euclidean vector space. Define the cone \( \pC \) on \( V \) as the set of all those elements that in the canonical basis of \( \bbR^{2} \) have positive coefficients, that is
\begin{align}
    \pC = \{ v \in \bbR^{2} \colon v = (v_1,v_2), \text{ and } v_1 \geq 0, \ v_2 \geq 0 \}.
\end{align}
It is easy to check that \( \pC \) is indeed a cone, and thus this provides a partial ordering for the space \( \bbR^{2} \). However, this cone (and thus the partial ordering) depends on the basis and is not preserved under rotations.
\end{expl}

Since a (positive) cone defines a set of positive elements, it stands to reason that a set of positive elements --- defined in some meaningful way --- will  yield a cone. As is common, we can borrow the intuition of positive elements from bounded linear operators to define positive elements in a \Cs-algebra.

\begin{defi}[\textbf{Positive element}]
A \textit{positive element} of a \Cs-algebra is a self-adjoint element with a positive spectrum.
By \( \mfA^{+} \) we denote the set of all positive elements of \( \mfA \).
\end{defi}

\begin{expl}
Consider a \Cs-algebra \( \cc{X} \) on a compact Hausdorff space \( X \). Then \( f \in \cc{X} \) is self-adjoint if and only if it is real-valued. Since, the spectrum of \( f \) is its image, it follows that \( f \) is positive (in the \Cs-algebra sense) if and only if \( f(x) \geq 0 \) for each \( x \in X \). Thus the notion of positivity for elements in the \Cs-algebra \( \cc{X} \) is consistent with the notion of positivity for functions.
\end{expl}

\begin{expl}
If \( B \) is a bounded linear operator on some Hilbert space \( H \) then we call it positive, if \( \langle x , B x \rangle \geq 0 \) for every \( x \in H \). We have seen in our tutorial class, that \( B \) is positive if and only if it is self-adjoint and its spectrum is positive. Hence, the notion of positivity for elements in the \Cs-algebra \( B(H) \) is consistent with the notion of positivity for bounded linear operators.
\end{expl}

Even though, positivity is a statement about ``signs'', it has a very useful characterization in terms of the norm. We state this characterization in the following lemma.

\begin{lem}\label{lem:positivity}
    A self-adjoint element \( A \) in a \Cs-algebra is positive if and only if
    \begin{align}
         \| A - a I \| \leq a,
    \end{align}
     for some \( a \in \bbR \) with \( a \geq \| A \| \).
\end{lem}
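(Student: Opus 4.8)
The plan is to reduce both implications to elementary statements about where $\spectr(A)$ sits on the real line, by first computing the norm $\|A - aI\|$ spectrally. The central observation I would establish is the identity
\[
\|A - aI\| = \sup_{t \in \spectr(A)} |t - a|,
\]
valid for every real $a$. Since $A$ is self-adjoint, so is $A - aI$, and Proposition~\ref{prop:spectral radius and norm in cstar}\,\ref{it:a} gives $\|A - aI\| = r_{\mfA}(A - aI)$. Applying the polynomial spectral mapping of Proposition~\ref{prop:spectral for polys} to $p(t) = t - a$ yields $\spectr(A - aI) = \{t - a : t \in \spectr(A)\}$, and taking the supremum of absolute values produces the displayed formula. (Alternatively, one could invoke the isometric $\star$-isomorphism $\varphi$ of Theorem~\ref{thm:spectral calculus} and write $A - aI = \varphi(g)$ for $g(t) = t - a$, so that $\|A - aI\|$ is just the supremum norm of $g$ on $\spectr(A)$; this is a little heavier than needed.)

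With this identity in hand, the reverse direction is immediate: suppose $a \geq \|A\|$ satisfies $\|A - aI\| \leq a$. Then $|t - a| \leq a$ for every $t \in \spectr(A)$, which unpacks to $0 \leq t \leq 2a$, so in particular $\spectr(A) \subseteq \bbR^{+}$. Since $A$ is self-adjoint with positive spectrum, this is exactly the statement that $A$ is positive.

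For the forward direction, I would assume $A$ is positive, so $\spectr(A) \subseteq \bbR^{+}$. Because $A$ is self-adjoint, Proposition~\ref{prop:spectral radius and norm in cstar}\,\ref{it:a} also gives $r_{\mfA}(A) = \|A\|$, and combining this with positivity confines the spectrum to $[0, \|A\|]$. Choosing $a = \|A\|$, which satisfies $a \geq \|A\|$, every $t \in \spectr(A)$ obeys $|t - \|A\|| = \|A\| - t \leq \|A\|$, so the supremum identity gives $\|A - aI\| \leq \|A\| = a$. This exhibits the required scalar $a$ and closes the equivalence.

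The only genuine content lies in establishing the spectral formula for $\|A - aI\|$; everything afterward is bookkeeping about intervals on the real line. The step most worth stating carefully is the appeal to $r_{\mfA}(A) = \|A\|$ for self-adjoint elements, since it is what guarantees that positivity of the spectrum actually confines it to $[0, \|A\|]$ and thereby lets me take $a = \|A\|$. I do not expect any real obstacle beyond remembering to invoke self-adjointness at the two places where the norm-equals-spectral-radius identity is used.
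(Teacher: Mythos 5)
Your proof is correct and follows essentially the same route as the paper's: both reduce the claim to the identity \( \| A - aI \| = \sup\{ |t-a| \colon t \in \spectr(A) \} \) via Proposition~\ref{prop:spectral radius and norm in cstar}\,\ref{it:a} and the spectral mapping of Proposition~\ref{prop:spectral for polys}, then finish with interval bookkeeping on the real line. The only cosmetic difference is that the paper handles both directions in one chain of equalities (using \( \spectr(A) \subseteq [-a,a] \) to drop the absolute value), whereas you split the two implications and take \( a = \|A\| \) explicitly in the forward direction.
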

\begin{proof}
Fix \( a \geq \| A \| \), then \( \spectr(A) \subseteq [-a, a] \), and
\begin{align}
    \| A - a I \| = r(A - aI) = \sup \{ | t - a |\colon t \in \spectr(A) \} = \sup \{ (a -t )\colon t \in \spectr(A) \}.
\end{align}
Where the first equality holds by \ref{prop:spectral radius and norm in cstar} \ref{it:a} and the second equality holds by \ref{prop:spectral for polys}.
Hence \( \| A - a I \| \leq a \) if and only if \( \spectr(A) \subseteq \bbR^{+} \).
\end{proof}

Now, we can show that the set of positive elements in a \Cs-algebra defines a closed positive cone.

\begin{thm}\label{thm:positive cone}
Let \( \mfA \) be a \Cs-algebra. Then,
\begin{enumerate}
    \item\label{it:positive cone a}
    \( \mfA^{+} \) is closed in \( \mfA \);

    \item\label{it:positive cone e}
    for \( A \in \mfA^{+} \), if \( -A \) is also in \( \mfA^{+} \) then \( A = 0 \).

    \item\label{it:positive cone b}
    for \( A \in \mfA^{+} \) and \(  a \in \bbR^{+} \) we have \( a A \in \mfA^{+} \);

    \item\label{it:positive cone c}
    for \( A, B \in \mfA^{+} \) we have \( A + B \in \mfA^{+} \);

    \item\label{it:positive cone d}
    for \( A,B \in \mfA^{+} \) and \( AB = BA \) we have \( AB \in \mfA^{+} \);

\end{enumerate}
\end{thm}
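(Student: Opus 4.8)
The plan is to treat the five claims in increasing order of difficulty, using the norm characterization of positivity from Lemma \ref{lem:positivity} as the workhorse for closedness and additivity, and the square root from Corollary \ref{cor:square root} for the product. First I would record the common observation that self-adjointness is preserved by every operation involved: the involution is isometric, hence continuous, so a norm limit of self-adjoint elements is self-adjoint, and $aA$, $A+B$, and $AB$ (when $AB=BA$) are manifestly self-adjoint whenever $A,B$ are. This reduces each claim to a statement about the spectrum lying in $\bbR^{+}$.

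The two easy parts come first. For \ref{it:positive cone b}, the polynomial spectral mapping (Proposition \ref{prop:spectral for polys} with $p(t)=at$) gives $\spectr(aA)=a\,\spectr(A)\subseteq\bbR^{+}$ for $a\geq 0$. For \ref{it:positive cone e}, if both $A$ and $-A$ are positive then $\spectr(A)\subseteq\bbR^{+}$ while $\spectr(A)=-\spectr(-A)\subseteq\bbR^{-}$, forcing $\spectr(A)=\{0\}$; since $A$ is self-adjoint, Proposition \ref{prop:spectral radius and norm in cstar}\,\ref{it:a} gives $\|A\|=r_{\mfA}(A)=0$, hence $A=0$. Next, parts \ref{it:positive cone a} and \ref{it:positive cone c} both rely on Lemma \ref{lem:positivity}, which characterizes positivity of self-adjoint $A$ by $\|A-aI\|\leq a$ for some $a\geq\|A\|$. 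For the sum, I would set $a=\|A\|$, $b=\|B\|$, obtain $\|A-aI\|\leq a$ and $\|B-bI\|\leq b$ from the lemma, and use the triangle inequality:
\begin{align}
\|(A+B)-(a+b)I\|\leq\|A-aI\|+\|B-bI\|\leq a+b ,
\end{align}
noting $\|A+B\|\leq a+b$; reading the lemma backwards (with value $a+b$) then yields $A+B\in\mfA^{+}$. For closedness, I would take $A_{n}\in\mfA^{+}$ with $A_{n}\to A$, fix the uniform constant $a=\sup_{n}\|A_{n}\|<\infty$ (so $a\geq\|A_{n}\|$ for all $n$ and $a\geq\|A\|$), apply the lemma to each $A_{n}$ to get $\|A_{n}-aI\|\leq a$, and pass to the limit to obtain $\|A-aI\|\leq a$; one final application of the lemma gives $A\in\mfA^{+}$.

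The product \ref{it:positive cone d} is the main obstacle, because the naive route would require $X^{\star}X\geq 0$ for arbitrary $X$, which is not yet available at this point of the development. Instead I would exploit commutativity together with Corollary \ref{cor:square root}. Let $H=A^{1/2}$ and $K=B^{1/2}$ be the positive square roots. By Corollary \ref{cor:square root}, $H$ commutes with everything that commutes with $A$; since $AB=BA$ this gives $HB=BH$, and then $K$, which commutes with everything commuting with $B$, commutes with $H$. Thus $H$ and $K$ are commuting self-adjoint elements with $AB=H^{2}K^{2}=(HK)^{2}$, and $HK$ is self-adjoint since $(HK)^{\star}=K^{\star}H^{\star}=KH=HK$. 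Finally the polynomial spectral mapping with $p(t)=t^{2}$ gives $\spectr\big((HK)^{2}\big)=\{\,t^{2}:t\in\spectr(HK)\,\}\subseteq\bbR^{+}$, so $AB=(HK)^{2}\in\mfA^{+}$. The only delicate point is verifying the commutation chain $HB=BH$ and $HK=KH$, which is exactly what the commutant property in Corollary \ref{cor:square root} is designed to supply.
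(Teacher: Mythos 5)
Your proposal is correct and follows essentially the same route as the paper's proof: Lemma \ref{lem:positivity} plus the triangle inequality for closedness and additivity, spectral mapping for the scalar and the trivial-intersection parts, and the commuting square roots from Corollary \ref{cor:square root} with $AB=(HK)^{2}$ for the product. The only differences are cosmetic --- you use a uniform constant $\sup_{n}\|A_{n}\|$ for closedness where the paper takes $a=\|A\|$ directly, and the polynomial spectral mapping where the paper cites Corollary \ref{cor:spectr of function} --- neither of which changes the substance of the argument.
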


\begin{proof}
\ref{it:positive cone a}
From Lemma \ref{lem:positivity} we have
\begin{align}
    \mfA^{+} = \{ A \in \mfA \colon A = A^{\star} \text{ and } \big\| A - \|A \| I \big\|\leq \| A \| \}.
\end{align}
By continuity of the norm, \( \mfA^{+} \) is closed.

\ref{it:positive cone e}
If \( A \) and \( -A \) are both in \( \mfA^{+} \), then \( \spectr(A) \subseteq \bbR^{+} \cap (-\bbR^{+}) = \{ 0 \} \); so \( \| A \| = r(A) = 0 \).

\ref{it:positive cone b}
For \( A \in \mfA^{+} \) and \( a \geq 0 \), the element \( a A \) is self-adjoint and
\begin{align}
    \spectr(aA) = \{ a \lambda \colon \lambda \in \spectr(A) \} \subseteq \bbR^{+}.
\end{align}

\ref{it:positive cone c}
For \( A,B \in \mfA^{+} \) we have by Lemma \ref{lem:positivity} that
\begin{align}
    \big\| A - \| A \| I \big\| \leq \| A \| &&
    \big\| B - \| B \| I \big\| \leq \| B\|.
\end{align}
Therefore,
\begin{align}
    \big\| A + B - (\|A \| + \| B \|) I \big\| \leq \| A \| + \| B \|.
\end{align}
With \( a = \| A \| + \| B \| \geq \| A + B \| \) it follows again from Lemma \ref{lem:positivity} that \( A + B \in \mfA^{+} \).

\ref{it:positive cone d}
For \( A,B \in \mfA^{+} \), let \( H \) be the square root of \( A \) defined as in Corollary \ref{cor:square root}, and \( K \) be the square root of \( B \).
By Corollary \ref{cor:square root}, if \( A \) commutes with \( B \) then also \( H \) commutes with \( B \), and by the same corollary, \( K \) commutes with \( H \).
Thus, \( AB =HH KK =  HK HK = (HK)^{2}\). Since \( H \) and \( K \) are both self-adjoint and since they commute, \( HK \) is self-adjoint, and thus its spectrum is a subset of a real line. By Corollary \ref{cor:spectr of function} we have \( \spectr(AB) = \spectr \big( (HK)^{2} \big) = \big( \spectr(HK) \big)^{2} \subseteq \bbR^{+}\).
\end{proof}

Condition, \ref{it:positive cone e}, \ref{it:positive cone b}, and \ref{it:positive cone c} state that \( \mfA^{+} \) is a (positive) cone. Hence, we can use it to define partial ordering on the \Cs-algebra \( \mfA \). Using this ordering, we can say that an element \( A \) is positive if \( A \geq 0 \). Also notice, that in one of our exercises we have shown that \( \| A \|I - A \) is always positive, if \( A \) is self-adjoint. Hence, for a unital \Cs-algebra the unit of the algebra is also the order unit for this partial ordering,
\begin{align}
    -\| A \| I \leq A \leq \| A \|I.
\end{align}

\subsubsection{States of the \Cs-algebra}
So far, we have discussed the elements of the algebra. However, a great deal of structure on the \Cs-algebra comes from its dual space: the space of continuous linear functionals on the algebra. We have encountered the power of linear functionals in the Theorem \ref{thm:non-empty spectrum}. Now we approach the topic of linear functionals more systematically. With the aid of the positive cone on the \Cs-algebra we can define a cone on its dual space.

\begin{info}[Reminder.]
    We briefly recall some basic definitions in topology.
\begin{defi}[Convex set.]
A set \( Y \) in a vector space \( V \) is called \textit{convex} if \( x \) and \(y \) in \( Y \) imply that the line \(\{ b y_1 + (1-b) y_2 \colon 0\leq  b  \leq1  \} \)  is also in \( Y \).
\end{defi}

\begin{defi}[Locally convex space.]
A \textit{locally convex space} is a topological vector space, in which the topology has a base consisting of convex sets. That means:
Let \( V\) be a topological vector space with topology \( \tau \).
The basis of \( \tau \) is the family \( \beta \subset \tau \) such that every open set \( O \in \tau \) is of the form \( O = \cup_{\alpha} B_{\alpha} \) for some family \( \{ B_{\alpha} \}\subset \beta \).
If each set in \( \beta \) is convex, then \( V \) is a locally convex space.
\end{defi}
\begin{defi}[Extreme points.]
Let \( V \) be a locally convex space and \( Y \subset V \) a convex subset of \( V \). Then a point \( y_0 \in Y\)  is called an \textit{extreme point} of \( Y \) if  \( y_0 = (1-a) y_1 + a y_2 \), with \( 0< a <1 \) and \( y_1, y_2 \in Y \) implies \( y_1 = y_2 = y_0 \).
\end{defi}
\begin{thm}[Krein-Milman]
If \( X \) is a non-empty compact convex set in a locally convex space \( V \), then \( X \) has an extreme point. Moreover, \( X \) is the closure of the convex hull of all extreme points of \( X \).
\end{thm}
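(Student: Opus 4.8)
The plan is to prove the two assertions separately, organizing everything around the notion of an \emph{extremal subset} of $X$. Call a nonempty closed subset $F \subseteq X$ extremal if whenever $a y_1 + (1-a) y_2 \in F$ with $y_1, y_2 \in X$ and $0 < a < 1$, then both $y_1$ and $y_2$ lie in $F$. With this language, $x$ is an extreme point of $X$ exactly when $\{x\}$ is extremal, and $X$ itself is always extremal. I would first record two stability properties to be used repeatedly: a nonempty intersection of extremal subsets is again extremal, and for any continuous linear functional $\ell$ the subset of an extremal $F$ on which $\Re \ell$ attains its maximum over $F$ is again an extremal subset of $X$ (nonempty and closed because $F$ is compact). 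Both are short direct checks from the definition.

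For the existence of an extreme point I would order the family of all extremal subsets of $X$ by reverse inclusion and invoke Zorn's lemma. This is where compactness is essential: any chain of these (nonempty, closed) sets has the finite intersection property, so its total intersection is nonempty, and by the first stability property it is extremal; hence every chain has a lower bound, and a minimal extremal subset $F_0$ exists. I then claim $F_0$ is a single point. If it contained distinct $p \neq q$, the separation form of the Hahn--Banach theorem supplies a continuous linear functional $\ell$ with $\Re\ell(p) \neq \Re\ell(q)$; the maximizing subset of $F_0$ for $\Re\ell$ is then a \emph{proper}, nonempty, extremal subset of $F_0$, contradicting minimality. Thus $F_0 = \{e\}$ and $e$ is an extreme point of $X$.

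For the second assertion let $K$ be the closed convex hull of the extreme points of $X$; since $X$ is closed and convex, $K \subseteq X$. Suppose toward a contradiction that some $x_0 \in X \setminus K$. As $K$ is closed and convex and $x_0 \notin K$, the separation form of Hahn--Banach strictly separates $x_0$ from $K$: there is a continuous linear functional $\ell$ with $\Re\ell(x_0) > \sup_{k \in K}\Re\ell(k)$. Let $M$ be the subset of $X$ on which $\Re\ell$ attains its maximum; by compactness $M$ is nonempty, and it is an extremal subset of $X$. Applying the first part to $M$ (which is compact, and whose extreme points are extreme points of $X$) produces an extreme point $e \in M$ of $X$. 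Then $e \in K$, so $\Re\ell(e) \le \sup_K \Re\ell < \Re\ell(x_0) \le \max_X \Re\ell = \Re\ell(e)$, a contradiction. Hence $K = X$.

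The main obstacle I anticipate is the Zorn's-lemma step, specifically verifying that the intersection over a chain of extremal sets is nonempty --- precisely the point at which compactness is indispensable --- and again extremal, together with phrasing the two separation arguments carefully in the complex case by working throughout with $\Re\ell$. The remaining pieces (that the level sets of $\ell$ are extremal and that extreme points of an extremal subset are extreme points of the ambient set) are routine once the definition of extremal subset is fixed.
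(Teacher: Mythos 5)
Your proof is correct; note that the paper itself states Krein--Milman only as a background reminder inside an info box and gives no proof at all, so there is no argument on the paper's side to compare against. What you wrote is the classical textbook route: define extremal (face) subsets, check the two stability properties (intersections of faces are faces; the maximizing set of $\Re\ell$ on a face is a face), extract a minimal face via Zorn's lemma using the finite intersection property of a chain of nonempty compact sets, show minimality forces a singleton by separating two points with a continuous functional, and then run the second Hahn--Banach separation to show the closed convex hull $K$ of the extreme points exhausts $X$. Each step you label routine does check out: the maximizing set $M$ in the second part is compact and convex, its extreme points are extreme points of $X$ precisely because $M$ is a face, and the chain $\Re\ell(e) \le \sup_{K}\Re\ell < \Re\ell(x_0) \le \max_{X}\Re\ell = \Re\ell(e)$ is a genuine contradiction. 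One caveat worth recording explicitly: both your argument and the theorem itself require the locally convex space to be Hausdorff (equivalently, that continuous linear functionals separate points). The paper's own definition of a locally convex space omits this, and without it the statement is false --- $\bbR$ with the indiscrete topology is a compact convex set in a locally convex space with no extreme points --- while your appeal to Hahn--Banach to separate $p$ from $q$ in the minimal face silently uses it. This is a defect of the statement as recalled in the paper, not of your proof, but you should add the hypothesis when you invoke separation.
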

\end{info}

After this short reminder we are equipped with everything to dive into the discussion of linear functional on a \Cs-algebra. We begin with some general definitions.

\begin{defi}\label{defi:states}
Let \( V \) be a partially ordered vector space with order unit \( I \). Then,
\begin{enumerate}
\item\label{defi: states it1}
a linear functional \( \rho \) on \( V \) is called \textit{positive} if \( \rho(v) \geq 0  \) for every \( v \geq 0 \) with \( v \in V \);

\item\label{defi: states it2}
a positive linear functional \( \rho \) on \( V \) is called a \textit{state} if \( \rho(I) = 1 \);

\item
the \textit{state space} \( \mcS(V) \) is the set of all states on \( V \).
\end{enumerate}
\end{defi}

If \( \rho \) is a linear functional on a \Cs-algebra \( \mfA \), we can define another linear functional \( \rho^{\star} \), by the equation \( \rho^{\star} (A) = \overline{\rho(A^{\star})} \) for every \( A \in \mfA \).

\begin{defi}
Let \( \mfA \) be a \Cs-algebra and \( \rho \) a linear functional on it.
If  \( \rho(A) = \overline{\rho(A^{\star})} \) for every \( A \in \mfA \),
 we call \( \rho \) \textit{hermitian}.
\end{defi}

Recall that every element of a \Cs-algebra uniquely decomposes into its real and imaginary part. Using this decomposition we can rephrase the condition for a state to be hermitian only in terms of self-adjoint operators:
let \( A \) be an element in a \Cs-algebra \( \mfA \) with the real part \( H \) and imaginary part \( K \), then for a linear functional \( \rho \) on \( \mfA \) we have,
\begin{align}
    \rho(A) = \rho(H + \imath K) =  \rho(H) + \imath \rho(K), &&
    \overline{\rho(A^{\star})} = \overline{\rho(H - \imath K)} = \overline{\rho(H)} + \imath \overline{\rho(K)}.
\end{align}
It follows that \( \rho \) is hermitian if and only if \( \rho(H) = \overline{\rho(H)} \) for every self-adjoint \( H \in \mfA \).

Using this characterization of hermitian functionals we can deduce that a positive linear functional is always hermitian: Let \( A \) be self-adjoint in \( \mfA \) and \( \rho \) a positive linear functional on \( \mfA \). Since \( \|A \| I \pm A \geq 0 \) (as we already have seen) it follows from positivity of \( \rho \) that \( \rho(\|A \| I \pm A) \geq 0 \). In particular, \( \rho(\|A \| \pm A) \) is real, and from
\begin{align}\label{hermitian functional}
    \rho(A) = \frac{1}{2} \Big( \rho\big(\|A \| I + A\big) - \rho\big(\| A \| I - A\big) \Big),
\end{align}
it follows that \( \rho(A) \) itself is real.

Additionally, we can characterize positive linear functionals by their norm as follows.
\begin{thm}\label{thm:positive functionals}
A linear functional \( \rho \) on a (unital) \Cs-algebra \( \mfA \) is positive if and only if \( \rho \) is bounded and \( \| \rho \| = \rho(I) \).
\end{thm}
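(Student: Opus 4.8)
The plan is to prove the two implications separately: the forward one rests on the order structure together with a phase-rotation trick, and the converse on a short computation exploiting the C*-property.

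\textbf{Forward direction} (positive $\Rightarrow$ bounded with $\| \rho \| = \rho(I)$). First I would record the monotonicity that follows from positivity and linearity: if $A \leq B$ then $\rho(A) \leq \rho(B)$, since $\rho(B) - \rho(A) = \rho(B - A) \geq 0$. Applied to the order-unit sandwich $-\|H\| I \leq H \leq \|H\| I$, valid for every self-adjoint $H$ as noted after Theorem \ref{thm:positive cone}, this gives $|\rho(H)| \leq \|H\|\,\rho(I)$; here I use that $\rho(H)$ is real because positive functionals are hermitian, as established just before the theorem. To pass to a general $A$, I write $\rho(A) = |\rho(A)|\,e^{\imath \theta}$ and set $B = e^{-\imath\theta} A$, so that $\rho(B) = |\rho(A)|$ is real and nonnegative. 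Decomposing $B = H + \imath K$ into its real and imaginary parts, hermiticity forces $\rho(K) = 0$ and hence $\rho(B) = \rho(H)$, whence $|\rho(A)| = \rho(H) \leq \|H\|\,\rho(I) \leq \|B\|\,\rho(I) = \|A\|\,\rho(I)$, using $\|H\| \leq \|B\|$ and the isometry of the involution. This proves $\|\rho\| \leq \rho(I)$; combined with the trivial bound $\rho(I) \leq \|\rho\|\,\|I\| = \|\rho\|$ (recall $\|I\| = 1$, since $I$ is unitary), both boundedness and $\|\rho\| = \rho(I)$ follow.

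\textbf{Converse} (bounded with $\|\rho\| = \rho(I)$ $\Rightarrow$ positive). After normalizing so that $\rho(I) = \|\rho\| = 1$ (the case $\rho(I) = 0$ gives $\rho = 0$, which is positive), I would argue in two stages. First I show $\rho$ is real on self-adjoint elements: for self-adjoint $H$ with $\|H\| \leq 1$, write $\rho(H) = \alpha + \imath \beta$ and evaluate $\rho$ on $H + \imath t I$ for $t \in \bbR$. The C*-property yields $\|H + \imath t I\|^2 = \|H^2 + t^2 I\| = \|H\|^2 + t^2$, as $H^2$ and $t^2 I$ are commuting positive elements, so $\alpha^2 + (\beta + t)^2 \leq \|\rho\|^2 (\|H\|^2 + t^2) \leq 1 + t^2$. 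This reduces to $\alpha^2 + \beta^2 + 2\beta t \leq 1$ for all real $t$, which forces $\beta = 0$. Second, given $A \geq 0$ with $\|A\| \leq 1$, the element $I - A$ is self-adjoint with spectrum in $[0,1]$, so $\|I - A\| = r(I - A) \leq 1$ by Proposition \ref{prop:spectral radius and norm in cstar}\,\ref{it:a}; the first stage makes $\rho(I - A)$ real, and boundedness gives $\rho(I - A) \leq \|I - A\| \leq 1$. Hence $\rho(A) = \rho(I) - \rho(I - A) \geq 0$, and rescaling covers all positive elements.

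I expect the main obstacle to be the hermiticity step of the converse: the norm identity $\|H + \imath t I\|^2 = \|H\|^2 + t^2$ and the device of letting $t$ range over all of $\bbR$ to annihilate the imaginary part $\beta$ is the one genuinely inventive move, whereas the remainder is bookkeeping with the partial order, the isometric involution, and the spectral-radius formula.
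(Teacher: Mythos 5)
Your proposal is correct and follows essentially the same route as the paper's proof: the forward direction is the same phase-rotation argument (your \( B = e^{-\imath\theta}A \) is the paper's \( aA \) with \( |a|=1 \)) combined with the order bound \( H \leq \|H\|I \) and hermiticity of positive functionals, and your converse uses the paper's two devices --- the \Cs-norm computation \( \|H + \imath t I\|^{2} \leq \|H\|^{2} + t^{2} \) to annihilate imaginary parts, and the spectral containment \( \spectr(I - sA) \subseteq [0,1] \) giving \( \|I - sA\| \leq 1 \) to force the real part to be nonnegative. The only difference is organizational: you prove hermiticity on all self-adjoint elements first and then deduce positivity, whereas the paper first shows the real part of \( \rho(A) \) is nonnegative and only afterwards kills the imaginary part; both orderings are valid.
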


\begin{proof}
    For the ``only if'' direction, assume that \( \rho \) is positive. Let \( A \) be in \( \mfA \), choose \( a \in \bbC \) such that \( | a | =1 \) and \( a \rho(A) \geq 0 \), and let \( H \) be the real part of \( aA \). Since \( \| H \| \leq \| A \|  \) we have \( H \leq \| H \| I \leq \| A \| I \) and thus
\begin{align}
    \| A \| \rho(I) - \rho(H) = \rho(\| A \| I - H) \geq 0.
\end{align}
Since \( \rho \) is hermitian we have \( \rho(\bar{a}A^{\star}) =  \overline{\rho(aA)} = \rho(aA)  \) and consequently,
\begin{align}
    | \rho(A) |
        = \frac{1}{2} \rho(aA + \bar{a}A^{\star}) = \rho(H) \leq \rho(I) \| A \|.
\end{align}
Thus, \( \rho \) is bounded, with \( \| \rho \| \leq \rho(I) \). Simultaneously, \( |\rho(I)| \leq \| \rho \| \); implying \( \| \rho \| = \rho(I) \).

For the ``if'' direction, suppose that \( \rho \) is bounded and \( \| \rho \| = \rho(I) \). After suitable rescaling, it is enough to consider the case in which \( \| \rho \| = \rho(I) = 1 \). With \( A \in \mfA^{+}\) and some \( a,b \in \bbR \) assume that \( \rho(A) = a + \imath b \). We want to show that \( a \geq 0 \) and \( b = 0 \). For this we observe that since \( \spectr(A) \subset \bbR^{+} \), there is a small positive number \( s \) such that,
\begin{align}
    \spectr(I -sA) = \{ 1 - st \colon t \in \spectr(A) \} \subseteq [0,1].
\end{align}
So by the spectral radius formula we have \( \| I - sA \| = r(I - sA) \leq 1 \). From the continuity of \( \rho \) we then have
\begin{align}
    1 -sa \leq | 1 - s(a+\imath b)| = | \rho(I-sA) | \leq \| I - sA \| \leq 1,
\end{align}
and \( a \geq 0 \). To show that \( b = 0 \) we proceed as in the proof of Proposition \ref{prop:spectral radius and norm in cstar} \ref{it:b}. For \( n \in \bbN \) consider the element  \( B_n =  A + \imath n bI \) in \( \mfA \). Then,
\begin{align}
    \| B_{n} \|^{2} = \| B^{\star}_{n} B_{n} \| = \| A^{2} + n^{2} b^{2} I \| \leq \| A \|^{2} + n^{2} b^{2},
    \label{eq:bound on bb}
 \end{align}
and we get for every \( n = 1,2,\dots \)
 \begin{align}
     a^{2} + (n^{2} + 2n + 1)b^{2}
     =|a + \imath b + \imath n b|^{2}
     = | \rho(B_{n})|^{2} \leq \| A \|^{2} + n^{2}b^{2},
 \end{align}
where the last inequality follows from the continuity of \( \rho \) and \eqref{eq:bound on bb}. Since \( A \) is bounded, \( b \) has to vanish.
\end{proof}

With this understanding we can summarize the similarities and differences between positive elements in the algebra and its dual.
\begin{center}
\begin{minipage}[l]{0,45\textwidth}
\textbf{An element \( A \) in \( \mfA \) is positive if}
\begin{enumerate}
    \item it is self-adjoint,
    \item \( \spectr(A) \subset \bbR^{+} \),
    \item \( \| A - a I \| \leq a  \).
\end{enumerate}
\end{minipage}\hfill
\begin{minipage}[r]{0,45\textwidth}
\textbf{An element \( \rho \) in \( \mfA' \) is positive if}
\begin{enumerate}
    \item it is hermitian,
    \item \( \rho(A) \geq 0 \) whenever \( A \in \mfA^{+} \),
    \item \( \rho(I) = \| \rho \| < \infty \).
\end{enumerate}
\end{minipage}
\end{center}

Before we proceed with the analysis of states, we recall some ideas regarding the topologies on vector spaces.

\begin{info}[Reminder: \ws-topology.]
Let \( V \) be a linear space and let \( \mcF \) be a family of linear functionals on \( V \). Assume that \( \mcF \) separates points of \( V \) in the sense that if \( x \in V \) is non-zero then there is a \( \rho \in \mcF \) such that \( \rho(x) \neq 0 \).
We define the topology \( \sigma(V, \mcF) \) as the coarsest (weakest, one with the fewer open sets) topology on \( V \) relative to which each element of \( \mcF \) is a continuous mapping from \( V \) into \( \bbC \).

If \( V \) is a locally convex topological vector space, we denote its continuous (topological dual) as \( V' \). If \( x \) is in \( V \) then the equation
\begin{align}
    \hat{x}(\rho) = \rho(x) \qquad (\rho \in V'),
\end{align}
defines a linear functional \( \hat{x} \) on the dual \( V' \). The set \( \hat{V} = \{ \hat{x} \colon x \in V \) \}
is a linear subspace of \( V'' \) --- the algebraic dual space of \( V' \)  (the space of not necessarily continuous functionals on \( V' \)). The set \( \hat{V} \) separates points of \( V' \) since, if \( \rho \in V' \) and \( \hat{x}(\rho) = 0 \) for all \( x \in V \) then \( \rho(x) = 0 \) for all \( x \in V \) and thus \( \rho = 0 \).
The \ws-topology on \( V' \) is the topology \( \sigma(V', \hat{V}) \). We often denote it simply by \( \sigma(V', V) \).
Hence, the \ws-topology is the coarsest topology on \( V' \) for which each functional \( \hat{x} \) is continuous.
\end{info}

A bounded closed set in an infinite dimensional space is not always compact. However, in the \ws-topology this is indeed true (see the useful Lemma below). This makes the \ws-topology an important tool in analysis.

\begin{info}[Useful lemma.]
    Let \( X \) be a normed space and \( X' \) be the topological dual of \( X \). If \( S \) is a bounded \ws-closed subset of \( X' \), then \( S \) is \ws-compact.
\end{info}

From Theorem \ref{thm:positive functionals} we see that a state on \( \mfA \) is a bounded linear functional on \( \mfA \), with \( \| \rho \| = 1 \). Hence, \( \mcS(\mfA) \) is contained in the unit sphere of the Banach dual space \( \mfA' \) and is therefore bounded; it is also convex, since for every \( \rho_{1}, \rho_{2} \in \mcS(\mfA) \) and \( 0 \leq a \leq 1 \) the functional \( \rho = a \rho_{1} + (1-a) \rho_{2} \) is again positive (since positive elements form a cone) and \( \rho(I) = a + (1 - a) = 1 \). Additionally, \( \mcS(\mfA) \) is \ws-closed. To see this, observe that
\begin{align}\label{eq:state space 1}
    \mcS (\mfA) = \{ \rho \in \mfA' \colon \rho(I) = 1, \ \rho(A) \geq 0 \text{ when } A \in \mfA^{+} \}.
\end{align}
For \( A \in \mfA \) let \( \hat{A} \) be the dual element of \( \mfA' \) such that \( \rho(A) = \hat{A}(\rho) \) for every \( \rho \in \mfA' \).
Since each \( \hat{A} \) is by definition continuous in the \ws-topology, the set
 \( C_{A} =  \hat{A}^{-1}\big( [0,\infty) \big) \) --- the pre-image of \( [0,\infty) \) under \( \hat{A} \) --- is closed, and so is the set \( C_{0} =  \hat{I}^{-1}(\{ 1\} ) \). Then \( \mcS(\mfA) = \big( \cap_{A \in \mfA^{+}} C_{A} \big) \cap C_{0} \) is an intersection of \ws-closed sets and thus is \ws-closed.
In summary: \( \mcS(\mfA) \) is a bounded, \ws-closed, convex subset of \( \mfA' \), and thus from the ``useful lemma'' above, it is \ws-compact. By the Krein-Milman theorem \( \mcS(\mfA) \) is the closed convex hull of its extreme points.
\begin{defi}\label{defi: states it3}
An extreme point of \( \mcS(\mfA) \) is called a \textit{pure state} of \( \mfA \). The set of all pure states of \( \mfA \) will be \( \mcP(\mfA) \). The \ws-closure of \( \mcP(\mfA) \) is called the \textit{pure state space} of \( \mfA \).
\end{defi}
\begin{figure}[ht]
\centering
\def\svgwidth{7cm}
%% Creator: Inkscape 1.1 (c4e8f9e, 2021-05-24), www.inkscape.org
%% PDF/EPS/PS + LaTeX output extension by Johan Engelen, 2010
%% Accompanies image file './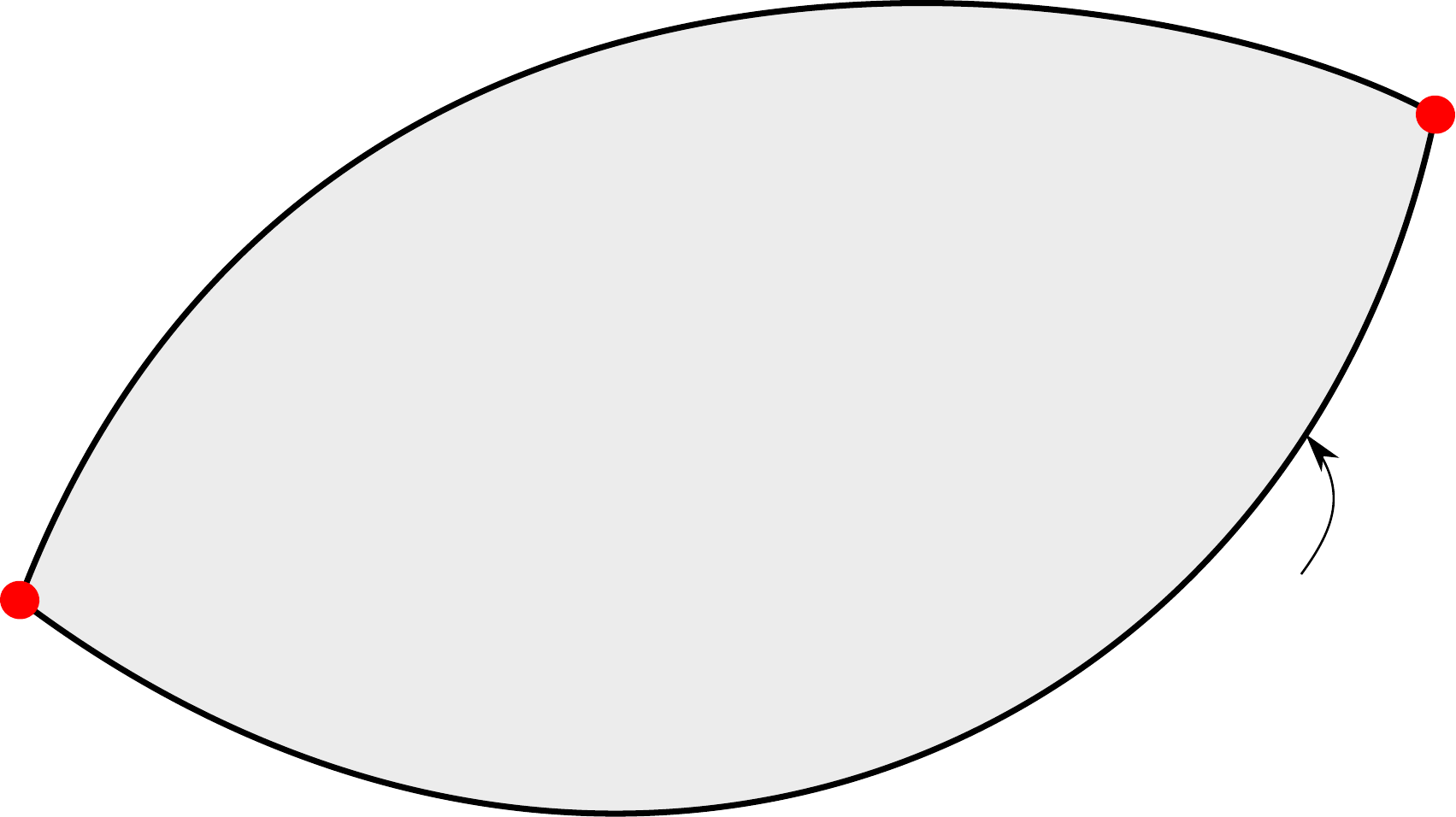' (pdf, eps, ps)
%%
%% To include the image in your LaTeX document, write
%%   \input{<filename>.pdf_tex}
%%  instead of
%%   \includegraphics{<filename>.pdf}
%% To scale the image, write
%%   \def\svgwidth{<desired width>}
%%   \input{<filename>.pdf_tex}
%%  instead of
%%   \includegraphics[width=<desired width>]{<filename>.pdf}
%%
%% Images with a different path to the parent latex file can
%% be accessed with the `import' package (which may need to be
%% installed) using
%%   \usepackage{import}
%% in the preamble, and then including the image with
%%   \import{<path to file>}{<filename>.pdf_tex}
%% Alternatively, one can specify
%%   \graphicspath{{<path to file>/}}
%%
%% For more information, please see info/svg-inkscape on CTAN:
%%   http://tug.ctan.org/tex-archive/info/svg-inkscape
%%
\begingroup%
  \makeatletter%
  \providecommand\color[2][]{%
    \errmessage{(Inkscape) Color is used for the text in Inkscape, but the package 'color.sty' is not loaded}%
    \renewcommand\color[2][]{}%
  }%
  \providecommand\transparent[1]{%
    \errmessage{(Inkscape) Transparency is used (non-zero) for the text in Inkscape, but the package 'transparent.sty' is not loaded}%
    \renewcommand\transparent[1]{}%
  }%
  \providecommand\rotatebox[2]{#2}%
  \newcommand*\fsize{\dimexpr\f@size pt\relax}%
  \newcommand*\lineheight[1]{\fontsize{\fsize}{#1\fsize}\selectfont}%
  \ifx\svgwidth\undefined%
    \setlength{\unitlength}{490.18952095bp}%
    \ifx\svgscale\undefined%
      \relax%
    \else%
      \setlength{\unitlength}{\unitlength * \real{\svgscale}}%
    \fi%
  \else%
    \setlength{\unitlength}{\svgwidth}%
  \fi%
  \global\let\svgwidth\undefined%
  \global\let\svgscale\undefined%
  \makeatother%
  \begin{picture}(1,0.56141447)%
    \lineheight{1}%
    \setlength\tabcolsep{0pt}%
    \put(0,0){\includegraphics[width=\unitlength,page=1]{./Pics/space_of_states.pdf}}%
    \put(0.84199091,0.10558595){\color[rgb]{0,0,0}\makebox(0,0)[lt]{\lineheight{1.25}\smash{\begin{tabular}[t]{l}$\mcP(\mfA)$\end{tabular}}}}%
    \put(0.03531666,0.44630428){\color[rgb]{0,0,0}\makebox(0,0)[lt]{\lineheight{1.25}\smash{\begin{tabular}[t]{l}$\mcS(\mfA)$\end{tabular}}}}%
    \put(0.33254362,0.24797689){\color[rgb]{0,0,0}\makebox(0,0)[lt]{\lineheight{1.25}\smash{\begin{tabular}[t]{l}mixed states\end{tabular}}}}%
  \end{picture}%
\endgroup%

\caption{Intuition for the space of states \( \mcS(\mfA) \). The pure states lie on the black solid boundary line. The line, however, is not closed; it may miss some points, which are indicated in red. The \ws-closure of that line is the whole boundary and it is the space of pure states \( \mcP(\mfA) \). In particular, the space of pure states may contain elements which are not pure states (the red points). The interior of the set is sometimes called the space of mixed states.}
\label{fig:space of states}
\end{figure}

Visually we can picture the state space as in Figure \ref{fig:space of states}.
In general, the set of pure states is not \ws-closed, and \( \mcP(\mfA) \) can contain states which are not pure. For commutative \Cs-algebras, however, this does not happen. So if \( \mfA \) is commutative then the space of all pure states is already \ws-closed. This observation relates to a  deep result attributed to Gelfand --- sometimes called the Gelfand representation of a \Cs-algebra.
We state the result without proof.

\begin{thm}[\textbf{Gelfand representation.}]\label{thm:Gelfand rep}
Let \( \mfA \) be a commutative \Cs-algebra, and \( \mcP(\mfA) \) be the set of all pure states of \( \mfA \). Then \( \mcP(\mfA) \) is a compact Hausdorff space, relative to the \ws-topology. For \( A \in \mfA \) let \( \hat{A} \) be a complex-valued function on \( \mcP(\mfA) \) defined by \( \hat{A}(\rho) = \rho(A) \).  The mapping \( A \to \hat{A} \) is a \( \star \)-isomorphism from \( \mfA \) onto the \Cs-algebra \( \cc{\mcP(\mfA)} \).
\end{thm}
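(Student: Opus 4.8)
\section*{Proof proposal for Theorem \ref{thm:Gelfand rep}}

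The plan is to reduce the statement to classical Gelfand theory by first identifying the pure states of a commutative $\mfA$ with its \emph{characters} --- the nonzero multiplicative linear functionals $\rho\colon\mfA\to\bbC$, i.e.\ those with $\rho(AB)=\rho(A)\rho(B)$. Once the identification $\mcP(\mfA)=\{\text{characters}\}$ is established, the map $A\mapsto\hat A$ becomes a unital $\star$-homomorphism almost by definition, and the remaining analytic content (compactness, isometry, surjectivity) is assembled from tools already developed: the function calculus (Theorem \ref{thm:spectral calculus}), the characterisation of positive functionals (Theorem \ref{thm:positive functionals}), and Krein--Milman. The whole argument therefore hinges on the equivalence ``pure state $\Leftrightarrow$ character,'' which I expect to be the main obstacle.

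For the easy direction, if $\rho$ is a character then $\rho(I)=1$ (apply multiplicativity to $\rho(A)=\rho(A)\rho(I)$ for some $A$ with $\rho(A)\neq0$), and $\rho(H)\in\spectr(H)\subseteq\bbR$ for self-adjoint $H$, since $H-\rho(H)I$ lies in the proper ideal $\ker\rho$ and is hence non-invertible. Thus $\rho$ is hermitian, $\rho(A^{\star}A)=|\rho(A)|^{2}\geq0$, and by Theorem \ref{thm:positive functionals} $\rho$ is a state. To see it is extreme I would first prove the auxiliary fact that every positive functional $\sigma$ with $0\le\sigma\le\rho$ is a scalar multiple of $\rho$: taking $B=A-\rho(A)I$ gives $\rho(B^{\star}B)=0$, whence $\sigma(B^{\star}B)=0$, and Cauchy--Schwarz for $\sigma$ forces $\sigma(B)=0$, i.e.\ $\sigma(A)=\sigma(I)\rho(A)$. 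Applying this to $t\sigma\le\rho$ when $\rho=t\sigma+(1-t)\tau$ forces $\sigma=\tau=\rho$, so $\rho$ is pure.

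The harder direction --- every pure state $\rho$ is multiplicative --- is the crux. Fixing $A$ with $0\le A\le I$, I would study the functional $\sigma(B)=\rho(AB)$; commutativity lets me write $A=H^{2}$ (Corollary \ref{cor:square root}) and compute $\rho(AB^{\star}B)=\rho\bigl((HB)^{\star}(HB)\bigr)\ge0$, and likewise with $I-A$ in place of $A$, so that $0\le\sigma\le\rho$. Here the dominated-functional lemma is proved from extremality rather than Cauchy--Schwarz: normalising $\sigma$ and $\rho-\sigma$ exhibits $\rho$ as a proper convex combination of two states, and since $\rho$ is an extreme point of $\mcS(\mfA)$ this gives $\sigma=\rho(A)\rho$, i.e.\ $\rho(AB)=\rho(A)\rho(B)$. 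The affine rescaling $A\mapsto(A+\|A\|I)/(2\|A\|)$ and the decomposition $A=H+\imath K$ then extend multiplicativity to all $A,B$, so $\rho$ is a character. This completes the identification $\mcP(\mfA)=\{\text{characters}\}$.

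With this in hand the rest follows. The conditions $\rho(I)=1$ and $\rho(AB)=\rho(A)\rho(B)$ survive weak-$\star$ limits, so $\mcP(\mfA)$ is a bounded, weak-$\star$-closed subset of $\mfA'$ and hence weak-$\star$-compact by the compactness lemma quoted above, while Hausdorffness holds because the functionals $\hat A$ separate points of $\mfA'$. Multiplicativity gives $\widehat{AB}=\hat A\hat B$, hermiticity gives $\widehat{A^{\star}}=\overline{\hat A}$, and $\hat I=1$, so $A\mapsto\hat A$ is a unital $\star$-homomorphism into $\cc{\mcP(\mfA)}$. For isometry on a self-adjoint (hence normal) $A$, Proposition \ref{prop:spectral radius and norm in cstar} gives $r(A)=\|A\|$ with $\pm\|A\|\in\spectr(A)$; the function calculus of Theorem \ref{thm:spectral calculus} yields a state on $\mfA(A)$ taking the value $\pm\|A\|$ at $A$, which I extend to $\mfA$ by Hahn--Banach (the extension has $\|\rho\|=\rho(I)=1$, so it is a state by Theorem \ref{thm:positive functionals}), and Krein--Milman lets me replace it by a pure state, giving $\|\hat A\|_{\infty}=\sup_{\rho\in\mcP(\mfA)}|\rho(A)|=\|A\|$. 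The \Cs-identity $\|\hat A\|_{\infty}^{2}=\|\widehat{A^{\star}A}\|_{\infty}=\|A^{\star}A\|=\|A\|^{2}$ upgrades this to isometry for all $A$, so the map is injective. Finally $\hat{\mfA}$ is complete, hence closed, and is a conjugation-closed subalgebra of $\cc{\mcP(\mfA)}$ containing the constants and separating points; by the Stone--Weierstrass theorem it is dense, and being closed it equals $\cc{\mcP(\mfA)}$, giving surjectivity. As flagged, the real difficulty is the multiplicativity of pure states; the remaining steps are routine once that equivalence is available.
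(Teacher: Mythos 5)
The paper does not actually prove this theorem --- it is stated with the explicit remark ``We state the result without proof'' --- so there is nothing in the text to compare your argument against; what follows is an assessment of your proposal on its own terms. Your route is the standard proof of the commutative Gelfand--Naimark theorem: establish the equivalence between pure states and characters, observe that multiplicativity and hermiticity make \( A \mapsto \hat{A} \) a unital \( \star \)-homomorphism, obtain isometry on self-adjoint elements from \( r(A) = \| A \| \) together with a norm-attaining pure state, upgrade to all of \( \mfA \) by the \Cs-identity, and conclude surjectivity from Stone--Weierstrass. The essential steps are sound: you correctly isolate the crux (multiplicativity of pure states), and you invoke commutativity exactly where it is indispensable, namely in \( \rho(AB^{\star}B) = \rho\big( (HB)^{\star}(HB) \big) \geq 0 \), which is what makes \( \sigma(B) = \rho(AB) \) positive and dominated by \( \rho \).

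Two points need repair before this is a complete proof. First, the convex splitting of \( \rho \) into the normalizations of \( \sigma \) and \( \rho - \sigma \) silently assumes \( 0 < \rho(A) < 1 \); for \( \rho(A) \in \{0,1\} \) you must instead note that a positive functional vanishing at \( I \) vanishes identically (Cauchy--Schwarz), which handles both endpoints and still yields \( \sigma = \rho(A)\rho \). Second, ``Krein--Milman lets me replace it by a pure state'' is not yet an argument: the Hahn--Banach extension you construct is a state, but nothing says it is extreme, and an arbitrary extreme point of \( \mcS(\mfA) \) need not attain \( \pm \| A \| \) at \( A \). Either argue that \( F = \{ \omega \in \mcS(\mfA) \colon \omega(A) = \lambda_{0} \} \), with \( \lambda_{0} = \pm\|A\| \in \spectr(A) \), is a nonempty, convex, \ws-compact \emph{face} of \( \mcS(\mfA) \) --- since states are real-valued on self-adjoint elements and bounded by \( \| A \| \) there, any convex decomposition of a point of \( F \) stays in \( F \) --- so that Krein--Milman applied to \( F \) produces an extreme point of \( F \), which is then automatically extreme in \( \mcS(\mfA) \); or, more quickly, use that \( \mcS(\mfA) \) is the \ws-closed convex hull of \( \mcP(\mfA) \) and that \( \hat{A} \) is linear and \ws-continuous, so that \( \sup_{\mcS(\mfA)} | \hat{A} | = \sup_{\mcP(\mfA)} | \hat{A} | \), which combined with your norm-attaining state gives \( \| \hat{A} \|_{\infty} = \| A \| \). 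With these two repairs the argument is complete and correct.
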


\begin{info}[Remark.]
    By this construction we can associate a compact Hausdorff space to every commutative \Cs-algebra. This can be used to show that every Hausdorff space admits a compactification --- the Stone-\v{C}eck compactification.
\end{info}

\begin{info}[Remark.]
If \( \mfA \) is a commutative \Cs-algebra then  by the Gelfand representation it is \( \star \)-isomorphic to the algebra \( \cc{\mcP(\mfA) } \) of continuous functions on the pure state space \( \mcP (\mfA) \) of \( \mfA \). Thus, if \( \omega \) is a pure state of \( \mfA \) and \( f \) is in \( \cc{\mcP(\mfA)} \) then there is a functional \( \delta_{\omega}(f) = f(\omega) \). We call such functionals \textit{point evaluations}. Therefore a pure state of a commutative \Cs-algebra \( \mfA \) corresponds to a point evaluation on the algebra \( \cc{\mcP(\mfA)} \). Indeed it is a general result required to prove the Gelfand representation, namely: if \( \cc{X} \)  is a \Cs-algebra of continuous functions on a compact Hausdorff space \( X \) then a state \( \omega \in \mcS (\mfA) \) is pure if and only if it is a point evaluation, i.e. there is an \( x \in X \) such that \(\omega(f) = \delta_{x} (f) = f(x) \) for every \( f \in \cc{X} \).
\end{info}

\subsubsection{Representations of a \Cs-algebra}

Up to now we discussed abstract \Cs-algebras. Now, we will see that every \Cs-algebra can be represented as a (sub)algebra of bounded linear operators on some Hilbert space. From the physicist point of view, it is this realization that allows us to formulate quantum and classical mechanics in terms of algebras. After all, as a physicist we understand operators in terms of how they act, not in terms of to what space they belong.

\begin{defi}
By a \textit{representation} of \Cs-algebra \( \mfA \) on a Hilbert space \( H \), we mean a \( \star \)-homomorphism \( \varphi \) from \( \mfA \) onto \( B(H) \). If in addition, \( \varphi \) is a \( \star \)-isomorphism (that is one-to-one) then it is called a \textit{faithful} representation.
\end{defi}

The Gelfand-Naimark theorem assures that every \Cs-algebra has a faithful representation on a Hilbert space. However, we only will discus the construction that provides a (possibly not faithful) representation.

The following terminology is common when discussing representations.
Let \( \varphi \) be a representation of a \Cs-algebra \( \mfA \) on a Hilbert space \( H \). Recall from our assignment 4 i) that every \( \star \)-homomorphism between \Cs-algebras is continuous, and so \( \varphi \) is continuous, i.e.  \( \| \varphi(A) \|_{H} \leq \| A \|_{\mfA} \) for every \( A \in \mfA \). Also recall, that a \( \star \)-isomorphism (an injective \( \star \)-homomorphism) is isometric, and so if \( \varphi \) is faithful \( \| \varphi(A) \|_{H} = \| A \|_{\mfA} \). If you recall the definition of ideals from assignment sheet 5, you realize that the set \( \big\{ A \in \mfA \colon \varphi(A) = 0 \big\} \) is a closed two-sided ideal in \( \mfA \). We call this set the \textit{kernel of} \( \varphi \). If there is a vector \( x \in H \) for which the linear subspace
\begin{align}
    \varphi(\mfA)x = \big\{ \varphi(A)x \colon A \in \mfA \big\}
\end{align}
is everywhere dense in \( H \), then \( \varphi \) is said to be a \textit{cyclic representation}, and \( x \) is said to be the \textit{cyclic vector} for \( \varphi \).

\begin{expl}
Recall, that we defines a \( \star \)-homomorphism as a mapping that among other things maps preserves the unit of the algebra. If for a second, we do not consider this requirement then the map \( \varphi_{0} (A) = 0 \) for every \( A \in \mfA \) is defines a trivial homomorphism and we can include to say that \( \varphi_{0} \) is a trivial representation.
\end{expl}

\begin{expl}
Let \( \mfA \) be a subalgebra of bounded linear operators \( B(H) \) on the Hilbert space \( H \). Then the inclusion mapping \( \varphi(A) = A \) for every \( A \in \mfA \) is a representation. This representation is faithful, since if \( A \in \mfA \) is not zero, then it is also non-zero in \( B(H) \).

Let \( H \) be a Hilbert space \( K \) be a closed subspace of \( H \), that is invariant under each operator from \( \mfA \), i.e.  the set \( \big\{ Ax \colon A \in \mfA, \ x \in K \big\} \) is a subset of \( K \). Let \( A\vert_{K} \) be the restriction of the operator \( A \) to the subspace \( K \). It can be checked that this restriction is a \( \star \)-homomorphism and thus \( A \to A\vert_{K} \colon \mfA  \to B(K) \) is a representation of \( \mfA \) on the Hilbert space \( K \).
In the same way one can see that the orthogonal complement \( K^{\perp} \) is also invariant under \( \mfA \) and thus the restriction \( A \to A\vert_{K^{\perp}} \) is a representation of \( \mfA \) on \( K^{\perp} \).

If the inclusion representation \( A \in \mfA \to A \in B(H) \) is cyclic then also the restriction representations are cyclic as well. To see this let \( P \) be the orthogonal projection from \( H \) onto \( K \) and \( x \in H \) be the cyclic vector \( x \in H \) for the inclusion representation. If \( Px = 0 \) then \( x \) is in \( K^{\perp} \). This, violates the fact that the inclusion representation is cyclic since \( K^{\perp} \) is invariant under the action of \( \mfA \). Thus \( Px \neq 0 \) and \( Px \in K\) is a cyclic vector for the representation \( A \to A\vert_{K} \), and \( (I - P)x \) is a cyclic vector for the representation \( A \to A\vert_{K^{\perp}} \).
\end{expl}

For the following discussion it will be convenient to define an inner product as a sesquilinear form which is not necessarily definite. In the following reminder we quickly recall the properties of the inner product. We also refresh that the well known Cauchy-Schwarz inequality does not depend on the definiteness of the inner product.

\begin{info}[Reminder: Inner product and the Cauchy-Schwarz inequality.]
    In the following we use the convention that an inner product on a complex vector space \( V \) may not be definite. That is, we say that a mapping \( (v,w) \to \scalp{v,w} \) from  \( V \times V \) to \( \bbC \) is an \textit{inner product} on \( V \) when it is
\begin{enumerate}
\item positive: \( \scalp{v,v} \geq 0 \);
\item hermitian: \( \scalp{v,w} = \overline{\scalp{w,v}} \);
\item sesquilinear: \( \scalp{v,\alpha w + \beta z} = \alpha \scalp{A,B} + \beta \scalp{A,C} \);
\end{enumerate}
with  \( v,w,z \in V \) and \( \alpha,\beta \in \bbC \).
If in addition it is
\begin{enumerate}[start=4]
\item definite: \(\scalp{v,v} = 0\) only when \( v = 0 \),
\end{enumerate}
then we call \( \scalp{\ , \ } \) a definite inner product.
\begin{prop}\label{prop:Cauch Schwarz}
    Let \( V \) be a complex vector space.
Suppose that \( \scalp{ \ , \ } \) is an inner product on \( V \). Then,
\begin{enumerate}
\item\label{prop:inprod it:1}
the Cauchy-Schwarz inequality,
\( | \scalp{v,w} |^{2} \leq \scalp{v,v} \scalp{w,w} \), holds for \( v,w \in V \);

\item\label{prop:inprod it:2}
the set \( \mfL = \{ v \in V \colon \ \scalp{v,v} = 0 \} \) is a linear subspace of \( V \);

\item\label{prop:inprod it:3}
the equation
\begin{align}
    \scalp{v + \mfL, w + \mfL}_{q} = \scalp{v, w}
    \qquad \big( v,w \in V \big)
\end{align}
defines a definite inner product \( \scalp{\ , \ }_{q} \) on the quotient space \( V \slash \mfL \).
\end{enumerate}
\end{prop}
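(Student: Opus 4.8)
The plan is to prove part \ref{prop:inprod it:1} first, since both remaining parts reduce to it almost mechanically. For Cauchy--Schwarz I would start from the only hypothesis that carries real content, positivity, applied to the vector \( v - \lambda w \) for an arbitrary \( \lambda \in \bbC \). Expanding \( \scalp{v - \lambda w, v - \lambda w} \geq 0 \) using conjugate-linearity in the first slot and linearity in the second, and then folding the two cross terms into one via the hermitian property, gives the real inequality
\begin{align}
    0 \leq \scalp{v,v} - 2\Re\big( \lambda \scalp{v,w}\big) + |\lambda|^{2}\scalp{w,w},
\end{align}
valid for every \( \lambda \). The strategy is to choose \( \lambda \) so as to make the right-hand side as small as possible. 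When \( \scalp{w,w} > 0 \) I would substitute \( \lambda = \overline{\scalp{v,w}}/\scalp{w,w} \); the cross term and the quadratic term then combine to leave \( 0 \leq \scalp{v,v} - |\scalp{v,w}|^{2}/\scalp{w,w} \), and multiplying through by \( \scalp{w,w} \) yields the assertion.

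The main obstacle is the degenerate case \( \scalp{w,w} = 0 \), where the above choice of \( \lambda \) is illegal and where definiteness --- which we have deliberately dropped --- would ordinarily be invoked. Here I would argue differently: the inequality reduces to \( 2\Re(\lambda \scalp{v,w}) \leq \scalp{v,v} \) for \emph{all} \( \lambda \), and taking \( \lambda = t\,\overline{\scalp{v,w}} \) with \( t \to +\infty \) forces \( \scalp{v,w} = 0 \) (otherwise the left side is unbounded), so both sides of Cauchy--Schwarz vanish and the inequality holds trivially. This is the one place where non-definiteness genuinely alters the proof.

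Granting \ref{prop:inprod it:1}, part \ref{prop:inprod it:2} is immediate: closure under scalars follows from \( \scalp{\alpha v, \alpha v} = |\alpha|^{2} \scalp{v,v} \), and for \( v,w \in \mfL \) I would expand \( \scalp{v+w, v+w} \) and annihilate the two cross terms by Cauchy--Schwarz, since \( |\scalp{v,w}|^{2} \leq \scalp{v,v}\scalp{w,w} = 0 \). For \ref{prop:inprod it:3} the only nontrivial point is well-definedness: replacing \( v,w \) by \( v+\ell_{1}, w+\ell_{2} \) with \( \ell_{1},\ell_{2} \in \mfL \) introduces three extra terms, each of which vanishes by the same Cauchy--Schwarz argument, namely that any inner product having an element of \( \mfL \) in one slot is zero. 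Sesquilinearity, hermiticity and positivity then transfer to the quotient form verbatim, and definiteness is precisely the statement that \( \scalp{v,v} = 0 \) is equivalent to \( v \in \mfL \), i.e.\ to \( v + \mfL \) being the zero element of \( V \slash \mfL \).
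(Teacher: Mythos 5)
Your proof is correct and complete. Note that the paper itself offers no proof of this proposition at all: it simply defers to \cite[prop. 2.1.1]{kadison1997fundamentalsI}. Your argument is essentially the standard one found in that reference: positivity of \( \scalp{v-\lambda w, v-\lambda w} \) viewed as a function of \( \lambda \in \bbC \), optimized by the choice \( \lambda = \overline{\scalp{v,w}}/\scalp{w,w} \) when \( \scalp{w,w} > 0 \). The one genuinely delicate point --- and the only place where dropping definiteness changes anything --- is the degenerate case \( \scalp{w,w} = 0 \), where that choice of \( \lambda \) is unavailable; your resolution, letting \( \lambda = t\,\overline{\scalp{v,w}} \) with \( t \to +\infty \) to force \( \scalp{v,w} = 0 \), is exactly right and is what a careless proof (dividing by \( \scalp{w,w} \), or silently invoking definiteness) would miss. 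Parts two and three then reduce, as you say, to the single observation that Cauchy--Schwarz forces \( \scalp{\ell, u} = 0 \) whenever \( \ell \in \mfL \) and \( u \in V \): this gives closure of \( \mfL \) under addition, the independence of \( \scalp{v+\mfL, w+\mfL}_{q} \) from the choice of representatives, and definiteness of the quotient form, since \( \scalp{v,v} = 0 \) is by definition the statement that \( v + \mfL \) is the zero coset.
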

\begin{proof}
For the proof of the above Proposition consult e.g.\ \cite[prop. 2.1.1]{kadison1997fundamentalsI}
\end{proof}
\end{info}

\begin{prop}
    Let \( \mfA \) be a \Cs-algebra, \( \rho \) a state on \( \mfA \), and \(
    \mfL_{\rho} = \big\{ A \in \mfA \colon \rho (A^{\star} A) = 0 \big\}
    \).
The state \( \rho \) provides a definite inner product on the quotient space \( \mfA \slash \mfL_{\rho} \) by the relation
\begin{align}
\scalp{A + \mfL_{\rho}, B + \mfL_{\rho}} = \rho(A^{\star}B)
\qquad \big( A,B \in \mfA \big),
\end{align}
Moreover, \( \mfL_{\rho} \) is a closed left ideal in \( \mfA \), and \( \rho (A^{\star} B) =0 \) whenever \( A \in \mfL_{\rho} \) and \( B \in \mfA \).
\end{prop}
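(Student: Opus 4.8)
The plan is to recognize the sesquilinear form $(A,B) \mapsto \rho(A^{\star}B)$ as an inner product in the (possibly non-definite) sense of the preceding Reminder, and then let Proposition \ref{prop:Cauch Schwarz} carry out most of the work. First I would verify the three defining properties of an inner product for this form on the vector space $\mfA$. Positivity, $\rho(A^{\star}A) \geq 0$, follows because $A^{\star}A$ is a positive element of any \Cs-algebra and $\rho$, being a state, is a positive functional. The hermitian symmetry $\rho(A^{\star}B) = \overline{\rho(B^{\star}A)}$ follows from $(A^{\star}B)^{\star} = B^{\star}A$ together with the fact, established just before Theorem \ref{thm:positive functionals}, that every positive functional---hence every state---is hermitian. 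Sesquilinearity is immediate from the linearity of $\rho$ and the fixed factor $A^{\star}$.

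Once the form is known to be an inner product, the first assertion is essentially automatic. Proposition \ref{prop:Cauch Schwarz}\,\ref{prop:inprod it:2} tells us that $\{A : \rho(A^{\star}A) = 0\}$, which is exactly $\mfL_{\rho}$, is a linear subspace of $\mfA$, and Proposition \ref{prop:Cauch Schwarz}\,\ref{prop:inprod it:3} tells us that $\langle A + \mfL_{\rho}, B + \mfL_{\rho}\rangle = \rho(A^{\star}B)$ is a well-defined definite inner product on $\mfA / \mfL_{\rho}$. This settles the claim about the quotient.

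Next I would derive the identity $\rho(A^{\star}B) = 0$ for $A \in \mfL_{\rho}$ and arbitrary $B \in \mfA$ directly from the Cauchy--Schwarz inequality of Proposition \ref{prop:Cauch Schwarz}\,\ref{prop:inprod it:1}:
\begin{align}
    |\rho(A^{\star}B)|^{2} \leq \rho(A^{\star}A)\,\rho(B^{\star}B) = 0.
\end{align}
With this in hand the left-ideal property becomes the slickest step: for $A \in \mfL_{\rho}$ and any $C \in \mfA$, apply the identity just proved with the particular choice $B = C^{\star}CA$ to obtain $\rho\big((CA)^{\star}(CA)\big) = \rho\big(A^{\star}(C^{\star}CA)\big) = 0$, so $CA \in \mfL_{\rho}$. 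Finally, closedness follows from continuity: $\rho$ is bounded by Theorem \ref{thm:positive functionals}, and $A \mapsto A^{\star}A$ is continuous by joint continuity of the product and of the (isometric) involution, so $A \mapsto \rho(A^{\star}A)$ is continuous and $\mfL_{\rho}$ is the preimage of the closed set $\{0\}$.

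I expect the only genuinely non-routine ingredient to be the input that $A^{\star}A$ is a positive element; this is the one fact external to the proposition that makes the whole construction go through, and everything else amounts to carefully feeding the form $(A,B)\mapsto\rho(A^{\star}B)$ into Proposition \ref{prop:Cauch Schwarz}. Should one wish to avoid the auxiliary identity in the left-ideal step, an alternative is to use that $\|C\|^{2}I - C^{\star}C \geq 0$, write it as a square by Corollary \ref{cor:square root}, and conclude $A^{\star}(\|C\|^{2}I - C^{\star}C)A \geq 0$; applying the positive functional $\rho$ then sandwiches $\rho\big((CA)^{\star}(CA)\big)$ between $0$ and $\|C\|^{2}\rho(A^{\star}A) = 0$.
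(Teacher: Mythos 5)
Your proposal is correct and follows essentially the same route as the paper's own proof: verify the form $(A,B)\mapsto\rho(A^{\star}B)$ is a (possibly non-definite) inner product, invoke Proposition \ref{prop:Cauch Schwarz} for the subspace and quotient statements, use Cauchy--Schwarz to get $\rho(A^{\star}B)=0$, and then obtain the left-ideal property via $\rho\big((CA)^{\star}(CA)\big)=\rho\big(A^{\star}(C^{\star}CA)\big)=0$ together with continuity of $\rho$ and of $A\mapsto A^{\star}A$ for closedness. The only cosmetic difference is that you make explicit the input $A^{\star}A\geq 0$ (which the paper uses silently) and sketch an alternative positivity argument for the ideal step; neither changes the substance.
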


\begin{proof}
For \( A,B \in \mfA \) define \( \scalp{ A, B}_{\rho} = \rho(A^{\star}B) \). Then \( \scalp{ \ ,\  }_{\rho} \) is positive since \( \rho \) is positive, hermitian since \(
\rho \) is hermitian, and sesquilinear by definition. Thus, \( \scalp{\ , \ } \) is an inner product on \( \mfA \)  with \( \mfL_{\rho} = \big\{ A \in \mfA \colon \scalp{A,A}_{\rho} = 0 \big\} \). By Proposition \ref{prop:Cauch Schwarz}, the set \( \mfL_{\rho} \) is a linear subspace of \( \mfA \), and
\(
\scalp{A + \mfL_{\rho}, B + \mfL_{\rho}}
    = \scalp{A,B}_{\rho}
    = \rho(A^{\star} B)
\)
is a definite inner product on \( \mfA \slash \mfL_{\rho} \).

With \( A \in \mfL_{\rho} \) and \( B \in \mfA \) we have by the Cauchy-Schwarz inequality
\begin{align}
    | \rho(A^{\star} B) |^{2}  \leq \rho(A^{\star}A) \rho(B^{\star}B) = 0,
\end{align}
and so \( \rho(A^{\star}B) = 0 \). It follows that
\begin{align}
    \rho\big((BA)^{\star} BA\big) = \rho \big( A^{\star} (B^{\star}BA) \big) = 0,
\end{align}
and thus \( BA \) is in \( \mfL_{\rho} \) whenever \( A \in \mfL_{\rho}\)  and \( B \in \mfA \). Hence, \( \mfL_{\rho} \) is a left ideal of \( \mfA \), which is closed since \( \rho \) and the mapping \( A \to A^{\star}A \) are continuous.
\end{proof}

We will call the space \( \mfL_{\rho} \) the \textit{left kernel} of the state \( \rho \).
We can now summarize the idea of the GNS-construction in a loosely way before proving it rigorously. The preceding Proposition provides us with means to define a (definite) inner product on \( \mfA \). With the norm induced by this inner product, the vector space \( \mfA \) may not be closed, but after closing it we obtain a Hilbert space. Moreover, since \( \mfA \) is an algebra we can define an action of \( A\in \mfA \) on the vector space \( \mfA \) by setting \(B \to AB \colon \mfA \to \mfA\). After proving that this mapping is continuous we can extend it to the whole \( H \). This will give us the desired representation. As always, however, the devil lies in the complement of a dense set...

\begin{thm}[\textbf{GNS representation}]\label{thm:GNS rep}
For every state \( \rho \) on a \Cs-algebra \( \mfA \) there is a cyclic representation \( \pi_{\rho} \) of \( \mfA \) on a Hilbert space \( H_{\rho} \) and a unit cyclic vector \( x_{\rho} \) for \( \pi_{\rho} \) such that
\begin{align}
    \rho(A) = \scalp{x_{\rho} , \pi_{\rho}(A) x_{\rho}}
    \qquad \big( A \in \mfA \big).
\end{align}
If \( \varphi \) is a cyclic representation of \( \mfA \) on a Hilbert space \( H \) with the unit cyclic vector \( x \) for \( \pi \) such that
\begin{align}
    \rho(A) = \scalp{x, \varphi(A) x}
    \qquad \big( A \in \mfA \big),
\end{align}
then there is an isomorphism \( U\colon H_{\rho} \to H \) from \( H_{\rho} \) onto \( H \)  such that
\begin{align}
    x = Ux_{\rho} && \text{and} && \varphi(A) = U \pi_{\rho}(A) U^{\ast}
    \qquad \big( A \in \mfA \big).
\end{align}
\end{thm}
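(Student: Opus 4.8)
The plan is to carry out the GNS construction in two stages: first build an explicit representation from the inner product supplied by the preceding proposition, then show that any other cyclic representation reproducing $\rho$ is unitarily equivalent to it. For existence, I would start from the definite inner product $\scalp{A + \mfL_{\rho}, B + \mfL_{\rho}} = \rho(A^{\star}B)$ on the quotient $\mfA \slash \mfL_{\rho}$ and let $H_{\rho}$ be its completion. The representation is defined by left multiplication: on the dense subspace $\mfA \slash \mfL_{\rho}$ set
\begin{align}
    \pi_{\rho}(A)\big(B + \mfL_{\rho}\big) = AB + \mfL_{\rho}.
\end{align}
This is well defined precisely because $\mfL_{\rho}$ is a left ideal (established in the preceding proposition), so $B - B' \in \mfL_{\rho}$ forces $A(B - B') \in \mfL_{\rho}$.

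The crux — and what I expect to be the main obstacle — is showing each $\pi_{\rho}(A)$ is bounded, so that it extends continuously to $H_{\rho}$. Here I would invoke the order theory from Theorem \ref{thm:positive cone}: since $A^{\star}A$ is positive with $\|A^{\star}A\| = \|A\|^{2}$, the element $\|A\|^{2} I - A^{\star}A$ is positive, and hence so is $B^{\star}\big(\|A\|^{2}I - A^{\star}A\big)B$ (write the positive element as a square and use that conjugation $C \mapsto B^{\star}CB$ preserves positivity). Applying the positive functional $\rho$ yields
\begin{align}
    \big\|\pi_{\rho}(A)(B + \mfL_{\rho})\big\|^{2} = \rho(B^{\star}A^{\star}AB) \leq \|A\|^{2}\,\rho(B^{\star}B) = \|A\|^{2}\,\|B + \mfL_{\rho}\|^{2},
\end{align}
so $\|\pi_{\rho}(A)\| \leq \|A\|$ and the operator extends. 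The remaining algebraic checks are routine verifications on the dense subspace: multiplicativity from $\pi_{\rho}(AB) = \pi_{\rho}(A)\pi_{\rho}(B)$ acting on $C + \mfL_{\rho}$, and the $\star$-property from comparing $\scalp{B + \mfL_{\rho}, \pi_{\rho}(A)(C+\mfL_{\rho})}$ with $\scalp{\pi_{\rho}(A^{\star})(B+\mfL_{\rho}), C + \mfL_{\rho}}$, both of which equal $\rho(B^{\star}AC)$. I then take $x_{\rho} = I + \mfL_{\rho}$; one has $\|x_{\rho}\|^{2} = \rho(I) = 1$, the orbit $\pi_{\rho}(\mfA)x_{\rho} = \mfA \slash \mfL_{\rho}$ is dense so $x_{\rho}$ is a unit cyclic vector, and $\scalp{x_{\rho}, \pi_{\rho}(A)x_{\rho}} = \rho(I^{\star}A) = \rho(A)$ as required.

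For uniqueness, given another cyclic representation $\varphi$ on $H$ with unit cyclic vector $x$ satisfying $\rho(A) = \scalp{x, \varphi(A)x}$, I would define $U$ on the dense orbit by $U\big(\pi_{\rho}(A)x_{\rho}\big) = \varphi(A)x$. The single computation
\begin{align}
    \|\varphi(A)x\|^{2} = \scalp{x, \varphi(A^{\star}A)x} = \rho(A^{\star}A) = \|A + \mfL_{\rho}\|^{2} = \|\pi_{\rho}(A)x_{\rho}\|^{2}
\end{align}
simultaneously proves $U$ is well defined (the null vectors match) and isometric. Since $U$ carries a dense subspace isometrically onto the dense subspace $\varphi(\mfA)x$, its continuous extension is a surjective isometry, hence unitary. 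The intertwining relation follows by checking $U\pi_{\rho}(A) = \varphi(A)U$ on the dense orbit through $U\pi_{\rho}(A)\pi_{\rho}(B)x_{\rho} = \varphi(AB)x = \varphi(A)\varphi(B)x$, and $Ux_{\rho} = U\pi_{\rho}(I)x_{\rho} = \varphi(I)x = x$ because both representations are unital. Together these give $\varphi(A) = U\pi_{\rho}(A)U^{\star}$ and close the argument.
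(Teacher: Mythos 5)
Your proposal is correct and follows essentially the same route as the paper's proof: the quotient \( \mfA \slash \mfL_{\rho} \) completed to \( H_{\rho} \), left multiplication made bounded via positivity of \( B^{\star}\big(\|A\|^{2}I - A^{\star}A\big)B \) (the paper writes this as \( (KB)^{\star}(KB) \) with \( K \) the positive square root, which is exactly your conjugation argument), the cyclic vector \( I + \mfL_{\rho} \), and uniqueness via the isometry \( \|\varphi(A)x\| = \|\pi_{\rho}(A)x_{\rho}\| \) extended from the dense orbit. No gaps; the argument is complete as outlined.
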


\begin{proof}
Let \( \mfL_{\rho} \) be the left kernel of \( \rho \). The quotient linear space \( \mfA \slash \mfL_{\rho} \) is a pre-Hilbert space relative to the definite inner product defined by
\begin{align}
    \scalp{A + \mfL_{\rho}, B + \mfL_{\rho}} = \rho(A^{\star}B)
    \qquad \big( A,B \in \mfA \big).
\end{align}
Define \( H_{\rho} \) as the completion of \( \mfA \slash \mfL_{\rho} \) in the norm induced by the above scalar product.

To show that for \( A \in \mfA \) the mapping \( B + \mfL_{\rho} \to AB + \mfL_{\rho} \) defines an unambiguous operator on \( H_{\rho} \), we need to show that it does not depend on the ``representative'' \( B \) of the set \( B + \mfL_{\rho} \). For this assume \( B, C \in \mfA \) such that \( B + \mfL_{\rho} = C + \mfL_{\rho} \). Then \( B - C \) is in \( \mfL_{\rho} \) and since \( \mfL_{\rho} \) is a left ideal of \( \mfA \) we have for every \( A \in \mfA \) that \( AB - AC \in \mfL_{\rho} \) and consequently \( AB + \mfL_{\rho} = AC + \mfL_{\rho} \). Hence, the equation \( \pi(A)(B+\mfL_{\rho}) = AB + \mfL_{\rho} \) defines a linear operator \( \pi(A) \) acting on the pre-Hilbert space \( \mfA \slash \mfL_{\rho} \).

Next, we want to show that \( \pi \) is continuous.
For this recall that
\begin{align}
    \| A \|^{2} I - A^{\star}A = \| A^{\star} A \| I - A^{\star}A \in \mfA^{+},
\end{align}
implies that the positive square root \( K = \big( \| A \|^{2} I - A^{\star}A  \big)^{1/2}  \) exists, and \( B^{\star} (\| A \|^{2}I - A^{\star}A)B  = (KB)^{\star} (KB)  \in \mfA^{+} \). Therefore for each \( A,B \in \mfA \) we have
\begin{align*}
    \| A \|^{2} \| B + \mfL_{\rho} \|^{2} - \| \pi(A) (B + \mfL_{\rho}) \|^{2}
    &=\| A \|^{2} \| B + \mfL_{\rho} \|^{2 } - \| AB + \mfL_{\rho} \|^{2} \\
    &= \| A \|^{2} \scalp{B + \mfL_{\rho} , B + \mfL_{\rho}} - \scalp{AB + \mfL_{\rho}, AB + \mfL_{\rho}} \\
    &= \| A \|^{2} \rho(B^{\star}B) - \rho(B^{\star}A^{\star}A B) \\
    &= \rho \big(B^{\star} (\|A \|^{2}I - A^{\star}A) B \big) \geq 0.
\end{align*}
Thus \( \pi(A) \) is bounded, with \( \| \pi(A) \| \leq \| A \| \); and \( \pi(A) \) extends by continuity to a bounded linear operator \( \pi_{\rho}(A) \) acting on \( H_{\rho} \). It remains to show that \( \pi_{\rho} \) is a \( \star \)-homomorphism and that it is cyclic.

With \( A,B,C \in \mfA \) and \( \alpha, \beta \in \bbC \) we get:
\newline
linearity,
\begin{align*}
    \pi_{\rho} (\alpha A + \beta B) (C + \mfL_{\rho})
    = (\alpha A + \beta B)C + \mfL_{\rho}
    &= \alpha (AC + \mfL_{\rho}) + \beta (BC + \mfL_{\rho}) \\
    &= \big( \alpha \pi_{\rho}(A) + \beta \pi_{\rho}(B) \big)(C + \mfL_{\rho});
\end{align*}
multiplicativity,
\begin{align*}
    \pi_{\rho}(AB) \big(C + \mfL_{\rho} \big)
    = ABC + \mfL_{\rho}
    = \pi_{\rho}(A) \big(BC + \mfL_{\rho} \big)
    = \pi_{\rho}(A)\pi_{\rho}(B) \big(C + \mfL_{\rho} \big);
\end{align*}
star preservation,
\begin{align*}
    \scalp{B + \mfL_{\rho}, \pi_{\rho}(A)(C + \mfL_{\rho})}
    &= \scalp{B + \mfL_{\rho}, AC + \mfL_{\rho}} \\
    &= \rho \big( B^{\star}(AC) \big)
    = \rho \big( (A^{\star}B)^{\star} C \big)\\
    &= \scalp{ A^{\star}B + \mfL_{\rho}, C + \mfL_{\rho}}
    =   \scalp{\pi_{\rho}(A^{\star})(B+\mfL_{\rho}), C + \mfL_{\rho}}.
\end{align*}
Since \( \mfA \slash \mfL_{\rho} \) is everywhere dense in \( H_{\rho} \) these properties extend to the whole \( H_{\rho} \) and thus \( \pi_{\rho} \) is a representation of \( \mfA \) on \( H_{\rho} \).

To see that \( \pi_{\rho} \) is cyclic let \( x_\rho \) be the vector \( I + \mfL_{\rho} \) in \( \mfA \slash \mfL_{\rho} \), then
\begin{align}
    \pi_{\rho}(A) x_{\rho} = \pi_{\rho}(A) (I + \mfL_{\rho}) = A + \mfL_{\rho}
    \qquad \big( A \in \mfA \big),
\end{align}
and \( \pi_{\rho}(\mfA) x_{\rho} \) is \( \mfA \slash \mfL_{\rho} \) --- an everywhere-dense subset of \( H_{\rho} \). Hence \( x_{\rho} \) is a cyclic vector for \( \pi_{\rho} \). Moreover,
\(
    \scalp{ x_{\rho}, \pi_{\rho}(A) x_{\rho} } = \scalp{ I + \mfL_{\rho}, A + \mfL_{\rho} } = \rho(A)
     \) for every \( A \in \mfA \).
In particular, \( \| x_{\rho}\|^{2} = \rho(I) = 1 \). This proves the first part of the theorem.

For the second part, let \( \varphi \) be another representation as stated in the theorem. For each \( A \in \mfA\),
\begin{align*}
    \| \varphi(A) x \|^{2}
    = \scalp{\varphi(A)x, \varphi(A)x}
    = \scalp{x, \varphi(A^{\star}A)x}
    = \rho(A^{\star}A)
    = \scalp{x_{\rho}, \pi_{\rho}(A^{\star}A) x_{\rho}}
    = \| \pi_{\rho}(A) x_{\rho} \|^{2}.
\end{align*}
Hence, if \( A,B \in \mfA \) are such that \( \pi_{\rho}(A) x_{\rho} = \pi_{\rho}(B)x_{\rho} \) then inserting \( A - B \) instead of \( A \) in the above relations, yields \( \varphi(A)x = \varphi(B)x \). Therefore, we can define a map \( U_{0} \colon \pi_{\rho}(\mfA)x_{\rho} \to \varphi(\mfA)x \) by the equation \( U_{0}\pi_{\rho}(A) x_{\rho} = \varphi(A)x \) for each \(A \in \mfA\). It follows that \( U_{0} \) is a norm-preserving linear operator from \( \pi_{\rho}(\mfA)x_{\rho} \) onto \( \varphi(\mfA)x \).
Since the closure of \( \pi_{\rho}(\mfA)x_{\rho} \) is the entire \( H_{\rho} \) and the closure of \( \varphi(\mfA)x \) is the entire \( H \), it follows that \( U_{0} \) extends by continuity to an isomorphism \( U \) from \( H_{\rho} \) onto \( H \), and
\begin{align}
    Ux_{\rho} = U_{0} \pi_{\rho}(I) x_{\rho} = \varphi(I) x = x.
\end{align}
With \( A, B \in \mfA \),
\begin{align}
    U \pi_{\rho}(A) \pi_{\rho}(B)x_{\rho}
    = U \pi_{\rho}(AB) x_{\rho}
    = \varphi(AB)x
    = \varphi(A) \varphi(B) x
    = \varphi(A) U \pi_{\rho}(B)x_{\rho}.
\end{align}
Since vectors of the form \( \pi_{\rho}(B)x_{\rho} \) with \( B \in \mfA \) form an everywhere-dense subset of \( H_{\rho} \), it follows that \( U \pi_{\rho}(A) = \varphi(A) U \), and thus \( \varphi(A) =  U \pi_{\rho}(A)U^{\star} \).
\end{proof}

The method used in the proof of the above Theorem is called the GNS (Gelfand-Naimark-Segal) construction. To avoid bloated notation, we will often write  \( (\pi, H, x) \) to refer to a cyclic representation \( \pi \) of a \Cs-algebra \( \mfA \) on the Hilbert space \( H \) with a cyclic vector \( x \) for \( \pi \). When its convenient we will also call the triple \( (\pi, H, x) \) itself a cyclic representation of \( \mfA \).
If \( \rho \) is a state on \( \mfA \), and \( (\pi_{\rho}, H_{\rho}, x_{\rho}) \) is the representation set out in the Theorem \ref{thm:GNS rep}, we will call it the GNS representation.

For a given \Cs-algebra \( \mfA \), let \( \pi \) be a representation of  \( \mfA \) on the Hilbert space \( H \) and \( \varphi \) be another representation of \( \mfA \) on the Hilbert space \( K \). We will call the two representations \textit{(unitarily) equivalent} if there exists an isomorphism \( U\colon H \to K \) such that \( \varphi(A) = U \pi(A) U^{\star} \) for each \( A \in \mfA \). In particular, the GNS Theorem states that a GNS representation \( \pi_{\omega} \) of \( \mfA \) and the cyclic state \( x_{\omega} \) is uniquely determined (up to unitary equivalence) by the condition \( \rho(\cdot) = \scalp{ x_{\omega}, \pi_{\omega}(\cdot ) x_{\omega} } \).

We already mentioned that if \( \pi \) is a representation of a \Cs-algebra on a Hilbert space \( H \) then \( H \) may have invariant subspaces \( K \subset H \) such that \( \pi(\mfA)K \subseteq K \). If this happens, we call \( \pi \) a \textit{reducible representation}. If on the other hand, \( H \) has no such invariant subspace, and the only invariant subspace is \( H \) itself, then we call \( \pi \) an \textit{irreducible representation}. By restricting a reducible representation to the invariant subspaces we can get an irreducible representation from a reducible one. Conversely, by ``piling up'' (direct sum) irreducible representations we can construct new representations, which will be reducible.
In this sense, irreducible representations are the building blocks of representations. Since \Cs-algebra representations play a vital role in physics, it is important to characterize them. From the above it follows that it is enough to characterize only the irreducible representations. This is a more tractable undertaking.

Using the GNS construction one can draw a connection between irreducible representations of a \Cs-algebra and its pure states. However, we are not going to do this. Instead we will now transition to the mathematical  description of physical systems to find out how \Cs-algebras enter physics.
We close this section with an example that will be important for the next section.

\begin{expl}
Consider a \Cs-algebra \( \mfA \) and a state \( \rho \) on \( \mfA \).
\begin{enumerate}
\item
The GNS representation for the state \( \rho \) is the \( \star \)-homomorphism \( \pi_{\rho} \colon \mfA \to B(H_{\rho}) \), with the Hilbert space \( H_{\rho} \) and a cyclic (GNS) vector \( x_{\rho} \);

\item
Let \( y \) be a unit vector in \( H_{\rho} \). Then we can use \( y \) instead of the cyclic vector for \( \pi_{\rho} \). If \( y \) is cyclic for \( \pi_{\rho} \) then \( (\pi_{\rho},H_{\rho},y) \) is a new cyclic representation which is equivalent to the GNS representation \( \pi_{\rho} \). In fact, the \( \star \)-homomorphism and the Hilbert spaces are both the same, and thus \( U = I \) is the desired isomorphism. (Notice that we do not need that the cyclic vectors transform from one to the other).

\item
For \( y \) from the previous example, set \( \omega_{y}(A) = \scalp{y, \pi_{\omega}(A) y} \) for every \( A \in \mfA \). Then \( \omega_{y} \) is a positive linear functional on \( \mfA \) with \( \omega_{y}(I) =1 \); thus \( \omega_{y} \) is a state on \( \mfA \). The GNS representation \( (\pi_{\omega_{y}}, H_{\omega_{y}}, x_{\omega_{y}}) \) is thus unitarily equivalent to the representation from the previous example.
\end{enumerate}
From the above example we see that every cyclic state of the GNS Hilbert space \( H_{\omega} \) defines a new state \( \omega_{y} \) which leads to a new GNS representation that is equivalent to the original one. In fact, if the original state \( \omega \) is pure, than its GNS representation is irreducible and every vector \( y\in H_{\omega} \) is cyclic. Thus, every vector from the GNS representation of a pure state, defines an equivalent representation.
\end{expl}
\begin{figure}[h]
\centering
\def\svgwidth{14cm}
%% Creator: Inkscape 1.1.1 (c3084ef, 2021-09-22), www.inkscape.org
%% PDF/EPS/PS + LaTeX output extension by Johan Engelen, 2010
%% Accompanies image file './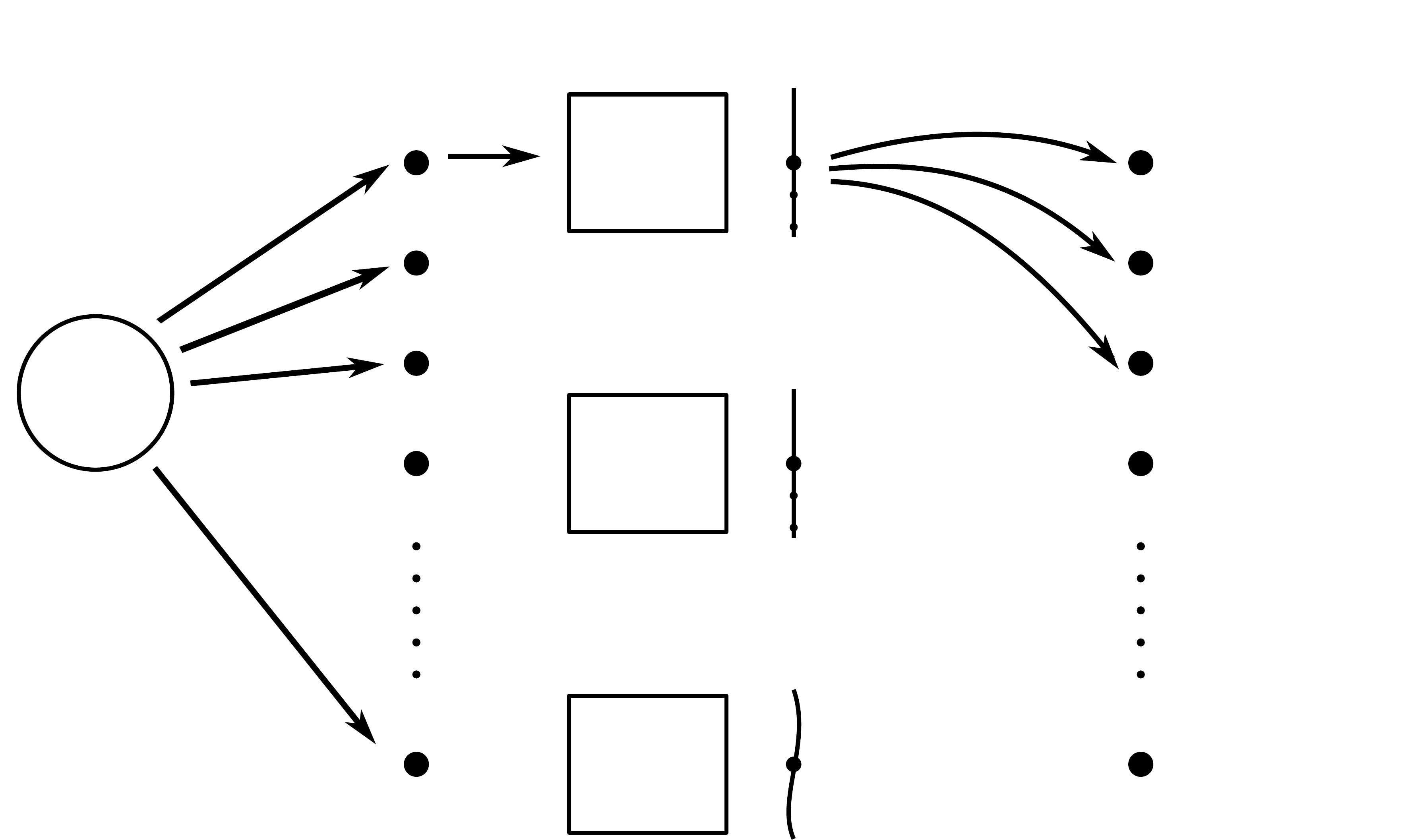' (pdf, eps, ps)
%%
%% To include the image in your LaTeX document, write
%%   \input{<filename>.pdf_tex}
%%  instead of
%%   \includegraphics{<filename>.pdf}
%% To scale the image, write
%%   \def\svgwidth{<desired width>}
%%   \input{<filename>.pdf_tex}
%%  instead of
%%   \includegraphics[width=<desired width>]{<filename>.pdf}
%%
%% Images with a different path to the parent latex file can
%% be accessed with the `import' package (which may need to be
%% installed) using
%%   \usepackage{import}
%% in the preamble, and then including the image with
%%   \import{<path to file>}{<filename>.pdf_tex}
%% Alternatively, one can specify
%%   \graphicspath{{<path to file>/}}
%%
%% For more information, please see info/svg-inkscape on CTAN:
%%   http://tug.ctan.org/tex-archive/info/svg-inkscape
%%
\begingroup%
  \makeatletter%
  \providecommand\color[2][]{%
    \errmessage{(Inkscape) Color is used for the text in Inkscape, but the package 'color.sty' is not loaded}%
    \renewcommand\color[2][]{}%
  }%
  \providecommand\transparent[1]{%
    \errmessage{(Inkscape) Transparency is used (non-zero) for the text in Inkscape, but the package 'transparent.sty' is not loaded}%
    \renewcommand\transparent[1]{}%
  }%
  \providecommand\rotatebox[2]{#2}%
  \newcommand*\fsize{\dimexpr\f@size pt\relax}%
  \newcommand*\lineheight[1]{\fontsize{\fsize}{#1\fsize}\selectfont}%
  \ifx\svgwidth\undefined%
    \setlength{\unitlength}{1018.85702175bp}%
    \ifx\svgscale\undefined%
      \relax%
    \else%
      \setlength{\unitlength}{\unitlength * \real{\svgscale}}%
    \fi%
  \else%
    \setlength{\unitlength}{\svgwidth}%
  \fi%
  \global\let\svgwidth\undefined%
  \global\let\svgscale\undefined%
  \makeatother%
  \begin{picture}(1,0.59741396)%
    \lineheight{1}%
    \setlength\tabcolsep{0pt}%
    \put(0,0){\includegraphics[width=\unitlength,page=1]{./Pics/GNS.pdf}}%
    \put(0.05200805,0.3071041){\color[rgb]{0,0,0}\makebox(0,0)[lt]{\lineheight{1.25}\smash{\begin{tabular}[t]{l}$\mfA$\end{tabular}}}}%
    \put(0.55211255,0.57461159){\color[rgb]{0,0,0}\makebox(0,0)[lt]{\lineheight{1.25}\smash{\begin{tabular}[t]{l}$H$\end{tabular}}}}%
    \put(0.41676752,0.47324087){\color[rgb]{0,0,0}\makebox(0,0)[lt]{\lineheight{1.25}\smash{\begin{tabular}[t]{l}$B(H_{1})$\end{tabular}}}}%
    \put(0.41676752,0.04532693){\color[rgb]{0,0,0}\makebox(0,0)[lt]{\lineheight{1.25}\smash{\begin{tabular}[t]{l}$B(H_{n})$\end{tabular}}}}%
    \put(0.27224192,0.57475537){\color[rgb]{0,0,0}\makebox(0,0)[lt]{\lineheight{1.25}\smash{\begin{tabular}[t]{l}$\mcS(\mfA)$\end{tabular}}}}%
    \put(0.78773385,0.57475537){\color[rgb]{0,0,0}\makebox(0,0)[lt]{\lineheight{1.25}\smash{\begin{tabular}[t]{l}$\mcS(\mfA)$\end{tabular}}}}%
  \end{picture}%
\endgroup%

\caption{The state space \( \mcS(\mfA) \) yields different GNS representations. Some are equivalent some are not. In particular, each (cyclic) vector from the GNS Hilbert space defines a state. The GNS representations of these states are all equivalent to the original representation. Also other states (not created from vectors from the Hilbert space) can yield equivalent representations.}
\end{figure}

\newpage
\section{Operator algebras in physics}
\begin{sectionmeta}
This section is based on two books. The part on classical mechanics as well as the formulation of the Dirac-von Neumann axioms is adopted from \cite{strocchi2008introduction}. The axioms for a general physical system are from \cite{emch2009algebraic}. Nevertheless, axiom 4 is shortened, because otherwise we would need too much additional notation. Also the final axiom from \cite{emch2009algebraic} is missing entirely, because by now we have a classification of Jordan-Banach algebras that make the axiom (more or less) obsolete. The operational motivation (viewing observables as measurement devices) is again from \cite{strocchi2008introduction}. Further reading: the book by Strocchi \cite{strocchi2008introduction} is very commendable for a simple introduction to the physical side of the discussion. Despite being a book for mathematicians, it focuses on the general picture to phrase quantum mechanics in the algebraic language. It is easy to read and is not too heavy on the mathematical details. As a complementary read the book by Emch \cite{emch2009algebraic} is mathematically much more precise and helps to understand how careful and rigorous the algebraic formulation is constructed. For the algebraic formulation of statistical physics Ruelle \cite{Ruelle:1974ur} is a great reference.
Alternatively, the two volumes by Bratteli and Robinson \cite{bratteli2012operator, bratteli2012operatorII} give a self-consistent introduction to the subject. The first volume covers the \Cs-algebra theory, the second volume discusses the algebraic formulation of quantum mechanics and non-relativistic quantum field theory. Contrary to Strocchi and Emch, however, these books focus almost exclusively on the mathematics. The deep physical ideas that historically lied underneath the subject are not a matter of discussion in these two books.
\end{sectionmeta}
\subsection{Classical mechanics}

We separate the description of a physical system into two parts: \textit{kinematics} and \textit{dynamics}.
Kinematics describes the possible configurations of the system; dynamics describes their evolution in time.

\subsubsection{Kinematics}
Kinematics is concerned with describing configurations of a physical system --- objects that uniquely determine all experimental measurements. Such configurations are called \textit{states}. To describe the states we usually choose (independent) variables that determine the state uniquely.  Each such independent physical variable is sometimes termed the \textit{(physical) degree of freedom}.

One way to define a system and to identify its degrees of freedom is to use the Lagrangian formalism. Another, is to use the Hamiltonian description. In the latter, the state is uniquely described in terms of \textit{canonical variables}, which one can think of as (generalized) positions and (generalized) momenta of a particle. The state of a mechanical system is described by a set of canonical variables, \( \{ q, p\} \) where \( q = (q_{1},\dots q_{n}) \in \bbR^{n} \) and \( p = (p_{1}, \dots, p_{n}) \in \bbR^{n}\). The collection of all possible canonical variables for the system is called the phase space \( \Gamma \). For simplicity, we will assume in the following that the phase space is compact, that is the range of possible positions and momenta of a particle are bounded. Since the kinetic energy of a particle with the mass \( m \) is given by
\begin{align}
    E_{kin}(q,p) = \frac{p^{2}}{2m},
\end{align}
this assumption means that the particle under consideration is confined to a bounded region and has finite energy. Thus this assumption is reasonable.

\paragraph{Observables} of the system are the physical quantities that result from the physical degrees of freedom. Thus in particular the position \( q \) and the momentum \( p \) are themselves observables. But all its powers and sums are observables too, and so the observables include the polynomials of the canonical variables \( q \) and \( p \). Without loss of generality, we can take the closure of this space in the uniform topology (supremum norm), to obtain that all observables form the space of real-valued continuous function \( \cc{\Gamma}_{\bbR} \).

Equipped with a pointwise multiplication \( \cc{\Gamma}_{\bbR} \) forms a real algebra. At the same time, \( \cc{\Gamma}_{\bbR} \) is the subspace of self-adjoint elements of the \Cs-algebra \( \cc{\Gamma} \) equipped with the conjugation as the star operation and topologized with the supremum norm.
Extending \( \cc{\Gamma}_{\bbR} \) to \( \cc{\Gamma} \) is quite natural from the algebraic point of view. It is also not harmful from the physics point of view since a \Cs-algebra is uniquely determined by its  self-adjoint elements as we have seen.
Thus, we see that the space of observables of a classical system defines a \Cs-algebra with identity \( I \) --- the constant unit function on \( \Gamma \).

\paragraph{Physical states} define the physical system at hand.
From our description, every state is a point of the phase space \( \{ q, p\} = P \in \Gamma \), and it uniquely determines the value of all the observables. Conversely, the value of all the observables uniquely determines the state by the Urysohn's and the Stone-Weierstrass theorem. This is sometimes called the duality relation between the states and observables.

Identifying states of a classical system with points of a phase space is, however, an idealization. It requires infinitely precise measurements to resolve single points of \( \Gamma \). For realistic measurements, we always have measurement errors and noise that corrupt the measurement. The general procedure (and the one that is in fact used in physics) takes into account these errors. To overcome the noise problem we typically repeat the experiment several times and take the average of the measurement results. More precisely, let us refer to the state of a classical system by \( \omega \). Then the value of an observable \( f \in \cc{\Gamma} \) in the state \( \omega \) is given by the following procedure: we take \( n \) replicas of \( \omega \) and perform measurements of \( f \) on each of the replicas. With this we obtain the measurement results \( m^{(\omega)}_{1}(f), m^{(\omega)}_{2}(f),\dots, m^{(\omega)}_{n}(f) \); subsequently we compute the average
\begin{align}
    \scalp{f}_{n}^{(\omega)} = \frac{1}{n} \big(  m^{(\omega)}_{1}(f) + m^{(\omega)}_{2}(f) +\cdots + m^{(\omega)}_{n}(f) \big)
\end{align}
It is one of the basic assumptions of experimental physics that for \( n \to \infty \) the above quantity converges. This limit value is the expectation value of the observable \( f \) in the state \( \omega \). It is this value that we assign to a measurement
\begin{align}
    \omega(f) = \lim_{n \to \infty} \scalp{f}_{n}^{(\omega)}.
\end{align}

Since the expectation values of all the observables is all there is that we can measure, we can identify a state of a physical system with the set of expectation values of its observables. Since \( \omega(f) \) is operationally defined by the averaging process it follows that such expectations are linear. That is, for \( f,g \in \cc{\Gamma} \) and  \( \alpha, \beta \in \bbC \) we have
\begin{align}
    \omega(\alpha f + \beta g) = \alpha \omega(f) + \beta \omega(g).
\end{align}
Also such expectations are positive
\begin{align}
    \omega(f^{\star}f) \geq 0
    \qquad \big( f \in \cc{\Gamma} \big),
\end{align}
since each measurement \( m^{(\omega)}_{k}(f^{\star}f) \) yields a positive number for every \( k \in \bbN  \).
Thus \( \omega \) defines a positive linear functional on the \Cs-algebra \( \cc{\Gamma} \).

As we have seen such functionals induce an inner product on the \Cs-algebra by
\begin{align}
    \scalp{f,g} = \omega(f^{\star}g)
    \qquad \big( f,g \in \cc{\Gamma} \big),
\end{align}
for which the Cauchy-Schwarz inequality holds
\begin{align}
    |\scalp{f,g}|^{2} \leq \scalp{f,f} \scalp{g,g}
    \qquad \big( f,g \in \cc{\Gamma} \big).
\end{align}
In particular, this implies that for every \( f \in \cc{\Gamma} \) we have
\begin{align}
    |\omega(f)| \leq \omega(I) \omega(f^{\star}f).
\end{align}
If \( \omega(I) = 0 \) then \( \omega(f) = 0 \) for every \( f \in \cc{\Gamma} \), meaning that \( \omega \) is a trivial functional. Conversely, if \( \omega \) is non-trivial then \( \omega(f) \neq 0 \) for at leas one \( f \in \cc{\Gamma} \) and thus \( \omega(I) \neq 0 \). In this case, we can normalize \( \omega \) to satisfy \( \omega(I) = 1 \).
Thus we can identify physical states of a classical system with (algebraic) states on the algebra of observables \( \cc{\Gamma} \).

\begin{info}[Reminder: Riesz-Markov representation theorem.]
Let \( X \) be a locally compact Hausdorff space. For any positive linear functional \( \omega \) on \( \cc{X} \), there is a unique Radon measure \( \mu \) on \( X \) such that
\begin{align}
    \omega(f) = \int_{X} f(x) \ d \mu(x)
    \qquad \big( f \in \cc{X} \big).
\end{align}
\begin{proof}
For the proof consult e.g.\ \cite[thm. IV. 14]{reed2012methods}.
\end{proof}
\end{info}

From the Riesz-Markov theorem it follows that a state \( \omega \) on \( \cc{\Gamma} \) defines a Radon measure (regular Borel measure) \( \mu_{\omega} \) on \( \Gamma \) such that
\begin{align}
    \omega(f) = \int_{\Gamma} f(q,p) \ d \mu_{\omega}(q,p)
    \qquad \big( f \in \cc{\Gamma} \big).
\end{align}
Since \(1 =  \omega(I) = \int_{\Gamma} d\mu_{\omega} = \mu(\Gamma) \) we see that \( \mu \) defines a probability measure on \( \Gamma \). Thus such an operational description of a physical system leads to a probabilistic description of a state, and observables then get the meaning of random variables.

As a special case, when \( \mu_{\delta_{P}} \) is a Dirac measure such that, for \( O \) an open subset of \( \Gamma \)
\begin{align}
    \mu_{\delta_{P}}(O) =
    \begin{cases}
     1 \text{ when } P \in O; \\
     0 \text{  when } P \notin O
\end{cases}
\end{align}
If \( P = \{ q_{0}, p_{0} \} \) we have
\begin{align}
    \delta_{P}(f) = \int_{\Gamma} f(q,p) \ d \mu_{\delta_{P}}(q,p) = f(P) = f(q_{0}, p_{0}).
\end{align}
Thus \( \delta_{P} \) is an evaluation state, and from the Gelfand representation Theorem \ref{thm:Gelfand rep} and the remark that follows it, we saw that such evaluation states define pure state of the \Cs-algebra \( \cc{\Gamma} \). Thus the standard description of classical mechanics, where points of the phase space define the states, corresponds to pure states in the algebraic formalism.

To quantify the precision of measurements for an observable \( f \in \cc{\Gamma} \) in a state \( \omega \), we use the variance
\begin{align}
    (\Delta_{\omega}f)^{2} = \omega \big((f - \omega(f)^{2}) \big).
\end{align}
The smaller the variance, the sharper is the measurement --- the more ``reliable'' is the mean value \( \omega(f) \). Pure states such as \( \delta_{P} \) (\( P \in \Gamma \)) minimize the variance, as for them we have \( \delta_{P}(f^{2}) = \delta_{P}(f)^{2} = f(P)^{2}  \) and thus \( \Delta_{\delta_{P}}f = 0 \). Thus such states are called \textit{dispersion-free}, as they represent measurements of a classical system without any uncertainty.

Since algebraic states capture both, the pure and the mixed states, they provide a unified framework to define sharp measurements as well as statistical fluctuations. For that reason we can use the same formalism to describe statistical mechanics.
In fact, in statistical mechanics the description of the system with pure states is unfeasible because we would need to know the exact positions and momenta for \( 10^{23} \) particles. Therefore, the only reasonable description of such systems is using probability distributions for states and random variables to describe observables. This is what we usually encounter in statistical approach to thermodynamics: the partition function there defines a probability distribution, and observables are random variables sampled from that distribution.

\subsubsection{Dynamics}
Dynamics of a physical system describes how the system evolves in time. When we use canonical variables \( \{ q, p \} \) to describe the system, then dynamics is defined by the time evolution of these variables
\begin{align}
    q \to q(t) &= q(t, q_{0}, p_{0})
    &
    p \to p(t) &= p(t, q_{0}, p_{0}).
\end{align}
In general, the evolution of canonical variables is given by the solution of a system of differential equation called \textit{the Hamiltonian equations of motion},
\begin{align}
    \frac{\partial q}{\partial t} &= \frac{\partial H(q,p)}{\partial p}
    &
    \frac{\partial p }{\partial t} &= - \frac{\partial H(q,p)}{\partial p},\\
    q(t_{0}) &= q_{0}
    &
    p(t_{0}) &=p_{0}.
\end{align}
where \( H(q,p) \) is called \textit{the Hamiltonian} of the system. If \( grad( H )\) is continuous, then the solution to the Hamiltonian equations of motions exists; and if \( grad (H) \) is continuously differentiable then the solution is unique. In this case, it can be extended to all times \( t \in \bbR \) and the evolution \( \{ q, p\} \to \{ q(t), p(t) \} \) is continuous in time \( t \in \bbR \), and depends continuously on the initial data \( \{ q(t_{0}), p(t_{0}) \} = \{ q_{0}, p_{0} \} \).

The evolution of the canonical variables implies that the results of measurements evolve in time. Since all observables are constructed from canonical variables, all observables evolve in time. In particular, this implies a one-parameter family of continuous invertible mappings \( \alpha_{t_{0}, t} \colon \cc{\Gamma} \to \cc{\Gamma} \) from the algebra of observables into itself so that
\begin{align}
    \alpha_{t_{0}, t} \colon f(q_{0},p_{0}) \to  f(q(t), p(t)).
\end{align}
If \( f,g \) are two polynomials of canonical variables
then it is easy to see that for \( \beta, \gamma \in \bbC \) and every \( t \in \bbR \) we have,
\begin{align}
    \alpha_{t_{0},t}(\beta f + \gamma g) &=
    \beta \alpha_{t_{0},t}( f) + \gamma \alpha_{t_{0},t}(g)
    = \beta f(t,q_{0},p_{0}) + \gamma g(t, q_{0}, p_{0}),\\
    \alpha_{t_{0},t} (f^{\star}) &=
    \alpha_{t_{0},t}(f)^{\star}
    = f(t, q_{0}, p_{0})^{\star}, \\
    \alpha_{t_{0},t}(fg) &=
    \alpha_{t_{0},t}(f)\alpha_{t_{0},t}(g)
    = f(t,q_{0}, p_{0} ) g(t, q_{0}, p_{0}).
\end{align}
By continuity, these properties extend to the whole algebra \( \cc{\Gamma} \), and thus \( \alpha_{t_{0},t} \) defines a one-parameter family of \( \star \)-automorphisms (\( \star \)-isomorphisms from the algebra into itself).
When the dynamics depends only on the difference \( t-t_{0} \), i.e.\ it is \textit{time-translation invariant}, then the one-parameter family is actually a one-parameter group of automorphisms \( \{ \alpha_{t} \colon t \in \bbR  \} \) on \( \cc{\Gamma} \). By duality it defines a one-parameter group of transformations \( \alpha_{t}^{\star} \) on the state space as
\begin{align}
    \omega_{t}(f) = (\alpha_{t}^{\star}(\omega))(f) = \omega( \alpha_{t}(f))
    \qquad \big( f \in \cc{\Gamma} \big).
\end{align}

\vspace{0,5cm}
In summary we see, that observables of a classical system define the commutative \Cs-algebra of observables \( \cc{\Gamma} \), and physical states define (algebraic) state on that algebra; dynamics of the canonical system is given by a one-parameter group of \( \star \)-automorphisms on \( \cc{\Gamma} \). In particular, all physically observable features of a physical system are captured by the algebraic formulation.
However, still in the standard approach we start with the phase space that describes our system, and then derive the \Cs-formulation.

The Gelfand representation Theorem \ref{thm:Gelfand rep}, however, allows us to reverse the logic. Instead of starting with the phase space, and deriving the algebra from it, we can start with a commutative \Cs-algebra \( \mfA \) that captures an abstract characterization of observables of our system; define the physical states as the state space \( \mcS(\mfA) \); and define the dynamics as a \( \star \)-automorphism on \( \mfA \). By the Gelfand representation Theorem the abstract algebra \( \mfA \) can be represented as an algebra of continuous functions \( \cc{X} \) on a compact Hausdorff space \( X \). Every dispersion-free state of \( \mfA \) is associated with the point in \( X \) and so the space \( X \) gets the meaning of the phase space. In this sense, the algebra comes first, from which the phase space is then derived.
This is the starting point for the algebraic construction. Using this motivation we aim in the next section to define a general algebraic framework for physical theories.

\subsection{An algebraic construction for physical theories}

The discovery of quantum mechanics soon proved to be pivotal for physics. This inspired a rapid development in the field. This development was of physical nature, as more and more quantum mechanical models described more and more experiments with extraordinary precision. However, on the mathematical side the theory was shaky, and the concepts were hardly justified on physical grounds. To formalize the theory, Dirac and von Neumann provided 5 axioms to summarize a common mathematical structure of quantum mechanics as developed by Heisenberg and Schr\"odinger --- sometimes referred to as the Dirac-von Neumann axioms of quantum mechanics. Before we delve into the algebraic framework we summarize a simplified version of these axioms that define the standard approach to quantum mechanics.
\paragraph{States:}
The states of a physical system are described by rays (essentially unit vectors) in a separable Hilbert space \( H \).
\paragraph{Observables:}
The observables of a physical system are described by the set of bounded self-adjoint operators in \( H \).
\paragraph{Expectations:}
If a state \( \omega \) is described by the vector \( \psi \in H \), for any observable \( A \) the expectation \( \omega(A) \), defined by the average of the outcomes of replicated measurements of \( A \) performed on the system in the state \( \omega \), is given by the Hilbert space matrix element
\begin{align}
    \omega(A) = \scalp{\psi, A \psi}
\end{align}
\paragraph{Dirac canonical quantization:}
Canonical coordinates and momenta satisfy the following commutation relations
\begin{align}
    [q_{i}, q_{j}] = 0 = [p_{i}, p_{j}],
    &&
    [q_{i}, p_{j}] = \imath \hbar \delta_{i,j},
    \qquad \big( i,j = 1,2,\dots, n \big).
\end{align}
\paragraph{Schr\"odinger representation:}
The canonical commutation relations are represented by the following operators in \( H = L^{2}(\bbR^{n}, \lambda) \):
\begin{align}
    q_{i} \psi(x) = x_{i} \psi(x),
    &&
    p_{j} \psi(x) = - \imath \hbar \frac{\partial \psi}{\partial x_{j}}(x)
    \qquad \big( i,j = 1,2,\dots, n \big).
\end{align}

Even though the physical framework proved to be very successful, the physical basis for many of its assumptions was unclear and unjustifiable.
For that reason, a group of mathematicians/physicists pioneered by Jordan, von Neumann, and Wigner, aimed to provide a conceptually clear, and mathematically rigorous formalism for the above postulates; to justify some of the tools in use of quantum mechanics; and to disregard some of the assumptions not supported by mathematics or physical intuition. In this section we discuss the summary of that approach that took a long time to be developed.

The idea of the algebraic approach is to use reasonable physical assumptions to define an algebraic structure on the set of observables of a theory. Then to show that this algebraic structure is rich enough to provide all the necessary tools used in quantum mechanics as formulated in the above axioms.
Following the literature, we formulate the necessary assumptions on the set of observables as axioms.

\begin{axiom}
To each physical system \( \Sigma \) we can associate the triple \( (\pA, \pS, \scalp{;}) \) formed by the set \( \pA \) of all observables, the set \( \pS \) of all its states, and a mapping \( \scalp{;} \colon (\pA,\pS) \to \bbR\) that assigns  to each pair \( (A,\phi) \in (\pA,\pS) \) a real number \( \scalp{\phi;A} \) that we interpret as the expectation value of the observable \( A \) in the state \( \phi \).
\end{axiom}

With a state of a physical system, we identify the method used for its preparation. That is a (reproducible) way to prepare a configuration of the system.
We assume that a state is fully determined by all its expectation values, that is if \( \phi, \varphi \in \pS \) are such that
\begin{align}
    \scalp{\phi; A} = \scalp{\varphi; A}
    \qquad \big( A \in \pA \big),
\end{align}
then this must imply that \( \phi = \varphi \). That is, we assume that there are no hidden properties that we can never measure. We say \textit{the observables separate the states}.

With an observable we identify a measurement device. Thus for any \( A \in \pA \) we can imagine a device with a pointer scale. The expectation value, is then the average over \( n \) measurements of \( A \) in identically prepared states \( \omega \). Ideally, \( n \) shall go to infinity, practically it has to be large enough to assure that our average is meaningful.
If any two devices produce the same expectation values for all states \( \phi \in \pS \), then we identify these devices. That is we assume that each measurement device produces a unique set of expectation values, when evaluated in all states of \( \pS \). We say, that \( \pS \) \textit{separates the observables} or simply that the set \( \pS \) is \textit{full with respect to} \( \pA \) (see Figure \ref{fig:observabels and states} for intuition).

\begin{figure}[h]
    \centering
    \subfloat[Observables: measurement devices.]{%
        \def\svgwidth{7cm}
    	%% Creator: Inkscape 1.1.1 (c3084ef, 2021-09-22), www.inkscape.org
%% PDF/EPS/PS + LaTeX output extension by Johan Engelen, 2010
%% Accompanies image file './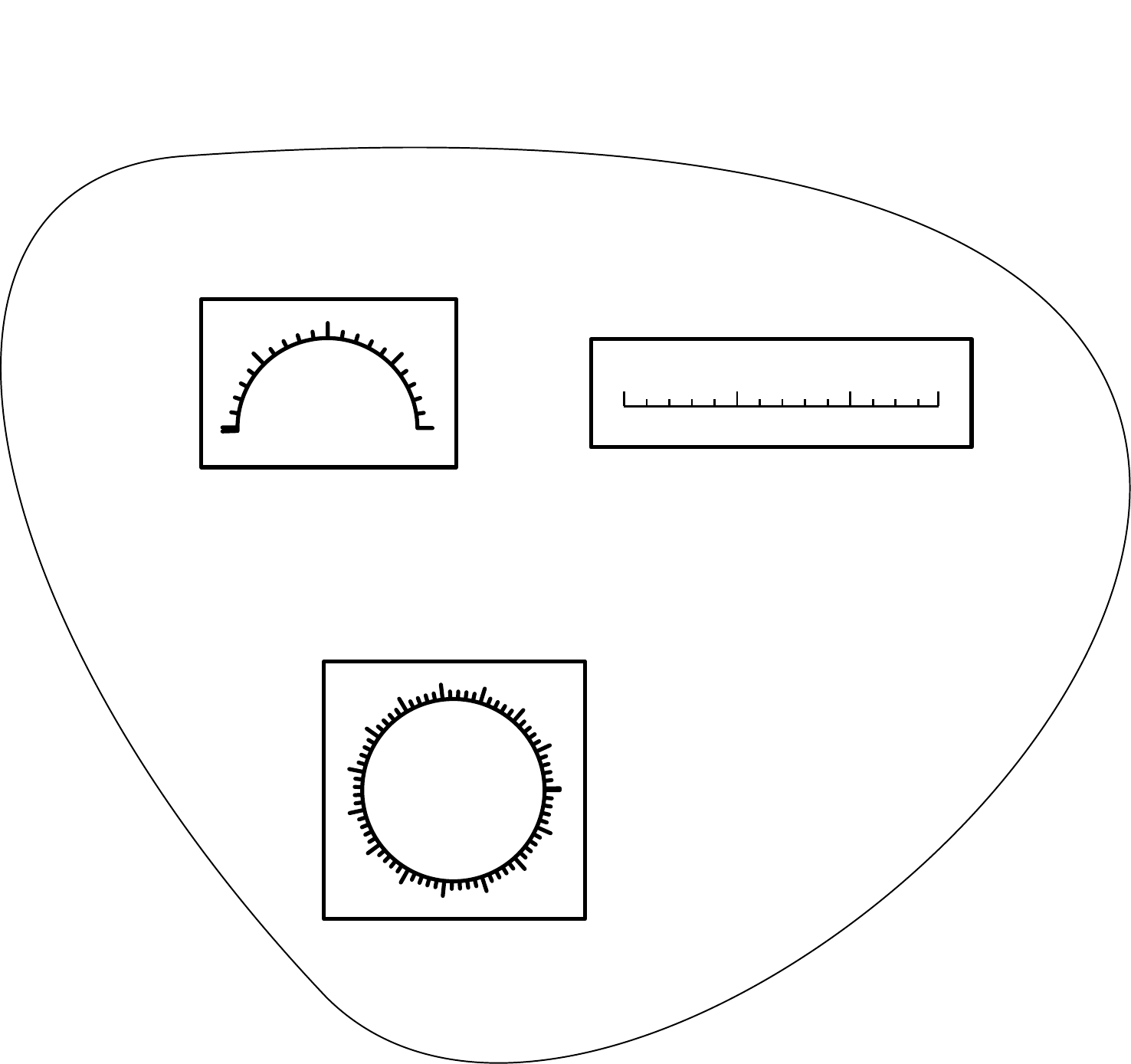' (pdf, eps, ps)
%%
%% To include the image in your LaTeX document, write
%%   \input{<filename>.pdf_tex}
%%  instead of
%%   \includegraphics{<filename>.pdf}
%% To scale the image, write
%%   \def\svgwidth{<desired width>}
%%   \input{<filename>.pdf_tex}
%%  instead of
%%   \includegraphics[width=<desired width>]{<filename>.pdf}
%%
%% Images with a different path to the parent latex file can
%% be accessed with the `import' package (which may need to be
%% installed) using
%%   \usepackage{import}
%% in the preamble, and then including the image with
%%   \import{<path to file>}{<filename>.pdf_tex}
%% Alternatively, one can specify
%%   \graphicspath{{<path to file>/}}
%%
%% For more information, please see info/svg-inkscape on CTAN:
%%   http://tug.ctan.org/tex-archive/info/svg-inkscape
%%
\begingroup%
  \makeatletter%
  \providecommand\color[2][]{%
    \errmessage{(Inkscape) Color is used for the text in Inkscape, but the package 'color.sty' is not loaded}%
    \renewcommand\color[2][]{}%
  }%
  \providecommand\transparent[1]{%
    \errmessage{(Inkscape) Transparency is used (non-zero) for the text in Inkscape, but the package 'transparent.sty' is not loaded}%
    \renewcommand\transparent[1]{}%
  }%
  \providecommand\rotatebox[2]{#2}%
  \newcommand*\fsize{\dimexpr\f@size pt\relax}%
  \newcommand*\lineheight[1]{\fontsize{\fsize}{#1\fsize}\selectfont}%
  \ifx\svgwidth\undefined%
    \setlength{\unitlength}{424.95218185bp}%
    \ifx\svgscale\undefined%
      \relax%
    \else%
      \setlength{\unitlength}{\unitlength * \real{\svgscale}}%
    \fi%
  \else%
    \setlength{\unitlength}{\svgwidth}%
  \fi%
  \global\let\svgwidth\undefined%
  \global\let\svgscale\undefined%
  \makeatother%
  \begin{picture}(1,0.9411285)%
    \lineheight{1}%
    \setlength\tabcolsep{0pt}%
    \put(0,0){\includegraphics[width=\unitlength,page=1]{./Pics/pN.pdf}}%
    \put(0.61788895,0.26464022){\color[rgb]{0,0,0}\makebox(0,0)[lt]{\lineheight{1.25}\smash{\begin{tabular}[t]{l}$\ddots$\end{tabular}}}}%
    \put(0.43641782,0.88680271){\color[rgb]{0,0,0}\makebox(0,0)[lt]{\lineheight{1.25}\smash{\begin{tabular}[t]{l}$\pA$\end{tabular}}}}%
    \put(0.2738094,0.70768989){\color[rgb]{0,0,0}\makebox(0,0)[lt]{\lineheight{1.25}\smash{\begin{tabular}[t]{l}$A$\end{tabular}}}}%
    \put(0.67456653,0.6682102){\color[rgb]{0,0,0}\makebox(0,0)[lt]{\lineheight{1.25}\smash{\begin{tabular}[t]{l}$B$\end{tabular}}}}%
    \put(0.39577849,0.3918525){\color[rgb]{0,0,0}\makebox(0,0)[lt]{\lineheight{1.25}\smash{\begin{tabular}[t]{l}$C$\end{tabular}}}}%
  \end{picture}%
\endgroup%

        \label{fig:pN}%
        }
        \hspace{1cm}
        \subfloat[States: methods of state preparation.]{%
            \def\svgwidth{7cm}
        	%% Creator: Inkscape 1.1.1 (c3084ef, 2021-09-22), www.inkscape.org
%% PDF/EPS/PS + LaTeX output extension by Johan Engelen, 2010
%% Accompanies image file './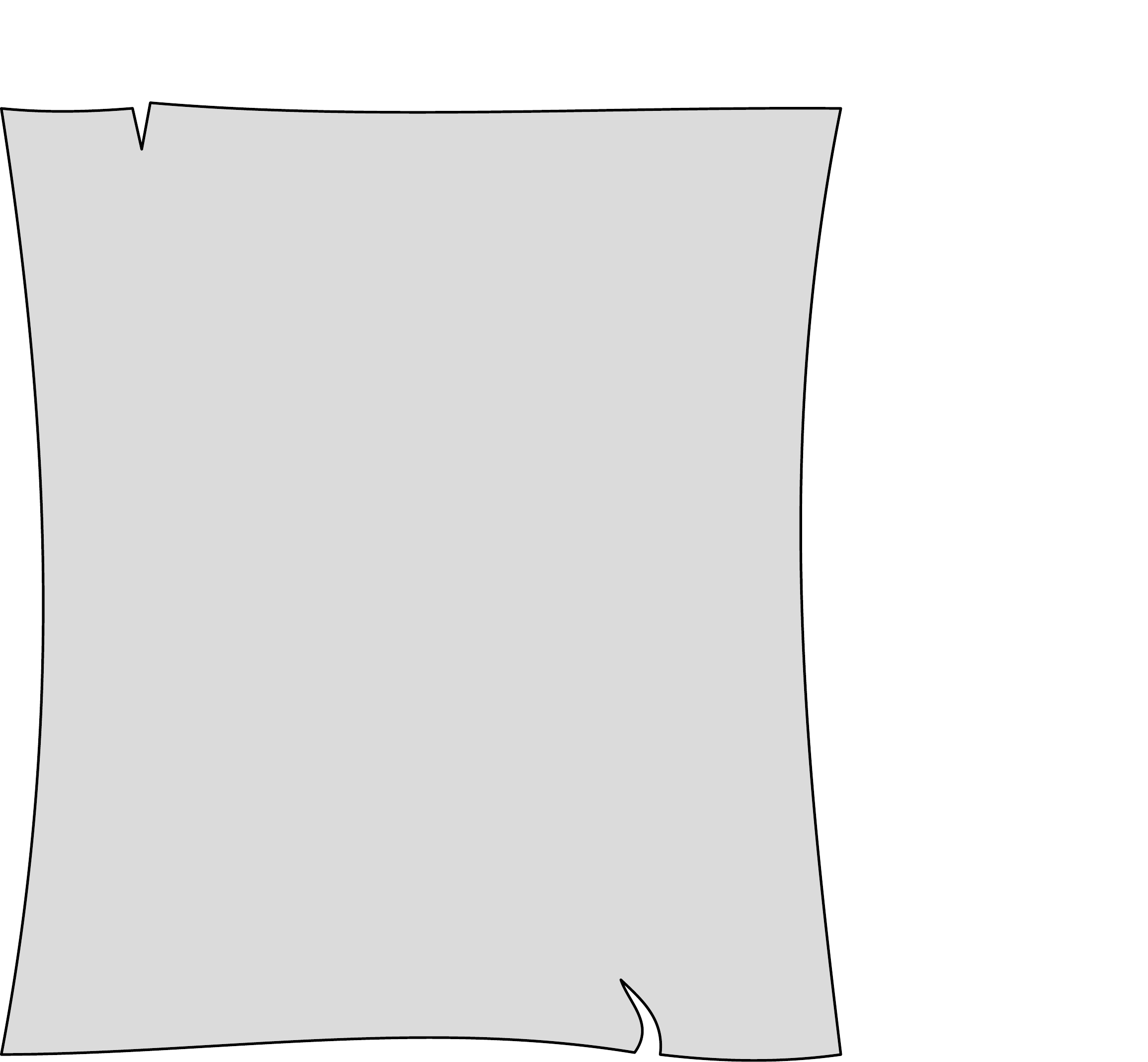' (pdf, eps, ps)
%%
%% To include the image in your LaTeX document, write
%%   \input{<filename>.pdf_tex}
%%  instead of
%%   \includegraphics{<filename>.pdf}
%% To scale the image, write
%%   \def\svgwidth{<desired width>}
%%   \input{<filename>.pdf_tex}
%%  instead of
%%   \includegraphics[width=<desired width>]{<filename>.pdf}
%%
%% Images with a different path to the parent latex file can
%% be accessed with the `import' package (which may need to be
%% installed) using
%%   \usepackage{import}
%% in the preamble, and then including the image with
%%   \import{<path to file>}{<filename>.pdf_tex}
%% Alternatively, one can specify
%%   \graphicspath{{<path to file>/}}
%%
%% For more information, please see info/svg-inkscape on CTAN:
%%   http://tug.ctan.org/tex-archive/info/svg-inkscape
%%
\begingroup%
  \makeatletter%
  \providecommand\color[2][]{%
    \errmessage{(Inkscape) Color is used for the text in Inkscape, but the package 'color.sty' is not loaded}%
    \renewcommand\color[2][]{}%
  }%
  \providecommand\transparent[1]{%
    \errmessage{(Inkscape) Transparency is used (non-zero) for the text in Inkscape, but the package 'transparent.sty' is not loaded}%
    \renewcommand\transparent[1]{}%
  }%
  \providecommand\rotatebox[2]{#2}%
  \newcommand*\fsize{\dimexpr\f@size pt\relax}%
  \newcommand*\lineheight[1]{\fontsize{\fsize}{#1\fsize}\selectfont}%
  \ifx\svgwidth\undefined%
    \setlength{\unitlength}{665.24626133bp}%
    \ifx\svgscale\undefined%
      \relax%
    \else%
      \setlength{\unitlength}{\unitlength * \real{\svgscale}}%
    \fi%
  \else%
    \setlength{\unitlength}{\svgwidth}%
  \fi%
  \global\let\svgwidth\undefined%
  \global\let\svgscale\undefined%
  \makeatother%
  \begin{picture}(1,0.92449421)%
    \lineheight{1}%
    \setlength\tabcolsep{0pt}%
    \put(0,0){\includegraphics[width=\unitlength,page=1]{./Pics/states.pdf}}%
    \put(0.08580071,0.65733115){\color[rgb]{0,0,0}\makebox(0,0)[lt]{\lineheight{1.25}\smash{\begin{tabular}[t]{l}1) switch on the stove\end{tabular}}}}%
    \put(0.088377,0.56089379){\color[rgb]{0,0,0}\makebox(0,0)[lt]{\lineheight{1.25}\smash{\begin{tabular}[t]{l}2) take 100g sugar\end{tabular}}}}%
    \put(0.08824485,0.47443258){\color[rgb]{0,0,0}\makebox(0,0)[lt]{\lineheight{1.25}\smash{\begin{tabular}[t]{l}3) eat it\end{tabular}}}}%
    \put(0.09022661,0.37965784){\color[rgb]{0,0,0}\makebox(0,0)[lt]{\lineheight{1.25}\smash{\begin{tabular}[t]{l}4) hope for the best\end{tabular}}}}%
    \put(0.32476355,0.21532281){\color[rgb]{0,0,0}\makebox(0,0)[lt]{\lineheight{1.25}\smash{\begin{tabular}[t]{l}$\vdots$\end{tabular}}}}%
    \put(0.80490722,0.4340289){\color[rgb]{0,0,0}\makebox(0,0)[lt]{\lineheight{1.25}\smash{\begin{tabular}[t]{l}$\cdots$\end{tabular}}}}%
    \put(0.48528161,0.88979147){\color[rgb]{0,0,0}\makebox(0,0)[lt]{\lineheight{1.25}\smash{\begin{tabular}[t]{l}$\pS$\end{tabular}}}}%
  \end{picture}%
\endgroup%

            \label{fig:pS}%
            }
    \caption{
    The set of observables and the set of states of a physical system. After preparing a state according to a given method we can measure it with all observables and identify it with a list of all expectation values. If any to methods lead to the same list of expectation values we identify these methods. On the other hand, if any two observables provide two identical lists of expectation values when measuring all states, we identify them either.}
    \label{fig:observabels and states}
\end{figure}

Next we realize that we can use the states to compare observables. For \( A, B \in \pA \) we say that \( A \leq B \) if \( \scalp{\phi; A} \leq \scalp{\phi; B} \) for all state \( \phi \in \pS \). In particular, we call \( A \geq 0  \) positive if \( \scalp{\phi; A} \geq 0 \) for every state \( \phi \in \pS \). Especially, this implies that the states are positive functionals on \( \pA \).

\begin{axiom}
The relation \( \leq \) defined above is a partial ordering on \( \pA \); in particular \( A \leq B \) and \( B \leq A \) imply \( A = B \).
\end{axiom}
With our interpretation of observables as measurement devices we see that the condition \( \scalp{\phi; A} \geq 0 \) for every \( \phi \in \pS \) implies that every measurement \( A \) is positive. Since the set of state is full with respect to the set of observables this meas that the scale of the measurement device represented by \( A \) is positive. Thus we can identify the positive element with measurement devices whose pointer scale ranges in positive real numbers.

Next we notice that for any two observables \( A,B \in \pA \) the set \( \{ \scalp{\phi; A} + \scalp{\phi; B} \colon \phi \in \pS \} \) can be interpreted as a set of expectation values, thus a measurable quantity on \( \Sigma \). We strengthen this observation to the next axiom.

\begin{axiom}\label{ax:3}
\begin{enumerate}
\item[]

\item
For each observable \( A \in \pA \)  and any \( \lambda \in \bbR \) there exists an element \( (\lambda A) \in \pA \) such that
\begin{align}
    \scalp{\phi; \lambda A} = \lambda \scalp{\phi; A}
    \qquad \big(  \phi \in \pS \big).
\end{align}

\item
For any pair of observables \( A,B \in \pA \) there exists an element \( (A + B) \in \pA \) such that
\begin{align}
    \scalp{\phi; A + B} = \scalp{\phi; A} + \scalp{\phi; B}
    \qquad \big( \phi \in \pS \big).
\end{align}
\end{enumerate}
\end{axiom}
If \( A \in \pA \) represents a measurement and \( \lambda \) is a real number then we can think of \( \lambda A \) as a new device, whose pointer scale is rescaled by \( \lambda \).

This axiom, even though reasonable, is non-trivial when we interpret elements in \( \pA \) as measurement devices. Using purely operational arguments, it is not clear that for any two measurement devices represented by \( A \) and \( B \) there is always one device \( C \) that produces the same expectation values as the sum of expectation values of \( A \) and \( B \). Strictly speaking, with this assumption we cannot interpret the set \( \pA \) solely as a set of measurement devices, but we extend this set to all hypothetical measurements that result from adding expectation values of measurements together.
 In particular, this defines a linear structure on \( \pA \) making it into a vector space. Nevertheless, this assumption is still quite reasonable. Being pragmatic, we can agree that the set \( \pA \) contains all theoretically measurable devices. Hence also such devices that correspond to measurements of \( A + B \).

Recall from our analysis of a classical system, that dispersion-free states corresponded to idealized measurements with infinite precision, i.e.\ to points of the phase space. Because of this, dispersion-free states uniquely defined all classical observables. It seems reasonable to expect that this should hold in a general physical system. We use this intuition to state the next axiom.

\begin{axiom}\label{ax:4}
For every observable \( A \) the set \( \pS \) contains  the set of dispersion-free states \( \pS_{A} \) for \( A \) that uniquely determines the one-dimensional subspace of \( \pA \) generated by \( A \); for any two observables \( A, B \) the dispersion-free states satisfy \( \pS_{A} \cap \pS_{B} \subseteq \pS_{A+B} \) and \( \pS_{I} = \pS \).
\end{axiom}
In particular, this axiom implies that for two observables \( A \) and \( B \) if \( \pS_{A} \subseteq \pS_{B} \) and \( \scalp{\phi;A} = \scalp{\phi;B} \) for every \( \phi \in \pS_{A} \) then \( A \) is equal to \( B \).
Thus dispersion-free states in a certain sense, uniquely determine the observables. This reflects our original motivation to introduce this axiom.
Nevertheless, this axiom requires an assumption of dispersion-free states, which is an idealization of a physical process of measurement, as we discussed above. To argue that such states are reasonable, we can take the standpoint that even though every realistic measurement is noisy, the noise can be arbitrary reduced. Therefore there is no fundamental (nature given) threshold which we cannot surpass by making a measurement more precise/less noisy. (In fact, this may be violated if one takes General Relativity into account. But we are not going into this discussion here.)

If \( A \) is an observable that represents a measurement device then we  can interpret \( A^{2} \) as a measurement device whose scale is the square of the pointer scale of \( A \). In particular, \( A^{2} \) is therefore always positive. Similarly, we can define any power \( A^{n} \) for a non-negative integer \( n \), and set \( I = A^{0} \). We put the existence of such powers as an additional axiom.

\begin{axiom}\label{ax:5}
For any observable \( A \) and any non-negative integer \( n \) there exists at least one element \( A^{n} \in \pA \) such that
\begin{enumerate}
\item
\( \pS_{A} \subseteq \pS_{A^{n}}  \),

\item
\( \scalp{\phi; A^{n}} = \scalp{\phi; A}^{n} \) for all \( \phi \in \pS_{A} \).
\end{enumerate}
\end{axiom}

From the second point of this Axiom and Axiom \ref{ax:4} we see, that the element \( A^{n} \) is unique in \( \pA \) and we can unambiguously call it the \( n \)th power of \( A \).

With this axiom the vector space \( \pA \) obtains a power structure. Moreover, it follows that \( \pA \) contains a unit \( I = A^{0} \) for every \( A \in \pA \) and an element \( O \) such that
\begin{align}
    \scalp{\phi; I} = 1
    && \text{and} &&
    \scalp{\phi; O} = 0
\qquad \big( \phi \in \pS \big).
\end{align}
From the second point of this axiom, it follows that \( \pS_{I} = \pS_{A^{0}} =  \pS \), and that the usual rules apply: \( A^{0} = I, \ A^{1} = A, \ I^{n}= I \) and \( O^{n} = O \) for \( n > 1 \).

Furthermore, since \( \scalp{\phi; A^{2}} \geq 0  \) for all \( \phi \in \pS_{A} \) and since \( \pS_{A} \) fully determines \( A^{2} \) it follows that \( A^{2} \geq 0 \). In particular this implies that \( \scalp{\phi; A^{2}} \geq 0 \) for every state \( \phi \in \mcS \). Since the set \( \mcS \) is full with respect to \( \mcA \), we can conclude that \( A^{2} + B^{2} + C^{2} + \cdots =0 \) implies \( A = B =C = \cdots 0 \). This conditions is sometimes called \textit{reality}, in analogy to the fact that this implication holds for real numbers (as opposed to complex numbers).

Even though we have argued for a lot of structure on the space \( \pA \), one important property is still missing: the distributivity of powers over addition of any two elements \( A,B \). In general, for \( A,B \in \pA \) we have that
\begin{align}
    (A+B)^{n} = (A+B)(A+B) \neq A (A+B) + B(A+B).
\end{align}
Stated more precisely, we never defined the product between \( A \) and \( B \) that again defines an observable \( AB \). Notice, that in the usual approach to quantum mechanics, where we identify observables with self-adjoint operators on a Hilbert space this is in particular false: a product of self-adjoint elements is in general not self-adjoint.

On the other hand, for any observable \( A \) the product  \( AA \) is defined  by the unique element \( A^{2} \) from the Axiom \ref{ax:5}. We use this observation to equip \( \pA \) with a product.
\begin{defi}
With each pair of observables \( A,B \) we associate an element \( A \circ B \) in \( \pA \), called the symmetric product of \( A \) and \( B \), defined by
\begin{align}
    A\circ B = \frac{1}{2}\big( (A+B)^{2} - A^{2} - B^{2} \big).
\end{align}
\end{defi}
This product depends on powers of the same observable (which we defined) and on sums between observables (which we also defined).
Moreover, this product is commutative, \( A \circ B = B \circ A \), and from Axiom \ref{ax:4} and \ref{ax:5} it satisfies
\begin{align}
    \scalp{\phi: A \circ B} = \scalp{\phi; A} \scalp{\phi; B}
    \qquad \big( \phi \in \pS_{A} \cap \pS_{B} \big).
\end{align}
Thus, since dispersion-free states uniquely determine the observables it follows that \( A^{n} \circ A^{m} = A^{n+m} \). In particular, we have \( A \circ I = A \). In other words, the power structure that we had derived after the previous Axiom, can be generated by the symmetric product on \( \pA \). Therefore \( \circ \) naturally extends the structure that we already have. This makes \( \pA \) almost an algebra with identity, however, not quite yet.

The symmetric product, even though commutative, is neither associative, \( (A\circ B) \circ C \neq A \circ (B \circ C) \), nor does it distribute over addition. The non-associativity is not a big loss, and in general there is little algebraic reason to want to restore associativity. Distributivity, on the other hand, is crucial for a definition of an algebra.
At the same time, it seems hard to find a physically compelling argument to make the product distributive.
However, one can show (even though we do not do it here) that distributivity follows from the following mild looking assumption.

\begin{axiom}\label{ax:6}
For any pair of observables \( A,B \) the symmetric product is homogeneous in the fist entry, i.e.\ for every real number \( \lambda \) it satisfies \( \lambda (A \circ B) = (\lambda A) \circ B  \).
\end{axiom}
Since the product is commutative, homogeneity in the first entry implies homogeneity in both entries.
With this assumption the symmetric product becomes distributive. In fact, it also becomes weakly associative meaning that for \( A,B \in \pA \) the product satisfies \( A\circ(B \circ A^{2}) = (A \circ B) \circ A^{2} \). A commutative algebra with a symmetric product that satisfies the weak associativity is called a \textit{Jordan algebra}; and a Jordan algebra that satisfies the reality condition is called a \textit{real Jordan algebra}. Therefore we can say that with the first 6 Axioms the set of observables \( \pA \) becomes a real Jordan algebra. Notice, that the Axioms 1-5 are very natural from the physicists perspective. The Axiom \ref{ax:6} is the only questionable one. However, even though there is no direct evidence that the product must be distributive, there is no direct physical reason why it should not be. Mathematically, this assumption makes a significant difference.

At this point, our set \( \pA \) is an algebra, but it remains to be shown that this algebra is indeed useful: that it provides a computational or conceptual advantage over the standard approach to classical or quantum theory. In particular, it is important to classify such algebras, and to understand their representations. To approach this question we need to introduce topology on \( \pA \).
\begin{defi}
We define a mapping \( \| \cdot \| \colon \mcA \to \bbR \) such that for each \( A \in \pA \),
\begin{align}\label{eq:norm joradn}
    \| A \| = \sup \{| \scalp{\phi; A}| \colon \phi \in \pS \}.
\end{align}
\end{defi}
Identifying observables with measurement devices, relation \eqref{eq:norm joradn} assigns the largest possible expectation value to each measurement. This mapping has almost all the properties of a norm: if \( \|A \| = 0 \) then \( \scalp{\phi;A} = 0 \) for every \( \phi \in \pS \), and since \( \pS \) separates the observables we get \( A = 0 \); by Axiom \ref{ax:3} each state is linear and thus, for \( A \in \pA \) and \( \lambda \in \bbR \) we have \(
    \| \lambda A \| = | \lambda | \| A \|\); with \( B \in \pA \) it is clear that \(
    \| A + B \| \leq \| A \| + \| B \|\). Moreover, from the reality condition it also holds that \( \| A^{2} \| \leq \| A^{2} + B^{2} \|  \) for any \( A,B \in \pA \). The only missing requirement such that \eqref{eq:norm joradn} defines a norm, is that it is finite for every element of \( \pA \). We put this as an axiom.

\begin{axiom}\label{ax:7}
Each element \( A \) in \( \pA \) has finite norm and \( \pA \) is complete in the norm topology, and we identify elements in \( \pS \) with positive continuous linear functional with \( \scalp{\phi;I} = 1 \).
\end{axiom}

The reason why we require the norm of observables to be bounded can be again understood with our identification of observables with measurement devices (even though, by now the space \( \pA \) is significantly larger than it was in the beginning of our discussion). Each practical measurement device has a finite (bounded) scale. Therefore, if \( A \) represents a measurement device then even the largest expectation value of \( A \) is going to be bounded by the scale of that device. Measurements that contain infinity as a valid measurement outcome are even theoretically not feasible. Therefore this axiom is reasonable on physical grounds. Assuming that \( \pA \) is complete, on the other hand, is merely a practical requirement. Even though it does enlarge the space \( \pA \), it does not significantly alter the physical interpretation.

With this norm each state in \( \pS \) is by definition continuous. Thus a state \( \phi \in \pS \) is a positive, linear, continuous functional on the algebra \( \pA \) with \( \phi(I) = 1 \). Thus, \( \pS \) is the space of states in the sense of our algebraic Definition \ref{defi:states} \ref{defi: states it2}. For that reason, henceforth we will write \( \mcS(\pA) \) instead of \( \pS \) and use our standard notation \( \omega(A) \) instead of \( \scalp{\omega; A} \).

A non-trivial property of the introduced norm is stated in the following Lemma.
\begin{lem}\label{lem:norm on pA}
For every observable \( A \in \pA \) it holds
\begin{align}
    \| A^{2} \| = \| A \|^{2}.
\end{align}
\end{lem}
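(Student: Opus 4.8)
The plan is to prove the two inequalities $\|A^2\| \geq \|A\|^2$ and $\|A^2\| \leq \|A\|^2$ separately. The first is the easy direction. Since $A^2 = A\circ A \geq 0$, every state gives $\scalp{\phi;A^2}\geq 0$, so $\|A^2\| = \sup_{\phi\in\pS}\scalp{\phi;A^2}$. For a fixed state $\phi$ and any $\lambda\in\bbR$ the element $(A-\lambda I)^2$ is a square, hence positive, so $\scalp{\phi;A^2} - 2\lambda\scalp{\phi;A} + \lambda^2 \geq 0$; choosing $\lambda = \scalp{\phi;A}$ gives the ``variance is nonnegative'' inequality $\scalp{\phi;A^2}\geq \scalp{\phi;A}^2$. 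Taking the supremum over all states yields $\|A^2\| \geq \sup_{\phi}\scalp{\phi;A}^2 = \|A\|^2$.

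For the reverse inequality I would use the order structure. Since $-\|A\|\leq \scalp{\phi;A}\leq \|A\|$ for every state, both $\|A\|I - A$ and $\|A\|I + A$ are positive. Using bilinearity of the symmetric product (distributivity together with Axiom \ref{ax:6}) and the relations $I\circ I = I$, $I\circ A = A$, $A\circ A = A^2$, one computes the identity
\[
(\|A\|I - A)\circ(\|A\|I + A) = \|A\|^2 I - A^2 .
\]
Hence it suffices to show the left-hand side is a positive element: then $\scalp{\phi;A^2}\leq \|A\|^2$ for all $\phi$, and taking the supremum gives $\|A^2\|\leq \|A\|^2$.

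The main obstacle is precisely this positivity statement, because in a general real Jordan algebra the symmetric product of two positive elements need not be positive. What rescues us is that both factors are polynomials in the single element $A$ and therefore lie in the commutative, associative subalgebra $\mathcal{C}(A)$ generated by $I$ and $A$ (a single element generates an associative algebra). I would resolve the obstacle in one of two equivalent ways. Either construct a positive square root $R$ of $\|A\|I - A$ inside $\mathcal{C}(A)$ and observe that $X\mapsto \scalp{\phi; R\circ X\circ R}$ is a positive functional sending the positive element $\|A\|I + A$ to $\scalp{\phi;\|A\|^2 I - A^2}$; or, more cleanly, complexify $\mathcal{C}(A)$, verify it is a commutative \Cs-algebra, and apply the Gelfand representation (Theorem \ref{thm:Gelfand rep}) so that $A$ corresponds to a real continuous function $a$ on a compact Hausdorff space with $\|A\| = \sup_{x}|a(x)|$ and $A^2\leftrightarrow a^2$. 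The second route in fact gives both inequalities at once, since $\|A^2\| = \sup_{x} a(x)^2 = \big(\sup_{x}|a(x)|\big)^2 = \|A\|^2$. The only genuine content to check there is the \Cs-identity on $\mathcal{C}(A)$, which for functions of a single element reduces to the pointwise scalar identity $|a(x)|^2 = |a(x)\overline{a(x)}|$.
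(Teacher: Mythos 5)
Your first inequality, \( \|A^{2}\| \geq \|A\|^{2} \), is correct and is in substance the same argument the paper uses for that direction: the paper expands the positive squares \( (\|A\|I \pm A)^{2} \), while you expand \( (A - \scalp{\phi;A}I)^{2} \); your ``variance'' choice of \( \lambda \) is if anything a little cleaner, and it relies only on facts already established (squares are positive, distributivity, \( A \circ I = A \)).

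The gap is in the reverse inequality. You correctly isolate the crux --- positivity of \( (\|A\|I-A)\circ(\|A\|I+A) \) --- but both of your proposed repairs are circular at this point of the development. For the Gelfand route: to invoke Theorem \ref{thm:Gelfand rep} for the complexification of \( \mathcal{C}(A) \) you must first verify that this complexification is a commutative \Cs-algebra, and for the self-adjoint element \( A \) the \Cs-identity reads \( \|A^{\star}A\| = \|A^{2}\| = \|A\|^{2} \), which is exactly the statement under proof; saying the identity ``reduces to the pointwise scalar identity'' presupposes that the elements are already realized as functions on some space, i.e.\ presupposes the Gelfand representation you are trying to apply. For the square-root route: producing \( R = \sqrt{\|A\|I - A} \) inside \( \mathcal{C}(A) \) requires either a functional calculus on \( \mathcal{C}(A) \) (the paper's Corollary \ref{cor:square root} is a \Cs-algebra statement, so this is the same circularity) or a norm-convergent binomial series, which needs \( \|A^{n}\| \leq \|A\|^{n} \); but submultiplicativity of \( \|\cdot\| \) is not yet available --- in the paper it is \emph{derived from this very lemma} together with Cauchy-Schwarz, in the lines immediately following it. Remember that here \( \|\cdot\| \) is defined as a supremum of expectation values, and \( \|A^{2}\| = \|A\|^{2} \) is precisely the JB-algebra axiom one wants to extract from that definition, so no JB- or \Cs-spectral theory can be imported. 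The paper instead settles the positivity at the level of the axioms: both factors \( \|A\|I \pm A \) are polynomials in \( A \), hence by Axioms \ref{ax:4} and \ref{ax:5} they are determined by the dispersion-free states \( \pS_{A} \), on which expectations of products factorize, \( \scalp{\phi; B \circ C} = \scalp{\phi;B}\scalp{\phi;C} \geq 0 \); operationally, the product of two devices with positive pointer scales is positive. If you replace your functional-analytic repair by this (admittedly informal, but non-circular) axiom-level argument, the rest of your proof goes through.
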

\begin{proof}
By definition of the norm it follow that \( \omega(A) \leq \| A \| \) for every state \( \omega \in \mcS(\pA) \). Thus by linearity of \( \omega \) we have
\( \omega( \| A \| I \pm A) \geq 0 \), so both \( \|A \|I \pm A \) are positive. Thus \( (\|A \|I - A)(\| A \| I + A) \) is also positive (e.g. because it is uniquely determined by the dispersion-free states/ two measurement devices both with positive pointer scale).
Then for every state \( \omega \in \mcS(\pA) \),
\begin{align}
    \| A \|^{2} - \omega(A^{2}) = \omega \big((\| A \| I - A)(\| A \| I + A)\big) \geq 0,
\end{align}
which implies \( \|A \|^{2} \geq \| A^{2} \| \).

On the other hand, the positivity of \( (\| A \| I \pm A)^{2}  = \| A \|^{2} + A^{2} \pm 2 \| A \| A\) implies that for any state \( \omega \) we have
\begin{align}
    2 \|A \| \omega(A) \leq \| A \|^{2} + \omega(A^{2}) \leq \| A \|^{2} + \| A^{2} \|,
\end{align}
and therefore \( \|A\|^{2} \leq \|A^{2}\| \).
\end{proof}
From the Proposition \ref{prop:Cauch Schwarz} we know that states induce an inner product on the vector space on which they are defined; in this case on \( \pA \) by,
\begin{align}
    \omega(A\circ B) = \scalp{A, B}
    \qquad \big( A,B \in \pA \big).
\end{align}
Moreover, we have seen that this product satisfies the Cauchy-Schwarz inequality. This inequality, in turn, implies that the product is continuous since for \( A,B \in \pA \) we have
\begin{align}\label{eq:CS for jordan}
    |\omega( A \circ B)|^{2} \leq \omega(A^{2}) \omega(B^{2}),
\end{align}
for every \( \omega \in \mcS(\pA) \) and every pair \( A,B \in \pA \).
Taking the supremum on both sides of \eqref{eq:CS for jordan} and using Lemma \ref{lem:norm on pA} implies
\( \| A\circ B \|^{2} \leq \| A^{2} \| \|B^{2}\| = \| A \|^{2} \| B \|^{2}\) for every \( A,B \in \pA \).

A real Jordan algebra, which is a Banach space in the norm in which the product is continuous is called a \textit{Jordan-Banach algebra}.
However, we have more structure. The norm of our Jordan-Banach algebra also satisfies \( \|A^{2} \| = \|A \|^{2} \) and the reality condition \( \| A^{2} \| \leq \| A^{2} + B^{2} \| \). Such Jordan-Banach algebras are called \textit{JB-algebras}.

In summary, \( \pA \) is a JB-algebra which is commutative, distributive, but not-necessarily associative with a continuous product.
From the mathematical point of view, however, it is easier to treat an algebra which is associative but not-necessarily commutative.
Because of this, it is convenient to ask the question when the JB-algebra can be embedded into a larger algebra such that the symmetric product is given by the (associated but not necessarily commutative) product \( AB \) such that
\begin{align}
    A \circ B = \frac{1}{2} \big( (A+B)^{2} - A^{2} - B^{2} \big) = \frac{AB + BA}{2}.
\end{align}
To achieve this, we can first complexify the algebra \( \pA \) by considering the set
\begin{align}
    \{ A + \imath B \colon A,B \in \pA \},
\end{align}
and then introduce the involution by \( \star \colon A + \imath B \to A - \imath B \). One can show that most of the properties of the \Cs-algebra are satisfied. Especially, one can extend the above norm such that it satisfies the \Cs-property \( \|A^{\star} A \| = \|A \|^{2} \) and that \( \|A^{\star} \| = \|A \| \). However, the product on this algebra may still be non-associative. This is the reason why the embedding may sometimes fail.

 A  JB-algebra is called \textit{special} if its embedding into a \Cs-algebra exists; otherwise it is called \textit{exceptional}. Even though there is always a homomorphism from a JB-algebra into a \Cs-algebra \cite[thm. 7.1.8]{hanche1984jordan}, there may not exist an isometric isomorphism that embeds a JB algebra into a \Cs-algebra. A JB-algebra that is isometrically isomorphic to a Jordan subalgebra of a \Cs-algebra is called a JC-algebra.
The following theorem \cite[thm. 7.2.3 and lem. 7.2.2]{hanche1984jordan} characterizes all exceptional JB algebras.
\begin{thm}
Any exceptional JB-algebra \( \pA \) contains a unique purely exceptional ideal \( \mfI \) such that \( \pA \slash \mfI \) is a JC-algebra. The ideal \( \mfI \) is itself a Jordan algebra and each of its factor representations is onto \( H_{3}(\bbO) \).
\end{thm}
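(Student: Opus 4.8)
The plan is to build $\mfI$ from the representation theory of $\pA$ and to reduce everything to the classification of exceptional \emph{factor} representations. First I would recall (or cite) the structural backbone: every JB-algebra admits a separating family of factor representations, and a JBW-factor is either special or isomorphic to the Albert algebra $H_{3}(\bbO)$. This dichotomy is the deep input. It rests on the Jordan coordinatization theorem, which says that a JBW-factor containing a frame of $n \geq 3$ orthogonal exchangeable idempotents is coordinatized by an alternative $\star$-algebra, together with the Hurwitz-type restriction that the only non-associative possibility is the octonions and only for a frame of exactly three idempotents (frames of rank $\geq 4$ force associativity, hence speciality). I would take this classification as the engine of the proof.

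With this in hand, I would define $\mfI$ as the intersection of the kernels of all representations of $\pA$ onto JC-algebras; equivalently, $\mfI = \bigcap \{ \ker \pi : \pi(\pA) \text{ special}\}$. By construction the quotient $\pA / \mfI$ is separated by its special quotients, so it embeds into a product of JC-algebras. The next step is to verify that an $\ell^{\infty}$-product (and any norm-closed subalgebra) of JC-algebras is again a JC-algebra: each factor embeds isometrically into a \Cs-algebra, and products of \Cs-algebras are \Cs-algebras, so the product inherits a special embedding. Hence $\pA / \mfI$ is a JC-algebra, and in fact the largest special quotient of $\pA$.

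Next I would show $\mfI$ is purely exceptional, i.e.\ every factor representation $\pi$ of $\mfI$ is onto $H_{3}(\bbO)$. Since $\mfI$ is a norm-closed ideal, it is itself a JB-algebra, and each of its factor representations extends to a factor representation of $\pA$; the special ones annihilate $\mfI$ by construction, leaving only the exceptional factor representations, which by the classification map onto $H_{3}(\bbO)$. Uniqueness follows by an extremal argument: any purely exceptional ideal is killed by every special representation and is therefore contained in $\mfI$, while $\mfI$ is itself purely exceptional, so it is the unique maximal such ideal; the requirement that $\pA / \mfI$ be special pins it down from the other side.

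The main obstacle, and the part that carries the real mathematical content, is the classification of exceptional JBW-factors as precisely $H_{3}(\bbO)$. Everything else — forming the ideal, checking that products of JC-algebras are JC, and the uniqueness bookkeeping — is formal once that classification and the basic representation theory (enough factor representations, behaviour of the norm under quotients) are in place. The Shirshov–Cohn/Macdonald theorem, that any JB-subalgebra generated by two elements and the unit is special, shows that exceptionality cannot be detected on two-generated subalgebras and is genuinely a phenomenon of rank-three frames; this is exactly why the exceptional obstruction is captured by a single fixed $27$-dimensional algebra rather than an unbounded family.
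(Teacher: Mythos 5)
The paper itself offers no proof of this statement: the theorem is quoted verbatim from \cite[thm.~7.2.3 and lem.~7.2.2]{hanche1984jordan}, so there is no internal argument to compare yours against. What you have written is essentially a reconstruction of that reference's proof: the same ideal \( \mfI = \bigcap\{\ker \pi \colon \pi \text{ a factor representation with special image}\} \), the same deep input (a JBW-factor is either special or isomorphic to \( H_{3}(\bbO) \), via coordinatization plus the rank-three octonion obstruction), the same embedding of \( \pA \slash \mfI \) into an \( \ell^{\infty} \)-product of JC-algebras using that injective homomorphisms of JB-algebras are isometric, and the same lifting of factor representations from the ideal to \( \pA \) to see that \( \mfI \) is purely exceptional.

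There is, however, one genuine gap, located exactly where you declare the remainder to be ``formal bookkeeping'': the uniqueness. The containment \( K \subseteq \mfI \) for any purely exceptional ideal \( K \) is indeed formal. The reverse containment \( \mfI \subseteq K \), which you wave at with ``the requirement that \( \pA \slash \mfI \) be special pins it down from the other side'', is not. If \( \mfI \not\subseteq K \), then \( \mfI \slash (\mfI \cap K) \) is a nonzero ideal of the JC-algebra \( \pA \slash K \), and every factor representation of it lifts to a factor representation of \( \mfI \) annihilating \( K \), hence maps onto \( H_{3}(\bbO) \). To reach a contradiction you must know that an ideal of a JC-algebra admits no factor representation onto \( H_{3}(\bbO) \) --- equivalently, that homomorphic images of JC-algebras are again JC. That is a substantive theorem, not bookkeeping: its purely algebraic analogue is false (by Cohn's example, quotients of special Jordan algebras can be exceptional), and in \cite{hanche1984jordan} it is proved in the same chapter via the universal enveloping \Cs-algebra. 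It does not follow from the three inputs you allow yourself (the JBW-factor dichotomy, separating families of factor representations, and the behaviour of norms under quotients). Alternatively, one can manufacture a special factor representation of \( \mfI \slash (\mfI \cap K) \) by central decomposition of its weak closure in a faithful concrete representation and contradict pure exceptionality that way; either route requires an ingredient beyond what your sketch provides. A smaller item of the same kind: the fact that factor representations of a closed ideal extend to the whole algebra, which your pure-exceptionality argument uses, should also be listed explicitly among your cited inputs.
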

In the above Theorem \( H_{3}(\bbO) \) denotes the algebra of \( 3\times 3 \) self-adjoint matrices over octonions. This algebra seems too poor for physical applications, and so an embedding of a JB-algebra into a \Cs-algebra is quite natural. We state this as a finial axiom.

\begin{axiom}\label{ax:8}
    The algebra of observables \( \pA \) can be identified with the set of all self-adjoint elements of a not necessarily commutative \Cs-algebra \( \mfA \).
\end{axiom}
If such an embedding is possible then the JB-algebra (treated as a subset of self-adjoint elements in \( \mfA \)), generates the \Cs-algebra \( \mfA \). In particular, completeness of \( \pA \) in the JB-algebra norm already implies completeness of the ambient \Cs-algebra, thus no further closure assumption is needed.

We can summarize our above discussion as follows:
\begin{enumerate}
\item
A physical system is defined by its \Cs-algebra of observables with identity.

\item
The physical states are identified with the algebraic states on that \Cs-algebra.

\item
The dynamics of the system is a \( \star \)-automorphism on the \Cs-algebra of observables.
\end{enumerate}

\paragraph{To conclude:}
We started with a set of observables \( \pA \) and states \( \pS \); under mild physical assumptions we derived an algebraic structure on \( \pA \). We further realized, that the resulting algebra can be identified with a set of self-adjoint elements of a \Cs-algebra \( \mfA \).

Now we have the entire algebraic machinery for \Cs-algebras. In particular, by means of the GNS construction we know that each state \( \omega \) on \( \mfA \) yields a cyclic representation \( (\pi_{\omega}, H_{\omega}, x_{\omega}) \).
In particular, by the \( \star \)-homomorphism property self-adjoint elements of  \( \mfA \) are mapped to self-adjoint operators on the Hilbert space \( H_{\omega} \).
By the ``Expectation'' axiom of Dirac-von Neumann and the unitarily equivalence clause of the GNS Theorem, we see that the framework of quantum mechanics is unitarily equivalent to the so derived GNS representation.
Thus, we recovered the first three Dirac-von Neumann axioms.

The first three axioms of Dirac-von Neumann are of general nature. They do not refer to a quantum system, but describe a physical system in general. The fourth and the fifth axioms, on the other hand, refer specifically to a quantum system. It is for this reason that with our general construction we recovered the general axioms. As we will see in the next section, the fourth Dirac-von Neumann axiom will guide us to \textit{define} an explicit \Cs-algebra for quantum mechanics. The fifth axiom drops entirely by the result of the Stone-von Neumann theorem.

Additionally we notice that in the algebraic formalism, the structure of quantum mechanics becomes derived, just like the structure of the phase space of classical mechanics was derived from the commutative \Cs-algebra. In particular, we see that the Hilbert space of quantum mechanics is not a fixed construction of the framework, but a property that arises from a choice of a physical state, i.e.\  from the method of preparation of a physical system. Different methods of preparation may lead to different Hilbert spaces. Indeed, in the standard approach to quantum mechanics, such transitions between states that lead to different Hilbert spaces  typically result in exploding physical quantities such as the magnetization, susceptibility, conductivity, and the like. If this happens, we say that a system undergoes a \textit{phase transition}.
Such an explosion of physical quantities results from the fact, that the scalar product on the originally defined Hilbert space becomes inappropriate for the new physical system, since for the new system a new Hilbert space is required.
 In the algebraic approach, such s transition corresponds to the change between states \( \omega_{1} \) and \( \omega_{2} \) that lead to two not unitarily equivalent GNS representations of the \Cs-algebra. In this way, one avoids divergent quantities and can describe phase transitions in a mathematically careful way.

\newpage
 \section{A quantum mechanical particle}
 \begin{sectionmeta}
     This section is based on several books and papers. The semi-historical note is motivated by the discussion in the Heisenberg's book \cite{heisenberg1949physical} and his papers \cite{Heisenberg:1925vh, Heisenberg:1927vy}. The uniqueness of the Weyl algebra is from \cite{bratteli2012operatorII}. The proof of the Stone-von Neumann theorem and the Schr\"odinger representation of the Weyl algebra is adopted from \cite{strocchi2008introduction} and the original paper of von Neumann \cite{Neumann:1931ve}.

 \end{sectionmeta}
 After the last section we know that we need a  \Cs-algebra of observables to define a physical system. For classical physics the algebra has to be commutative. In this case, we can choose the algebra of continuous functions on the phase space.
 The question is, what \Cs-algebra do we choose to describe quantum mechanics.
To answer this question we use the canonical commutation relations and the Heisenberg uncertainty principle. Before we proceed, we mention a semi-historical note that shall motivate our approach.

\subsection{Semi-historical note}
Heisenberg realized that the physical quantities like position, velocity, acceleration, and the like, may not be defined for sub-atomic systems (systems of small size).
He argued that all physical properties of a particle are \textit{defined} by  measurements. A position of a particle is nothing else than an outcome of a measurement of position. If no experiment can be devised to measure a certain quantity then this quantity does not ``exist''. The concept of measurement is unavoidably connected to the property itself.

As an example he considered an electron. One way to measure the electrons position is to look at it under the microscope. The image we see under the microscope appears from the lightrays that scattered from the electron. The shorter the wave length of used light, the sharper will be the image and the better will be the resolution of the electron's position. From the photoelectric effect, described by Einstein several years prior, we know that momentum \( p \) of light particles --- the photons --- is indirectly proportional to the wavelength \( \lambda \) such that
\begin{align}
    p = \frac{\hbar}{2\pi \lambda},
\end{align}
where \( \hbar \) is the Planck's constant (up to factors of \( 2 \pi \)).
In other words, the sharper the image, the higher the momentum of photons. As the light scatters from the electron, the high momentum of photons will transfer onto the momentum of the electron --- a process described and measured by the Compton scattering. Moreover, this transfer of momentum will happen discontinuously, so after measuring the position the momentum of the electron becomes unknown.
Heisenberg estimated the accuracy of the measurements to be
\begin{align}
    \Delta x \Delta p \geq \frac{\hbar}{2 \pi},
\end{align}
where \( \Delta x \) and \( \Delta p \) denote the variance of position and momentum measurements.
He suggested that this uncertainty is a fundamental principle of nature; it is impossible to measure position and momentum of an electron with better simultaneous precision. Today we call it \textit{the Heisenberg's uncertainty principle}.

To describe this phenomenon quantitatively, he suggested to use the wave picture of the electron: the electron is a wave packet, electron's position is the localization of the packet, and electrons velocity is the group velocity of the packet. The sharper the wave packet is localized in space the more spread it is in momentum, thereby causing the packet to disperse in time. Thus, a sharply peaked wave packet quickly spreads and its group velocity is not clearly defined. The analysis of the problem requires care, because the aforementioned wave packet describes the \textit{probability distribution} for the electron and not the electron itself. Moreover, the dispersion of the packet depends on the implemented dynamics, as for non-linear processes the wave package may keep its shape for an extended time period. Nevertheless,  the above heuristic argument already captures the main idea of the rigorous calculation.

Heisenberg's idea was groundbreaking and is considered as the initial formulation of quantum mechanics. Nevertheless, it was Born and Jordan who realized that Heisenberg defined observables as infinitely large matrices, i.e.\ linear operators on an infinitely dimensional Hilbert space \cite{born1925quantenmechanik}. In particular, being operators, observables may not always commute. Using Heisenberg's uncertainty principle Born and Jordan derived that the position and the momentum of an quantum particle must satisfy what we nowadays know as the \textit{canonical commutation relations} or CCRs,
\begin{align}\label{eq:ccr}
    [x_{i},x_{j}] = 0 = [p_{i},p_{j}],
    &&
    [x_{i},p_{j}] = x_{i}p_{j} - p_{j}x_{i} = \imath \hbar \mathds{1},
    \qquad \big( i,j = 1,\dots ,d \big),
\end{align}
where \( d \) is the number of degrees of freedom for the electron (e.g.\ a free particle moving in a \( d \)-dimensional space; no spin is considered); and \( x_{j} \) (resp. \( p_{j} \)) denotes the position (resp. momentum) observable in the \( j \)th direction. Often we define
\( x = (x_{1},\dots, x_{d}) \) and \( p = (p_1,\dots ,p_{d}) \) as operators on \( d \) copies of the Hilbert space \( H \), and phrase \eqref{eq:ccr} as a single CCR relation
\begin{align}\label{eq:CCR}
    [x,p] = xp - px = \imath \hbar \mathds{1}.
\end{align}
In summery, the Heisenberg's uncertainty principle implies that the position and the momentum observables of an electron do not commute.
Moreover, the CCRs \eqref{eq:ccr} imply that either of the operators, \( x_i \) or \( p_i \) (or both), must be unbounded for every \( i = 1,\dots,d \).
To see this, we write the CCRs (dropping the index for clarity) as \( xp = \imath \hbar + px\). Then,
\begin{align*}
    x^{2}p - px^{2}
    &= x (xp) - px x\\
    &= x (\imath \hbar + px) - px x \\
    &= \imath \hbar x + (xp) x - px x \\
    &=\imath \hbar x + (\imath \hbar + px) x - px x
    =  \imath \hbar \ 2x.
\end{align*}
By induction we then get
\begin{align}
    [x^{n} , p ] = x^{n} p - p x^{n}= \imath \hbar \ n x^{n-1}.
\end{align}
The above relation implies that the norms of the operators \( x \) and \( p \) satisfy
\begin{align}\label{eq:unbounded x or p}
    \hbar \ n \|x^{n-1}\|  = \| [x^{n}, p ] \| \leq \| px^{n} \| + \| x^{n} p\| \leq 2 \| p \| \| x \| \| x^{n-1} \|.
\end{align}
Now, \( \| x^{n-1} \| \) cannot be zero, otherwise \( 0 = \|x ^{n-1} \| = \| x \|^{n-1} \) would imply \( x =0 \) which violates the CCR.
Thus, \eqref{eq:unbounded x or p} reduces to
\begin{align}
    \hbar \ \frac{n}{2} \leq \|q \| \| p \|
    \qquad \big( n \in \bbN \big).
\end{align}
Because \( n \) can be arbitrary large, at least one of the operators has to have an unbounded norm. This, in turn, implies that we have to be careful with the multiplication of operators \( x_{i} \) and \( p_{i} \). In particular, this multiplication can be defined only on an empty set --- implying that it is not meaningful. Moreover, even if the multiplication can be defined on the dense subset of the Hilbert space, it cannot be defined on the whole space. Because of this, the whole concept of ``commutation'' for unbounded operators is non-trivial.
Many technical complications arise when we want to formulate CCRs using clean mathematics.

\vspace{0,5cm}
We now return to our algebraic approach. In the algebraic approach, the mathematics is simple because all elements have finite norm. We only need to define what algebra is suitable to define observables of a particle. Ideally, we would like to use \( x \) and \( p \) to motivate the algebra, but since these operators are unbounded, they cannot generate a \Cs-algebra. Despite this, we can use \( x \) and \( p \) in a heuristic way (not rigorous calculations), to get an intuition for the right \Cs-algebra. One way to do this was suggested by Weyl and it leads us to the definition of the Weyl algebra.

\subsection{The Weyl algebra}

To define the Weyl algebra we use the operators \( x \) and \( p \) as if they were bounded, or at least as if there were no problems with unbounded operators. The resulting calculation will serve as a ``hint'' towards an appropriate \Cs-algebra of observables.

Since the concept of ``commutation'' is complicated for unbounded operators, Weyl suggested to look at the operators
\begin{align}
    U(\alpha) = e^{\imath \alpha x}
    &&
    V(\beta) = e^{\imath \beta p}
    \qquad \big( \alpha, \beta \in \bbR^{d} \big),
\end{align}
where we use the abbreviation \( \alpha x = \sum_{i=1}^{d}  \alpha_{i}x_{i} \) and \( \beta p = \sum_{i=1}^{d} \beta_{i} p_{i} \). The operators \( U(\alpha) \) and \( V(\beta) \) are called the \textit{Weyl operators}, and  the vectors \( \alpha \) and \( \beta \) \textit{label} these operators.
Since \( x \) and \( p \) suppose to represent observables we assume them to be self-adjoint. In this case, the Weyl operators are bounded and can be defined via the spectral calculus.
We want to use the Weyl operators to generate our algebra of observables. For this we need to define a product between them.
To this end we use the CCRs and the Baker-Campbell-Hausdorff formula.
\begin{info}[Reminder: the Baker-Campbell-Hausdorff formulat.] A special case of the Baker-Campbell-Hausdorff formula states that if the commutator \( [X,Y] \) of two operators \( X \) and \( Y \) commutes with both, \( X \) and \( Y \), then
\begin{align}
    e^{X}e^{Y} = e^{X + Y + \tfrac{1}{2}[X,Y]}.
\end{align}
\begin{proof}
For the proof and a more general statement consult \cite[p. 25 prop. 2]{rossmann2006lie}.
\end{proof}
\end{info}
Formally,  the CCRs say that \( [x,p]\) is proportional to the identity; thus it commutes with every operator, and we can use the Baker-Campbell-Hausdorff formula to define the following replacement of the CCR for the Weyl operators,
\begin{align}\label{eq:weyl rel}
    U(\alpha) V(\beta) = V(\beta) U(\alpha) e^{-\imath \hbar \alpha \beta}, &&
    U(\alpha)U(\beta) = U(\alpha + \beta), &&
    V(\alpha)V(\beta) = V(\alpha + \beta),
\end{align}
where \( \alpha \beta = \sum_{i=1}^{d} \alpha_{i}\beta_{i} \) is the Euclidean scalar product.
These relations are called the \textit{commutation relations in the Weyl form} or simply the \textit{Weyl relations}. Even though \( U(\alpha) \) and \( V(\beta) \) together with \eqref{eq:weyl rel} is enough to proceed, it will be more convenient in the following to work with a single operator instead of two. To this end, we use vectors in \( \bbR^{d} \times \bbR^{d} \), such that the first \( d \)-components represent the vector \( \alpha \), and the second \( d \)-components represent the vector \( \beta \). Then for \( v = (\alpha, \beta)\) and \(w= (\gamma,\delta)\) in \( \bbR^{d} \times \bbR^{d} (=\bbR^{2d})\) we define the following operation,
\begin{align}\label{eq:symp}
    \sigma(v,w)
= \alpha \delta - \gamma \beta
=
    \left(
    \begin{array}{c}
        \alpha \\
        \beta
    \end{array}
    \right)
    \left(
    \begin{array}{cc}
        0 & \mathds{1}  \\
        -\mathds{1} & 0
    \end{array}
    \right)
        \left(
        \begin{array}{c}
            \gamma \\
            \delta
        \end{array}
        \right).
\end{align}
Especially, \( \sigma(v,v) = 0 \) for every \( v \in \bbR^{2d} \). The mapping \( \sigma \colon \bbR^{2d} \times \bbR^{2d}  \to \bbR \) is a particular case of what is called a \textit{non-degenerate symplectic bilinear form}. It has a profound meaning in classical mechanics. Formulating our \Cs-algebra using \( \sigma \) is thus preferable as it can give us a general procedure how to define a quantum algebra from the classical case. In other words: how to \textit{quantize} a classical system.

Indeed, it turns out that if we use \( v = (\alpha,\beta)  \in \bbR^{2d} \) and set
\begin{align}
    W(v) = e^{-\imath \hbar \frac{\alpha \beta}{2}} V(\beta) U(\alpha) = e^{\imath \hbar \frac{\alpha\beta}{2}} U(\alpha) V(\beta).
\end{align}
then the Weyl relations can be derived from the following relations
\begin{align}\label{eq:weyl rel2}
    W(v)^{\star} = W(-v),
    &&
    W(v)W(w) = W(v+ w) e^{-\imath \hbar \frac{\sigma(v,w)}{2}}.
\end{align}
In the following, we will not use \( U(\alpha) \) and \( V(\beta) \) but instead only the \( W(v) \)'s labeled by vectors in \(\bbR^{2d}  \). For this reason we will call \( W(v) \) the Weyl elements and refer to \eqref{eq:weyl rel2} as to the Weyl relations.
Notice, however, that we can recover the operators \( U(\alpha) \) and \( V(\beta) \) by using vectors of the form \( v=(\alpha, 0) \in \bbR^{2d} \) or \( v = (0,\beta) \in \bbR^{2} \), respectively.
Even though, the above calculations were formal, we can  now use \eqref{eq:weyl rel2} as a starting point to define a clean \Cs-algebra.

We start with the set of \textit{abstract} elements \(\{ W(v) \colon v \in \bbR^{2d}  \} \), that transform under a \( \star \)-operation and satisfy the multiplication relations according to \eqref{eq:weyl rel2}. Then we consider an abstract algebra \( \mcA_{W} \) generated by this set.
That is \( \mcA_{W} \) comprises elements of the form
\begin{align}\label{eq:elements of Weyl algebra}
    a_{1}W(v_{1}) + a_{2} W(v_{2}) + \cdots + a_{n} W(v_{n})
\end{align}
for \( a_{1},\dots,a_{n} \in \bbC \) and \( v_{1},\dots, v_{n} \in \bbR^{2d} \). Notice that due to the multiplication relations, all monomials of Weyl operators can be rewritten as a single Weyl operator multiplied by a complex number.
To form a \Cs-algebra, we need to equip \( \mcA_{W} \) with a norm that satisfies the \Cs-property. But what norm shall we choose?
The following theorem provides an answer to this question. For the proof see \cite[thm. 5.2.8]{bratteli2012operatorII}.

\begin{thm}\label{thm:unique weyl}
Let \( V \) be a real linear space equipped with a non-degenerate symplectic bilinear form \( \sigma \). Let \( \mfA_{1} \) and \( \mfA_{2} \) be two \Cs-algebras generated by nonzero elements \( W_{i}(v) \), \( v \in V \), satisfying \eqref{eq:weyl rel2}. Then there exists a unique \( \star \)-isomorphism \( \alpha\colon \mfA_{1} \to \mfA_{2} \) such that
\begin{align}
    \alpha \big(W_{1}(v) \big) = W_{2}(v),
    \qquad (v \in V).
\end{align}
\end{thm}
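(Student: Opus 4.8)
The plan is to reduce the statement to the claim that all the data sit on one common dense $\star$-subalgebra on which the $\Cs$-norm is uniquely determined. Both $\mfA_1$ and $\mfA_2$ contain the $\star$-algebra $\mcA^{0}$ of finite linear combinations $\sum_k a_k W(v_k)$, whose multiplication and involution are completely fixed by the Weyl relations \eqref{eq:weyl rel2}: products of Weyl elements collapse to a single Weyl element times a phase, and $W(v)^{\star}=W(-v)$. Provided the $W(v)$ are linearly independent, the assignment $W_1(v)\mapsto W_2(v)$ is then an abstract $\star$-isomorphism of the two copies of $\mcA^{0}$, so the theorem follows once I show (i) linear independence of $\{W(v)\}_{v\in V}$ and (ii) that the two $\Cs$-norms coincide on $\mcA^{0}$, since an isometric $\star$-isomorphism of dense subalgebras extends uniquely to the completions. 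Uniqueness of $\alpha$ is then automatic: two $\star$-isomorphisms agreeing on the generating set $\{W(v)\}$ agree on the dense $\mcA^{0}$, hence everywhere by continuity.

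First I would record the elementary consequences of \eqref{eq:weyl rel2}. Setting $w=0$ gives $W(0)=I$, and $W(v)W(v)^{\star}=W(v)W(-v)=W(0)=I$ (using $\sigma(v,-v)=0$), so every $W(v)$ is unitary and therefore $\|W(v)\|=1$ by Proposition \ref{prop:spectral radius and norm in cstar}\,\ref{it:c}. A short computation with \eqref{eq:weyl rel2} yields the conjugation formula
\begin{align}
    W(w)^{\star}\,W(v)\,W(w) = e^{\imath \hbar \sigma(w,v)}\,W(v),
\end{align}
so that $\gamma_w:=W(w)^{\star}(\,\cdot\,)W(w)$ is an inner, hence isometric, $\star$-automorphism which multiplies each $W(v)$ by the phase $e^{\imath\hbar\sigma(w,v)}$.

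The engine of the proof, and the step I expect to be the main obstacle, is the construction of a canonical faithful conditional expectation $E$ onto $\bbC I$. The idea is to average the automorphisms $\gamma_w$ over $w$ using a translation-invariant mean on the additive group $V$, which exists because $V$ is abelian and hence amenable. Since $\sigma$ is non-degenerate, for every $v\neq 0$ the map $w\mapsto e^{\imath\hbar\sigma(w,v)}$ is a nontrivial character of $V$, so the invariant mean annihilates it, while the $v=0$ term survives. Formally this produces a positive norm-one projection $E$ with $E\big(\sum_k a_k W(v_k)\big)=a_0 I$, where $a_0$ is the coefficient of $W(0)$, and consequently $E(A^{\star}A)=\big(\sum_k|a_k|^2\big)I$. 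This identity yields linear independence (i) at once, since $A=0$ forces $\sum_k|a_k|^2=0$. The delicate point is to make $E$ rigorous as a map on the whole $\Cs$-algebra — realizing the average through an invariant mean and weak limits, and checking that $E$ is genuinely faithful ($E(X^{\star}X)=0\Rightarrow X=0$) not merely on $\mcA^{0}$ but on its norm closure.

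With $E$ in hand the conclusion is short. Any closed two-sided ideal $J\subseteq\mfA_i$ is invariant under conjugation by the unitaries $W(w)$, so $\gamma_w(J)\subseteq J$ and therefore $E(J)\subseteq J$; if $J\neq\{0\}$, choosing $0\neq A\in J$ gives $cI=E(A^{\star}A)\in J$ with $c=\sum_k|a_k|^2>0$, whence $I\in J$ and $J=\mfA_i$. Thus every $\Cs$-algebra generated by Weyl elements is \emph{simple}; in particular so is the universal completion $\mfA_{\mathrm u}$ of $\mcA^{0}$ in the largest $\Cs$-norm, which is finite because $\|W(v)\|=1$ forces $\big\|\sum_k a_kW(v_k)\big\|\le\sum_k|a_k|$. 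A simple $\Cs$-algebra admits a unique $\Cs$-norm on a given dense $\star$-subalgebra, since any completion is a quotient of $\mfA_{\mathrm u}$ by a closed ideal, which must be $\{0\}$. This establishes (ii), so $\mfA_1\cong\mfA_{\mathrm u}\cong\mfA_2$ isometrically, and unwinding the identifications gives the unique $\star$-isomorphism $\alpha$ with $\alpha(W_1(v))=W_2(v)$.
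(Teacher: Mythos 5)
Your architecture has the right shape, and the elementary steps are sound: the conjugation formula $W(w)^{\star}W(v)W(w)=e^{\imath\hbar\sigma(w,v)}W(v)$, the killing of nontrivial characters, linear independence, and the reduction ``simplicity of the universal completion $\Rightarrow$ uniqueness of the \Cs-norm $\Rightarrow$ the theorem'' are all correct. (Note the paper itself gives no proof here --- it defers to Bratteli--Robinson --- so I am judging your argument on its own.) The genuine gap sits exactly at the point you label ``delicate'': faithfulness of $E$ on the norm closure. Your simplicity step reads: for $0\neq A\in J$, ``$cI=E(A^{\star}A)\in J$ with $c=\sum_k|a_k|^2>0$''. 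But a general element of the completion has no coefficients $a_k$; all you know is $E(A^{\star}A)=\tau(A^{\star}A)I$, where $\tau$ is the continuous extension of the coefficient-of-identity functional, and the inequality $\tau(A^{\star}A)>0$ for $A\neq 0$ \emph{is} faithfulness of $\tau$ on the completion. This is not a technical check to be deferred: it is equivalent to the theorem. Since $\tau$ is a trace, its left kernel $N_\tau=\{A\colon\tau(A^{\star}A)=0\}$ is a closed two-sided ideal, and the quotient of the completion by $N_\tau$ is precisely the reduced (GNS) completion of $\mcA^{0}$; so ``$\tau$ is faithful on every completion'' says ``every \Cs-norm on $\mcA^{0}$ equals the reduced norm'', which is the statement being proven. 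Your proof is circular at its core. Nor can the averaging machinery itself close the gap: $E(A^{\star}A)$ is a limit of averages $\tfrac{1}{N}\sum_n U_n^{\star}A^{\star}AU_n$ of positive elements, each of norm $\|A\|^2$, and such averages can perfectly well tend to zero in norm (average rank-one projections onto orthonormal vectors $e_1,\dots,e_N$, which are unitary conjugates of one another: the average has norm $1/N$). Positivity plus unitary invariance gives no lower bound; one must use the Weyl structure again, in a new way.

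A workable way to supply the missing ingredient, which stays close to your setup, is the spectral-radius squeeze. For a self-adjoint finite sum $B\in\mcA^{0}$ one has, in any completion, $\|B\|=r(B)=\lim_n\|B^{n}\|^{1/n}$. Your coefficient estimate gives the lower bound $\|C\|\geq\tau(C^{\star}C)^{1/2}$ (the $\ell^2$-norm of the coefficients), and unitarity of the $W(v)$ gives the upper bound $\|C\|\leq\sum_k|c_k|$ (the $\ell^1$-norm). The frequencies of $B^{n}$ lie in the $n$-fold sumset of the frequencies of $B$ inside the finitely generated subgroup of $V$ they span ($\cong\bbZ^{r}$), so their number grows only polynomially in $n$; hence the $\ell^1$- and $\ell^2$-norms of $B^n$ have the same $n$-th-root limit, and $\|B\|=\lim_n\tau(B^{2n})^{1/2n}$ is computed from the algebraic data alone, identically in every completion. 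Applying this to $B=A^{\star}A$ pins down every \Cs-norm on $\mcA^{0}$, which is exactly statement (ii), and simplicity then follows from your ideal argument rather than feeding into it. Without this step (or an equivalent input, such as the theorem that full and reduced twisted group \Cs-algebras of discrete abelian groups coincide), your proof does not close.
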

Since \( \star \)-isomorphisms are isometric, there is essentially a unique \Cs-norm which makes \( \mcA_{W} \) a \Cs-algebra. In other words, the Weyl relations \eqref{eq:weyl rel2} generate a unique \Cs-algebra, up to \( \star \)-isomorphisms --- the \textit{Weyl algebra} \( \mfW \).

From the Weyl relations \eqref{eq:weyl rel2} we can state some basic, nonetheless important properties of the Weyl algebra and the Weyl operators.

\begin{prop}\label{prop: weyl}
Let \( \mfW \) be the Weyl algebra generated by elements \( \{ W(v)\colon v \in \bbR^{2d} \} \) that satisfies the Weyl relations. Then
\begin{enumerate}
\item\label{prop: weyl it:1}
\( \mfW \) is unital and \( W(0) = I \) is the unit of \( \mfW \);

\item\label{prop: weyl it:2}
for each \( v \in \bbR^{2d} \) the element \( W(v) \) is unitary;

\item\label{prop: weyl it:3}
for each non-zero \( v\in \bbR^{2} \) the spectrum of \( W(v) \) is the entire unit circle of \( \bbC \).
\end{enumerate}
\end{prop}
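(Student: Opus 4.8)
The plan is to read off all three claims directly from the Weyl relations \eqref{eq:weyl rel2}, using only that $\sigma$ is bilinear and antisymmetric (so that $\sigma(v,v)=0=\sigma(v,0)$) and, for the last part, that $\sigma$ is non-degenerate.

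For the first claim I would set $w=0$ in the multiplication relation. Since $\sigma(v,0)=0$, this gives $W(v)W(0)=W(v)$ and, symmetrically, $W(0)W(v)=W(v)$ for every generator. As the generators are assumed nonzero, $W(0)\neq 0$, and an element that fixes every generator under multiplication fixes every product and every linear combination of the form \eqref{eq:elements of Weyl algebra}; hence $W(0)$ is a two-sided unit and $W(0)=I$. For the second claim I would compute $W(v)W(v)^{\star}=W(v)W(-v)$ with the multiplication relation, obtaining $W(0)\,e^{-\imath\hbar\sigma(v,-v)/2}$. Antisymmetry gives $\sigma(v,-v)=-\sigma(v,v)=0$, so this equals $W(0)=I$; running the same computation with the factors reversed gives $W(v)^{\star}W(v)=I$. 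Thus each $W(v)$ is unitary.

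The third claim is the substantive one. From Proposition \ref{prop:spectral radius and norm in cstar}\,\ref{it:c}, unitarity of $W(v)$ already forces $\spectr(W(v))$ to be a compact subset of the unit circle, giving one inclusion. For the reverse inclusion the key is the conjugation identity
\[
    W(w)\,W(v)\,W(w)^{\star}=e^{\imath\hbar\sigma(v,w)}\,W(v),
\]
which I would obtain by applying the multiplication relation twice and simplifying the resulting phases with the antisymmetry of $\sigma$. Conjugation by the unitary $W(w)$ is an inner $\star$-automorphism and therefore preserves the spectrum, while $\spectr\big(c\,W(v)\big)=c\,\spectr(W(v))$ for a scalar $c$ by the polynomial spectral-mapping Proposition \ref{prop:spectral for polys}. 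Combining these two facts yields
\[
    \spectr(W(v))=e^{\imath\hbar\sigma(v,w)}\,\spectr(W(v))
    \qquad (w\in\bbR^{2d}).
\]
Because $v\neq 0$ and $\sigma$ is non-degenerate, the map $w\mapsto\hbar\sigma(v,w)$ is a nonzero linear functional, hence surjective onto $\bbR$, so the phases $e^{\imath\hbar\sigma(v,w)}$ exhaust the whole unit circle. Since the spectrum is nonempty by Theorem \ref{thm:non-empty spectrum}, I would fix one point $\lambda_{0}\in\spectr(W(v))$ (necessarily with $|\lambda_{0}|=1$) and observe that multiplying it by all these phases produces every point of the unit circle, giving the reverse inclusion and hence equality.

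The main obstacle is exactly the discovery and use of the conjugation identity in the third part, together with the observation that non-degeneracy of $\sigma$ is precisely what guarantees that the phases $e^{\imath\hbar\sigma(v,w)}$ fill the entire circle rather than a proper closed subgroup; the first two parts are routine bookkeeping with the Weyl relations.
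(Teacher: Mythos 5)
Your proposal is correct and follows essentially the same route as the paper: parts 1 and 2 are the same direct computations with the Weyl relations, and part 3 uses the same key idea, namely that conjugation by the unitary \( W(w) \) multiplies \( W(v) \) by the phase \( e^{\imath\hbar\sigma(v,w)} \), so that non-degeneracy of \( \sigma \) makes the spectrum rotation invariant and hence the full unit circle. The only difference is presentational — the paper argues element-wise that \( W(v)-e^{\imath\alpha}\lambda \) is non-invertible, while you phrase the same fact as set-level spectral identities (inner automorphisms preserve spectra, scalars scale them) — which is not a genuinely different proof.
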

\begin{proof}
For \ref{prop: weyl it:1}, observe from the Weyl relations that
\begin{align}
W(v)W(0) =W(v) = W(0)W(v)
\qquad \big( v \in \bbR^{2d} \big).
\end{align}
Since each element in \( \mfW \) is a limit of finite complex linear combinations of the Weyl elements, for every \( A \in \mfW \) we have
\(    A W(0) = W(0)A = A,
\) and thus \( W(0) = I \) is the unit of the Weyl algebra; in particular, \( \mfW \) is unital.

For \ref{prop: weyl it:2}, we use again the Weyl relations to see that
\begin{align}
     W(v)W(v)^{\star} = W(v)W(-v) = I = W(-v)W(v) = W(v)^{\star}W(v)
     \qquad \big( v \in \bbR^{2d} \big).
\end{align}
Thence, \( W(v) \) is unitary.

For \ref{prop: weyl it:3}, observe that since \( W(v) \) is unitary, by Proposition \ref{prop:spectral radius and norm in cstar} \ref{it:c}, its spectrum is a subset of the unit circle in \( \bbC \) .
Suppose \( v \neq 0 \). Let \( \lambda \) be in the spectrum of \( W(v) \) and \( \alpha \) be a real number in the interval \( [0, 2\pi) \). Choose \( w \in \bbR^{2d} \) such that \( \sigma(v,w) = \alpha / \hbar \), which is always possible since \( v \neq 0 \) and \( \sigma \) is linear and non-degenerate. Then,
\begin{align}
    W(v) - e^{\imath \alpha} \lambda
    = e^{-\imath \alpha} \big(e^{\imath \alpha} W(v) - \lambda \big)
    = e^{-\imath \alpha} \Big(W(w)\big(W(v) -\lambda \big)W(w)^{\star} \Big).
\end{align}
The right hand side does not have a two-sided inverse, and consequently \( e^{\imath \alpha} \lambda\) is in the spectrum of \( W(v) \). It follows that the spectrum is invariant under rotation and since \( \lambda \) is an element of the unit circle, the spectrum must be the entire unit circle in \( \bbC \).
\end{proof}

\begin{info}[Remark.]
The Theorem \ref{thm:unique weyl} holds even if \( V \) is infinite dimensional. In this case, the Weyl algebra describes a system with infinitely many degrees of freedom. Such a theory is called (non-relativistic) \textit{quantum field theory}. Opposed to it, quantum mechanics describes systems with a finite number of degrees of freedom and thus the Weyl algebra is labeled by a finite dimensional \( V \).
\end{info}

\paragraph{To conclude:}
Using the canonical commutation relations that follow from the Heisenberg uncertainty principle we motivated a \Cs-algebra for the quantum particle. This algebra is essentially unique and we call it the Weyl algebra \( \mfW \). To connect our algebraic formalism to the usual (Schr\"odinger) quantum mechanics, we need to represent the algebra of observables as bounded linear operators on a Hilbert space. As we have seen, we can use the GNS construction for this. Nevertheless, we have also seen, that not all representations are necessarily unitarily equivalent. And if the Weyl algebra has different unitarily inequivalent representations, then these representations will yield physically different descriptions of the quantum particle. Which one should we then choose?
In particular, if different inequivalent representations exist, what is their physical significance? And then, how many different  inequivalent representations of the Weyl algebra can we construct. All these questions were answered by von Neumann. The answer is known as the \textit{Stone-von Neumann theorem}.

\subsection{Stone-von Neumann theorem}
We now want to investigate unitarily inequivalent representations of the Weyl algebra. Recall that, unitarily inequivalent representations lead to different Hilbert spaces on which we can represent the \Cs-algebra as a (sub)algebra of bounded operators. If the \Cs-algebra suppose to define a physical system, then the Hilbert spaces (especially their inner products) define the expectation values. Thus they are physically relevant. If several inequivalent representations exist then we need to select the one which is best suited for the description of the system.

Recall, that irreducible representations are the building blocks of all representations. Thus, we only need to characterize irreducible representations of the Weyl algebra. We can restrict the class of ``interesting'' representations even further by requiring a mild but physically well motivated regularity condition. To formulate this condition we need to refresh the definition of the strong operator topology.
\begin{defi}[\textbf{Strong operator topology}]
Let \( B(H) \) be the set of bounded linear operators on the Hilbert space \( H \). For every \( x \in H \) define a map from \( B(H) \) into \( H \) by
\begin{align}
    A \to Ax.
\end{align}
The strong operator topology is the weakest topology such that each of these maps is continuous.
In this topology, a sequence of operators \( (A_{n}) \) converges to an operator \( A \), if the sequence of vectors \( ( A_{n}x ) \) converges to the vector \( Ax\) for every \( x \in H \). It is a consequence of the uniform boundedness principle that a strong limit of a sequence of bounded linear operators is again a bounded operator.
\end{defi}

\begin{defi}
An operator-valued function \( U(t) \) defined on \( \bbR \) and acting on the Hilbert space \( H \) is called a \textit{strongly continuous one-parameter unitary group} if
\begin{enumerate}
\item
for each \( t,s \in \bbR \), the operator \( U(t) \) is unitary, and \( U(t+s) = U(t)U(s) \);

\item
and for \( x \in H \), the vector \( U(t) x \) converges to the vector \( U(t_{0})x \) as \( t \to t_{0} \).
\end{enumerate}
\end{defi}

We mention some well known results of functional analysis to point out the importance of the strong operator topology.
\begin{info}[Stone's theorem.]
    We state the theorems and refer to e.g.\ \cite[thm. VIII.7 and thm. VIII.8]{reed2012methods} for a more complete statement and the proofs.

    \begin{thm}\label{thm:converse of stone}
Let \( A \) be a self-adjoint operator on a Hilbert space \( H \). The operator-valued function
\begin{align}
U(t) = e^{\imath tA}
\qquad \big( t \in \bbR \big),
\end{align}
defines a strongly continuous one-parameter unitary group. Moreover, for \( x \in dom(A) \), the limit \( \lim\limits_{t \to 0} \tfrac{1}{t} \big( U(t)x - x \big)\) exists and is equal to \( \imath Ax \); conversely, if \( \lim\limits_{t\to 0} \tfrac{1}{t} \big( U(t)x - x \big) \) exists, then \( x \in dom(A) \).
    \end{thm}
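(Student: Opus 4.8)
The plan is to construct $U(t)$ from the spectral resolution of $A$ and to read off all four required properties through the functional calculus, leaving the genuine work for the converse. First I would invoke the spectral theorem for (possibly unbounded) self-adjoint operators to write $A = \int_{\bbR} \lambda \, dE(\lambda)$ for a projection-valued spectral measure $E$, and define
\[
    U(t) = e^{\imath t A} = \int_{\bbR} e^{\imath t \lambda}\, dE(\lambda).
\]
Because $|e^{\imath t \lambda}| = 1$ for all real $\lambda$, the calculus gives $U(t)^{\star} U(t) = U(t) U(t)^{\star} = \int_{\bbR} 1\, dE = I$, so each $U(t)$ is unitary, and the identity $e^{\imath(t+s)\lambda} = e^{\imath t\lambda} e^{\imath s\lambda}$ together with multiplicativity of the functional calculus yields $U(t+s) = U(t)U(s)$. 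In particular $U(t)^{\star} = U(-t)$, a fact I will reuse below.

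For strong continuity I would fix $x \in H$ and use that $S \mapsto \langle E(S)x, x\rangle = \| E(S)x\|^{2}$ is a finite positive measure of total mass $\| x\|^{2}$. The calculus gives
\[
    \| U(t)x - U(t_{0})x\|^{2} = \int_{\bbR} \big| e^{\imath t\lambda} - e^{\imath t_{0}\lambda} \big|^{2}\, d\langle E(\lambda)x, x\rangle .
\]
The integrand tends to $0$ pointwise as $t \to t_{0}$ and is dominated by the constant $4$, which is integrable against a finite measure; dominated convergence then forces $\| U(t)x - U(t_{0})x\| \to 0$, establishing that $U$ is a strongly continuous one-parameter unitary group.

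For the generator, I would fix $x \in dom(A)$, which by the spectral theorem means $\int_{\bbR} \lambda^{2}\, d\langle E(\lambda)x, x\rangle < \infty$, and compute
\[
    \Big\| \tfrac{1}{t}\big(U(t)x - x\big) - \imath A x\Big\|^{2} = \int_{\bbR} \Big| \tfrac{e^{\imath t\lambda}-1}{t} - \imath\lambda \Big|^{2}\, d\langle E(\lambda)x, x\rangle .
\]
The elementary estimate $|e^{\imath\theta}-1| \leq |\theta|$ gives $\big| \tfrac{e^{\imath t\lambda}-1}{t}\big| \leq |\lambda|$, so the integrand is dominated by $4\lambda^{2}$, which is integrable precisely because $x \in dom(A)$; as the integrand vanishes pointwise when $t \to 0$, dominated convergence yields $\lim_{t\to0}\tfrac{1}{t}(U(t)x - x) = \imath A x$.

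The converse is the step I expect to be the main obstacle, since it must upgrade the inclusion $dom(A) \subseteq \{x : \lim_{t\to0}\tfrac{1}{t}(U(t)x-x) \text{ exists}\}$ to an equality. I would let $B$ denote the infinitesimal generator, $Bx = \lim_{t\to0}\tfrac{1}{t}(U(t)x - x)$ on its natural domain $dom(B)$, so the previous step reads $dom(A) \subseteq dom(B)$ with $B = \imath A$ on $dom(A)$. Using $U(t)^{\star} = U(-t)$ one checks that $B$ is skew-symmetric,
\[
    \langle Bx, y\rangle = \lim_{t\to0}\Big\langle \tfrac{U(t)x-x}{t}, y\Big\rangle = \lim_{t\to0}\Big\langle x, \tfrac{U(-t)y-y}{t}\Big\rangle = -\langle x, By\rangle \qquad (x,y \in dom(B)),
\]
so that $-\imath B$ is symmetric. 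Since $-\imath B$ is then a symmetric extension of the self-adjoint operator $A$, and a self-adjoint operator is maximally symmetric (it admits no proper symmetric extension), I conclude $-\imath B = A$, hence $dom(B) = dom(A)$. This yields the claimed equivalence and completes the proof; the only delicate points are the admissibility of the spectral theorem for unbounded $A$ and the appeal to maximality of self-adjoint operators, both of which are standard.
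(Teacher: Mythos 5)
Your proof is correct. Note that the paper itself gives no proof of this statement: it appears as a reminder, with the proof deferred to Reed--Simon (thm.\ VIII.7 and VIII.8). Your argument --- spectral theorem and dominated convergence for the group properties and the generator, then skew-symmetry of the generator plus maximality of self-adjoint operators (a self-adjoint operator admits no proper symmetric extension) for the converse --- is essentially the standard proof found in that reference, so there is nothing to compare against within the paper.
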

The Stone's theorem gives the converse of Theorem \ref{thm:converse of stone}.
\begin{thm}
If \( U(t) \) is a strongly continuous one-parameter unitary group on a Hilbert space \( H \), then \( U(t) = e^{\imath t A} \) for some self-adjoint operator \( A \) on \( H \).
\end{thm}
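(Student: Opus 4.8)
The plan is to construct the self-adjoint generator $A$ directly from $U$ and then verify $U(t)=e^{\imath tA}$. I would define $A$ on the domain $\mathrm{dom}(A)$ consisting of those $x\in H$ for which the strong limit $\lim_{t\to 0}t^{-1}(U(t)x-x)$ exists, setting $Ax=-\imath\lim_{t\to 0}t^{-1}(U(t)x-x)$, so that formally $\tfrac{d}{dt}U(t)x\big|_{t=0}=\imath A x$, in agreement with the differentiation formula of Theorem \ref{thm:converse of stone}. The real work is to show that $\mathrm{dom}(A)$ is dense, that $A$ is essentially self-adjoint there, and finally that exponentiating its closure reproduces $U$.

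First I would establish density by smoothing. For $f\in C_{0}^{\infty}(\bbR)$ and $x\in H$, the map $t\mapsto U(t)x$ is continuous by strong continuity and $f$ has compact support, so the strong Riemann integral
\[
    x_f=\int_{\bbR} f(t)\,U(t)x\,dt
\]
exists in $H$. A change of variables gives $s^{-1}(U(s)x_f-x_f)=s^{-1}\int (f(u-s)-f(u))U(u)x\,du\to -x_{f'}$ as $s\to0$, so each $x_f$ lies in $\mathrm{dom}(A)$ with $Ax_f=\imath x_{f'}$; moreover $U(r)x_f=x_{f(\cdot-r)}\in\mathrm{dom}(A)$, so $U(r)$ preserves the span $D$ of all such vectors. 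Choosing $f=f_\epsilon$ an approximate identity concentrated near $0$ yields $x_{f_\epsilon}\to x$, whence $D\subseteq\mathrm{dom}(A)$ is dense. A short computation using $U(s)^{\star}=U(-s)$ then shows $A$ is symmetric on $D$.

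The crux is essential self-adjointness of $A$ on $D$, which I would obtain from the deficiency criterion: it suffices to show $\ker(A^{\star}\mp \imath)=\{0\}$. Suppose $A^{\star}\psi=\imath\psi$ and, for $\phi\in D$, set $g(t)=\langle\psi,U(t)\phi\rangle$. Since $U(t)\phi\in D\subseteq\mathrm{dom}(A)$ with $\tfrac{d}{dt}U(t)\phi=\imath A U(t)\phi$, one finds $g'(t)=\imath\langle\psi,AU(t)\phi\rangle=\imath\langle A^{\star}\psi,U(t)\phi\rangle=\imath\langle \imath\psi,U(t)\phi\rangle=g(t)$, hence $g(t)=g(0)e^{t}$. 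But $|g(t)|\le\|\psi\|\,\|\phi\|$ is bounded, forcing $g(0)=\langle\psi,\phi\rangle=0$ for all $\phi\in D$, i.e.\ $\psi=0$; the case $A^{\star}\psi=-\imath\psi$ gives $g(t)=g(0)e^{-t}$ and the same conclusion upon letting $t\to-\infty$. Thus both deficiency indices vanish and the closure $\overline{A}$ is self-adjoint. I expect this boundedness-versus-exponential-growth argument, together with the density step, to be the main obstacle, since the rest is essentially formal once the generator is known to be a genuine self-adjoint operator.

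Finally I would identify $U(t)$ with $V(t)=e^{\imath t\overline{A}}$. For $\phi\in D\subseteq\mathrm{dom}(\overline{A})$ both $U(t)\phi$ and $V(t)\phi$ are differentiable with derivative $\imath\overline{A}(\cdot)$ — for $U$ from the previous step and for $V$ from Theorem \ref{thm:converse of stone}. Setting $\xi(t)=U(t)\phi-V(t)\phi$ gives $\xi'(t)=\imath\overline{A}\xi(t)$, so
\[
    \frac{d}{dt}\|\xi(t)\|^{2}=2\,\mathrm{Re}\,\langle\xi(t),\imath\overline{A}\,\xi(t)\rangle=0,
\]
because $\langle\xi,\overline{A}\xi\rangle$ is real by self-adjointness. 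Since $\xi(0)=0$ we conclude $\xi(t)\equiv0$, so $U(t)\phi=V(t)\phi$ on the dense set $D$; as both operators are bounded, $U(t)=e^{\imath t\overline{A}}$ for every $t$, which completes the proof.
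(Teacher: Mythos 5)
Your proof is correct. Note that the paper does not prove this theorem itself; it states it only as a reminder and refers to \cite[thm.~VIII.8]{reed2012methods} for the proof, and your argument --- smoothing to get the dense invariant domain of vectors $x_f$, symmetry there, vanishing deficiency indices from the exponential growth of $g(t)=\langle\psi,U(t)\phi\rangle$ against its boundedness, and the norm-constancy argument identifying $U(t)$ with $e^{\imath t\overline{A}}$ --- is essentially the standard proof given in that reference.
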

\end{info}

Stone's theorem states that we can ``differentiate'' strongly continuous one-parameter unitary groups with respect to the parameter. The derivative being the \textit{generator} of the group. The generator need not be bounded, and Theorem \ref{thm:converse of stone} provides a useful criteria for \( x \in H \) to be in the domain of the generator.
Since we motivated the Weyl algebra by considering operators \( e^{\imath \alpha x} \) and \( e^{\imath \beta p} \), we would expect that the generators of the Weyl elements would correspond to the position and momentum operators. However, in order for the generators to exist, we need to make sure that the Weyl operators are represented as a strongly continuous unitary group. This is the reason for the following definition.

\begin{defi}
A representation \( \pi \) of the Weyl algebra \( \mfW \) on a separable Hilbert space \( H \) is called \textit{regular} if \( \pi \big(W(v) \big)\) is strongly continuous in \( v \in \bbR^{2d} \).
\end{defi}

For \( v \in \bbR^{2d}\) of the form \( v_{n}=( 0,\dots, 0, t,0, \dots, 0) \in \bbR^{2d} \) (where all except the \( n \)th element are zero), the Weyl operators define one-parameter unitary groups. In conjunction with the Stone's theorem, the regularity condition assures that the operators \( \pi \big( W(v_{n}) \big) \) will have generators. Hence, we have a chance to define positions and momenta from the Weyl algebra. Since this is desirable on physical grounds, we restrict our attention only to regular representations. Thus we aim to characterize all regular irreducible representations of the Weyl algebra.

\begin{info}[Remark:]
The regularity condition says that if \( (v_{n}) \) is a sequence in \( \bbR^{2d} \) that converges to a vector \( v \), then the sequence of operators \( ( \pi\big(W(v_{n}) \big)) \) will strongly converge to an operator \( \pi\big(W(v)\big) \). Notice, that this is not necessarily the case in the \Cs-norm since
\begin{align*}
    \| W(v_{n}) - W(v) \|^{2} = \| W(-v) \big(e^{\imath \hbar \frac{\sigma(v_{n}, v)}{2}} W(v_n - v)  - I \big) \|^{2}
    =
    \|e^{\imath \hbar \frac{\sigma(v_{n}, v)}{2}} W(v_n - v)  - I \|^{2}.
\end{align*}
Assume that \( v_{n} - v \neq 0 \) for all \( n \in \bbN \), then by Proposition \ref{prop: weyl} \ref{prop: weyl it:3}, the spectrum of \( W(v_{n} - v) \) is the entire unit circle in \( \bbC \) and we have
\begin{align}
    \|e^{\imath \hbar \frac{\sigma(v_{n}, v)}{2}} W(v_n - v)  - I \|
    \geq  \sup \{ |\lambda - 1| \colon \lambda \in \bbC, \ | \lambda| = 1 \} = 2.
\end{align}
Thus the sequence \( \big(W(v_{n}) \big) \) is not even Cauchy in the \Cs-norm.
\end{info}

If \( U \colon \bbR^{2d} \to B(H) \) is a strongly continuous operator-valued function bounded such that \( \sup \{ \| U(v) \| \colon v \in \bbR^{2d} \} = \|U \| \leq \infty \), and \( f \) is a continuous and integrable function with respect to the measure \( \mu \) on \( \bbR^{2d} \), then the operator
\begin{align}
    A = \int_{\bbR^{2d}} f(v) U(v) \ d\mu(v),
\end{align}
can be defined as a limit of Riemann sums. We will use this construction in the following proof, therefore we review it in some detail now.

\begin{figure}[ht]
\centering
\def\svgwidth{8cm}
%% Creator: Inkscape 1.1.1 (c3084ef, 2021-09-22), www.inkscape.org
%% PDF/EPS/PS + LaTeX output extension by Johan Engelen, 2010
%% Accompanies image file './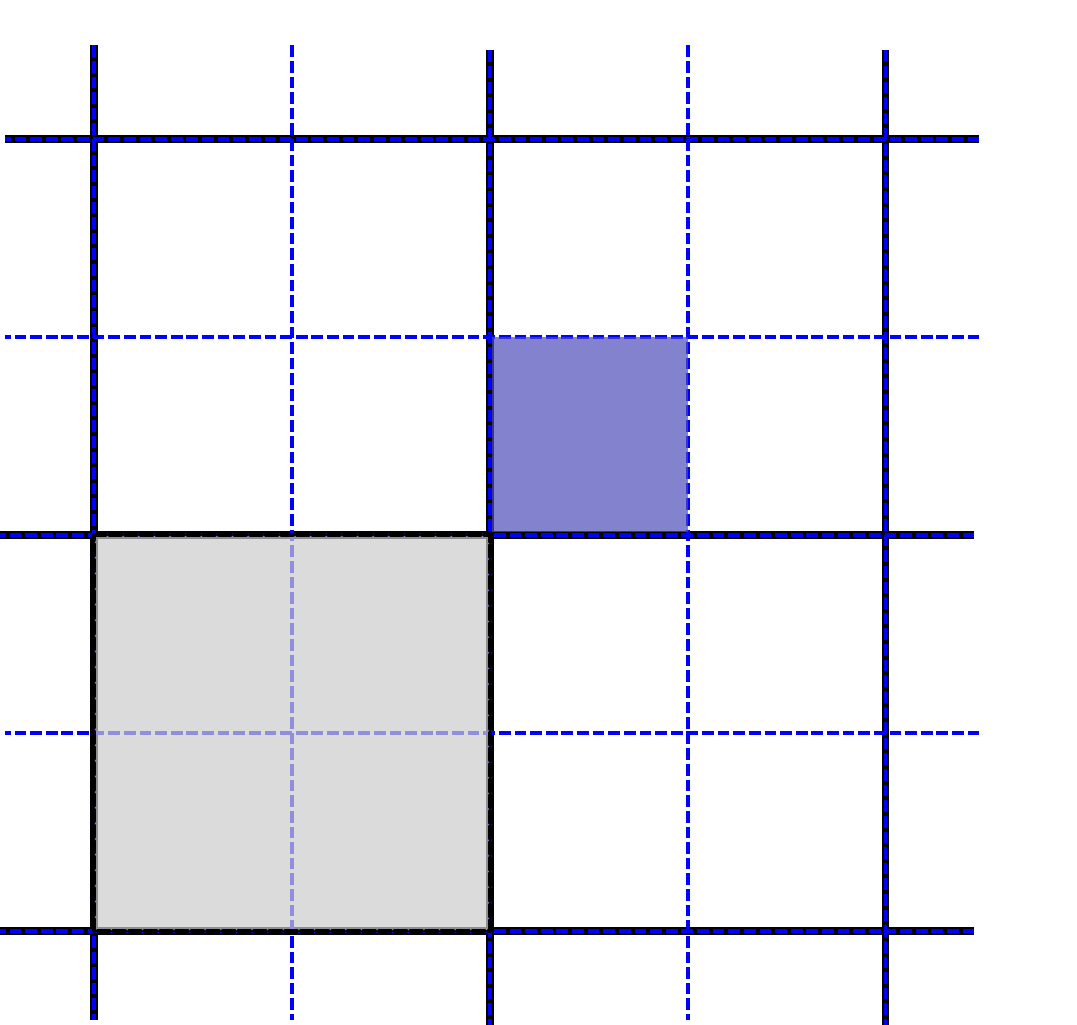' (pdf, eps, ps)
%%
%% To include the image in your LaTeX document, write
%%   \input{<filename>.pdf_tex}
%%  instead of
%%   \includegraphics{<filename>.pdf}
%% To scale the image, write
%%   \def\svgwidth{<desired width>}
%%   \input{<filename>.pdf_tex}
%%  instead of
%%   \includegraphics[width=<desired width>]{<filename>.pdf}
%%
%% Images with a different path to the parent latex file can
%% be accessed with the `import' package (which may need to be
%% installed) using
%%   \usepackage{import}
%% in the preamble, and then including the image with
%%   \import{<path to file>}{<filename>.pdf_tex}
%% Alternatively, one can specify
%%   \graphicspath{{<path to file>/}}
%%
%% For more information, please see info/svg-inkscape on CTAN:
%%   http://tug.ctan.org/tex-archive/info/svg-inkscape
%%
\begingroup%
  \makeatletter%
  \providecommand\color[2][]{%
    \errmessage{(Inkscape) Color is used for the text in Inkscape, but the package 'color.sty' is not loaded}%
    \renewcommand\color[2][]{}%
  }%
  \providecommand\transparent[1]{%
    \errmessage{(Inkscape) Transparency is used (non-zero) for the text in Inkscape, but the package 'transparent.sty' is not loaded}%
    \renewcommand\transparent[1]{}%
  }%
  \providecommand\rotatebox[2]{#2}%
  \newcommand*\fsize{\dimexpr\f@size pt\relax}%
  \newcommand*\lineheight[1]{\fontsize{\fsize}{#1\fsize}\selectfont}%
  \ifx\svgwidth\undefined%
    \setlength{\unitlength}{309.99756427bp}%
    \ifx\svgscale\undefined%
      \relax%
    \else%
      \setlength{\unitlength}{\unitlength * \real{\svgscale}}%
    \fi%
  \else%
    \setlength{\unitlength}{\svgwidth}%
  \fi%
  \global\let\svgwidth\undefined%
  \global\let\svgscale\undefined%
  \makeatother%
  \begin{picture}(1,0.95194382)%
    \lineheight{1}%
    \setlength\tabcolsep{0pt}%
    \put(0,0){\includegraphics[width=\unitlength,page=1]{./Pics/partition.pdf}}%
    \put(0.42993219,0.9360975){\color[rgb]{0,0,0}\makebox(0,0)[lt]{\lineheight{1.25}\smash{\begin{tabular}[t]{l}$\pP_{n}$\end{tabular}}}}%
    \put(0.94031611,0.62089524){\color[rgb]{0,0,0}\makebox(0,0)[lt]{\lineheight{1.25}\smash{\begin{tabular}[t]{l}$\pP_{m}$\end{tabular}}}}%
    \put(0.53237946,0.54217993){\color[rgb]{0,0,0}\makebox(0,0)[lt]{\lineheight{1.25}\smash{\begin{tabular}[t]{l}$C_{2}$\end{tabular}}}}%
    \put(0.34086169,0.34709323){\color[rgb]{0,0,0}\makebox(0,0)[lt]{\lineheight{1.25}\smash{\begin{tabular}[t]{l}$C_{1}$\end{tabular}}}}%
  \end{picture}%
\endgroup%

\caption{Example of two partitions \( \pP_{n}, \, \pP_{m} \) of \( \bbR^{2} \). Cells of partition \( \pP_{n} \) are the large squares, like the cell \( C_{1} \). Cells of partition \( \pP_{m} \) are the smaller squares, like the cell \( C_{2} \). Partition \( \pP_{m} \) is finer than partition \( \pP_{n} \) and thus we can order them as \( \pP_{m} \geq \pP_{n} \).}
\label{fig:partition}
\end{figure}

\begin{info}[Integrals via Riemann sums.]
    For a bounded set \( O \subset \bbR^{2d} \)
    we write \( \pP_{n}(O) = \{ C_{1}, \dots C_{n}\} \) to denote a partition of \( O \) into \( n \) (open) cells \( C_{i} \subset O \).
    For two partitions \( \pP_{1}(O) \) and \( \pP_{2}(O) \) we write \( \pP_{2}(O) \geq \pP_{1}(O) \) and say that ``\( \pP_{2}(O) \) is finer than \( \pP_{1}(O) \)'' if each cell \( C \) of \( \pP_{1}(O) \) partitions into a collection \( \pP_{2}(C) \) of cells in \( \pP_{2}(O) \) such that (see Figure \ref{fig:partition})
    \begin{align}\label{eq:subpartition}
        \overline{C} = \overline{\bigcup_{K \in \pP_{2}(C)} K}
        &&
        \mu(C) = \sum_{K \in \pP_{2}(C)} \mu(K).
    \end{align}
    Given a partition \( \pP_{n}(O) \), choose one vector \( v_{C} \) from each cell \( C \) and define an operator
\begin{align}
    A_{n} = \sum_{C \in \pP_{n}(O)} f(v_{C}) U(v_{C}) \ \mu(C).
\end{align}

Observer, that since \( f U \colon \bbR^{2d} \to B(H)\) is strongly continuous, for \( x \in H \) and \( \epsilon >0 \) there is a fine enough partition \( \pP_{0} (O)\) such that: for each cell \( C \) of \( \pP_{0}\)
\begin{align}\label{eq:fine partition}
    v,w \in C \qquad \text{ implies } \qquad \| f(v)U(v) - f(w) U(w) \| \leq \frac{\epsilon}{\mu(O)}.
\end{align}
Then for any \( \pP_{1} (O) \geq \pP_{0}(O) \) we have
\begin{align*}
    \| A_{0}x - A_{1}x \|
    &= \Big\| \sum_{C \in \pP_{0}(O)} f(v_{C}) U(v_{C}) \mu(C)
    - \sum_{K \in \pP_{1}(O)} f(v_{K}) U(v_{K}) \mu(K)\Big\| \\
    &= \Big\| \sum_{C \in \pP_{0}(O)} \sum_{K \in \pP_{1}(C)}
    \big(
    f(v_{C}) U(v_{C})
    -  f(v_{K}) U(v_{K})
    \big)\mu(K)
    \Big\|\\
    &\leq
     \sum_{C \in \pP_{0}(O)} \sum_{K \in \pP_{1}(C)}
    \Big\| f(v_{C}) U(v_{C})
    -  f(v_{K}) U(v_{K})
    \Big\|\mu(K) \\
    &\leq
    \epsilon,
\end{align*}
where the second equality holds by \eqref{eq:subpartition} since \( \pP_{1}(O) \) is finer than \( \pP_{0}(O) \).
Thus, if \( \big(\pP_{n} (O) \big)  \) is an increasing sequence of partitions, i.e.\ \( \pP_{n}(O) > \pP_{m} (O) \) when \( n > m \), then for every \( x \in H \) there is a \( k \in \bbN \) for which \( \pP_{k}(O) \) is fine enough to satisfy \eqref{eq:fine partition} and the sequence of vectors \( (A_{n}x) \) is Cauchy; its limit defines a linear mapping \( A_{O} \) such that,
\begin{align}\label{eq:int on O}
    x \mapsto \lim_{n\to \infty } A_{n}x = \int_{O}  f(v) U(v)x \ d \mu(v) \eqqcolon A_{O}x
    \qquad \big( x \in H \big).
\end{align}
Since the strong limit of bounded linear operators is bounded and thus \( A_{O} \) defines the integral of \( fU \) over \( O \).

To see that the limit does not depend on the partition sequence, choose two increasing sequences \( \big(\pP_{n}(O) \big) \) and \( \big( \tilde{\pP}_{n}(O) \big) \). Then there is a sequence \( \big( \overline{\pP}_{n} (O) \big)\) that is finer than both \( \pP_{n}(O) \) and \( \tilde{\pP}_{n}(O) \); for example define the cells of \( \overline{\pP}_{n}(O) \) as non-empty intersections of each cell of \( \pP_{n}(O) \) with each cell of \( \tilde{\pP}_{n}(O) \). These three partition sequences give rise to three sequences of operators \( (A_{n}) \) for the partitions \( \big(\pP_{n}(O) \big) \), \( (B_{n}) \) for the partitions \( \big(\tilde{\pP}_{n}(O )\big) \), and \( (C_{n}) \) for the partitions \( \big( \overline{\pP}_{n}(O) \big) \). From the above, these operator sequences converge to \( A \), \( B \), and \( C \), respectively.
Since \( \overline{\pP}_{n} (O) \) is finer than other two we have that \( C_{n} \to A \) and \( C_{n} \to B \), and by the uniqueness of limit points we get \( A = C = B \).

Now we need to ``extend'' \( O \) to the entire space \( \bbR^{2d} \). For this let \( \bbB(n) \) denote an open ball in \( \bbR^{2d} \) with radius \( n \) centered at the origin. Then for \( n > m \)
\begin{align*}
    \Big\| \int_{\bbB(n)} f(v) U(v) \ d\mu(v) - \int_{\bbB(m)} f(v) U(v) \ d\mu(v) \Big\|
    &\leq \int_{\bbB(n) /\bbB(m)} |f(v) | \| U(v) \| \ d \mu(v)\\
    &\leq \| U \| \int_{\bbB(n) /\bbB(m)} |f(v) | \ d \mu(v).
\end{align*}
Because \( f \) is integrable, the right hand side of the above estimate goes to zero as \( m \) goes to infinity.
Therefore the sequence of integrals over increasing balls is a Cauchy sequence of operators. The limit point of that sequence defines our desired integral
\begin{align}
    A = \int_{\bbR^{2d}}  f(v) U(v) \ d \mu(v).
\end{align}
Notice that he matrix elements of \( A \) are the usual (Riemann) integrals of matrix elements of \( U(v) \) since matrix elements of Riemann sums are the Riemann sums of matrix elements. Explicitly, for \( x,y \in H \) we have
\begin{align}\label{eq:weak integral}
    \scalp{x, A y} = \int_{\bbR^{2d}} f(v) \scalp{x, U(v) y} \ d \mu(v).
\end{align}
Since \(v \to \scalp{x,U(v)y} \) is a continuous bounded function, the integral is well defined. Indeed we could use \eqref{eq:weak integral} as a \textit{definition} for the operator \( A \) in terms of its matrix elements.
This construction of integral operators is often useful when we need to define operators that are invariant under certain transformations. For this one has to find the integral kernel \( f(v) \) such that the effect of the transformation can be ``averaged out''.
\end{info}

Now, we go back to  the regular irreducible representations of the Weyl algebra. The question is how many regular irreducible unitary inequivalent representations does a Weyl algebra have.
The Stone-von Neumann theorem provides the answer.
To simplify the notation, for the rest of this section we will assume \( \hbar =1 \), which is always possible to do by suitably redefining the units of meter, second, and kilogram.  This is why \( \hbar \) will not appear in our commutation relations in the following.

\begin{thm}[\textbf{Stone-von Neumann}]\label{thm:stone von neumann}
All regular irreducible representations of the Weyl algebra are unitarily equivalent.
\end{thm}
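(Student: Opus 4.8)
The plan is to manufacture, inside each regular irreducible representation $\pi$ of $\mfW$, a canonical unit vector — a \emph{vacuum} — whose expectation values on the Weyl elements are one and the same Gaussian, and then to invoke the uniqueness clause of the GNS Theorem \ref{thm:GNS rep}. First I would use regularity together with Stone's theorem to give meaning to the operator-valued integrals $\int_{\bbR^{2d}} f(v)\,\pi(W(v))\,dv$, exactly via the Riemann-sum construction developed above. I then set
\begin{align}
    E = c_{d} \int_{\bbR^{2d}} e^{-\frac14 |v|^{2}}\, \pi(W(v)) \ dv,
\end{align}
where $|v|$ is the Euclidean norm on $\bbR^{2d}$ and $c_{d}$ is a normalisation constant. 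Since $W(v)^{\star}=W(-v)$ by \eqref{eq:weyl rel2} and the Gaussian is even, $E^{\star}=E$. Inserting the Weyl relation $W(v)W(w)=W(v+w)e^{-\imath\sigma(v,w)/2}$ and substituting $s=v+w$ collapses $E^{2}$ to a single Gaussian integral in the leftover variable; choosing $c_{d}$ to normalise it yields $E^{2}=E$, so $E$ is a self-adjoint projection.

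The decisive algebraic identity to establish next is
\begin{align}
    E\,\pi(W(u))\,E = e^{-\frac14 |u|^{2}}\, E \qquad (u \in \bbR^{2d}),
\end{align}
again obtained by substituting \eqref{eq:weyl rel2} and evaluating a (now double) twisted Gaussian integral. Granting $E\neq 0$, fix a unit vector $x_{0}$ in the range of $E$. Taking matrix elements of the identity between vectors of the range gives $\scalp{x_{0},\pi(W(u))x_{0}} = e^{-\frac14|u|^{2}}$ and, for any $x$ in the range of $E$, the relation $\scalp{x-\scalp{x_{0},x}x_{0},\,\pi(W(u))x_{0}}=0$ for every $u$. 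Because $\pi$ is irreducible and $x_{0}\neq 0$, the vectors $\{\pi(W(u))x_{0}\}$ are total in $H$, so $x=\scalp{x_{0},x}x_{0}$ and the range of $E$ is one-dimensional.

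Consequently every regular irreducible representation carries a cyclic (irreducibility once more) unit vector $x_{0}$ realising the \emph{same} state $\omega_{0}$, fixed on generators by $\omega_{0}(W(u))=e^{-\frac14|u|^{2}}$ and extended by linearity and norm-continuity; this formula is manifestly independent of $\pi$. By the uniqueness part of Theorem \ref{thm:GNS rep}, any two cyclic representations implementing $\omega_{0}$ through a cyclic vector are unitarily equivalent, so all regular irreducible representations are mutually unitarily equivalent, each being equivalent to the GNS (Schr\"odinger) representation of $\omega_{0}$.

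The main obstacle is analytic rather than conceptual, and it is twofold: carrying out the twisted Gaussian integrals so that $E^{2}=E$ and $E\,\pi(W(u))\,E=e^{-|u|^{2}/4}E$ hold with exactly the right constant, and — more delicately — proving $E\neq 0$. For the latter I would exploit that $v\mapsto\scalp{\xi,\pi(W(v))\xi}$ is continuous and $\sigma$-positive-definite, since $\big\|\sum_{k} c_{k}\pi(W(v_{k}))\xi\big\|^{2}\geq 0$, and that the Gaussian weight is itself the positive-definite function attached to the Schr\"odinger vacuum, so $\scalp{\xi,E\xi}$ cannot vanish for all $\xi$; alternatively one may invoke simplicity of $\mfW$, which makes the irreducible $\pi$ faithful and hence forces $E=\pi(\tilde{E})\neq 0$ for the nonzero element $\tilde{E}=c_{d}\int e^{-\frac14|v|^{2}}W(v)\,dv$.
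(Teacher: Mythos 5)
Your strategy is the paper's own, step for step: the paper defines exactly the operator you call \( E \) (there \( \proj = \tfrac{1}{2\pi}\int_{\bbR^{2d}} e^{-|v|^{2}/4}\,\pi\big(W(v)\big)\,d\mu(v) \)), proves self-adjointness from \( W(v)^{\star}=W(-v) \), establishes the key identity \( \proj\,\pi\big(W(z)\big)\proj = e^{-|z|^{2}/4}\proj \) (idempotence being the case \( z=0 \)), shows the range is one-dimensional by the same irreducibility argument, and concludes by the uniqueness clause of the GNS theorem. Those parts of your proposal are correct and match the paper.

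The genuine gap is the step you yourself flag as delicate: \( E \neq 0 \). Your primary suggestion --- that the Gaussian is the positive-definite function of the Schr\"odinger vacuum, ``so \( \scalp{\xi, E\xi} \) cannot vanish for all \( \xi \)'' --- is an assertion, not an argument: you give no reason why the pairing of two twisted positive-definite functions cannot vanish identically, and making that precise (orthogonality relations for square-integrable projective representations) is essentially of the same depth as the theorem itself, so it cannot be waved at. Your fallback via simplicity of \( \mfW \) is actually broken: the element \( \tilde{E} = c_{d}\int e^{-|v|^{2}/4}\,W(v)\,dv \) does not exist in the abstract Weyl algebra, because \( v \mapsto W(v) \) is norm-discontinuous --- the paper's remark before the proof shows \( \| W(v)-W(w) \| \geq 2 \) whenever \( v \neq w \) --- so the Riemann sums defining \( \tilde{E} \) are not Cauchy in the \Cs-norm. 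The integral acquires meaning only after applying a \emph{regular} representation (that is exactly where regularity is used), so there is no nonzero \( \tilde{E} \in \mfW \) to which faithfulness could be applied. The paper closes this gap with a short Fourier argument you are missing: if \( E = 0 \), then for all \( w \) and all \( x,y \in H \),
\begin{align*}
    0 = \scalp{\pi\big(W(w)\big)x,\ E\,\pi\big(W(w)\big)y}
    = \frac{1}{2\pi}\int_{\bbR^{2d}} e^{-\frac{|v|^{2}}{4}}\,\scalp{x, \pi\big(W(v)\big)y}\, e^{\imath \sigma(v,w)}\ d\mu(v),
\end{align*}
i.e.\ the symplectic Fourier transform of \( e^{-|v|^{2}/4}\scalp{x,\pi(W(v))y} \) vanishes identically; hence \( \scalp{x,\pi(W(v))y} = 0 \) for all \( v,x,y \), forcing \( \pi\big(W(v)\big) = 0 \), which is absurd. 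Replace your nonvanishing argument with this (or an equally concrete) one and the proof is complete.
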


Before starting the proof notice that each state on the Weyl algebra is uniquely defined by its action on the Weyl elements. To see this, observe that each element \( A \) of the Weyl algebra \( \mfW \) is by definition a norm limit of linear combination of Weyl elements of the form \eqref{eq:elements of Weyl algebra} such that
\begin{align}
    A = \sum_{n\in \bbN} a_{n} W(v_{n}).
\end{align}
Thence, if \( \omega \) is a state on \( \mfW \) then for each \( \epsilon > 0 \) there is an \( N_{\epsilon } \) such that
\begin{align}
    \epsilon \geq \| A - \sum_{n = 0}^{N_{\epsilon}} a_{n} W (v_n) \|
    \geq \Big| \omega \Big( A - \sum_{n = 0}^{N_{\epsilon}} a_{n} W (v_n) \Big) \Big|
    =
    \Big|\omega \big( A \big) -  \sum_{n = 0}^{N_{\epsilon}} a_{n} \omega \big( W (v_n) \big) \Big|,
\end{align}
and \( \omega (A) \) is uniquely defined by the expectation values on the Weyl elements.

\begin{proof}[Proof of Theorem \ref{thm:stone von neumann}]
The idea of proof is to show that if \( \pi \) is a regular irreducible representation of \( \mfW \) on the Hilbert space \( H \) then there exists a vector \( \fockv \in H \) such that
\begin{align}\label{eq:fock vector}
    \scalp{\fockv, \pi \big(W(v)\big) \fockv} = e^{-|v|^{2}/4}.
\end{align}
Then such a vector defines an (algebraic) state that is uniquely specified by the relation
\begin{align}\label{eq:fock state}
    \fock \big( W(v) \big)
    = \scalp{\fockv, \pi \big( W(v) \big) \fockv}
    = e^{-|v |^{2}/4}
    \qquad \big( v \in \bbR^{2d} \big),
\end{align}
and if \( \fockv \) is unique (up to normalization) then by the GNS construction, \( \pi \) is unitarily equivalent to the GNS representation of the state \( \fock \). In this case, we would have shown that every regular irreducible representation of the Weyl algebra \( \mfW \) is equivalent to the GNS representation of the state \( \fock \) (Figure \ref{fig:sonte-von-neumann}).
\begin{figure}
\def\svgwidth{13cm}
\input{./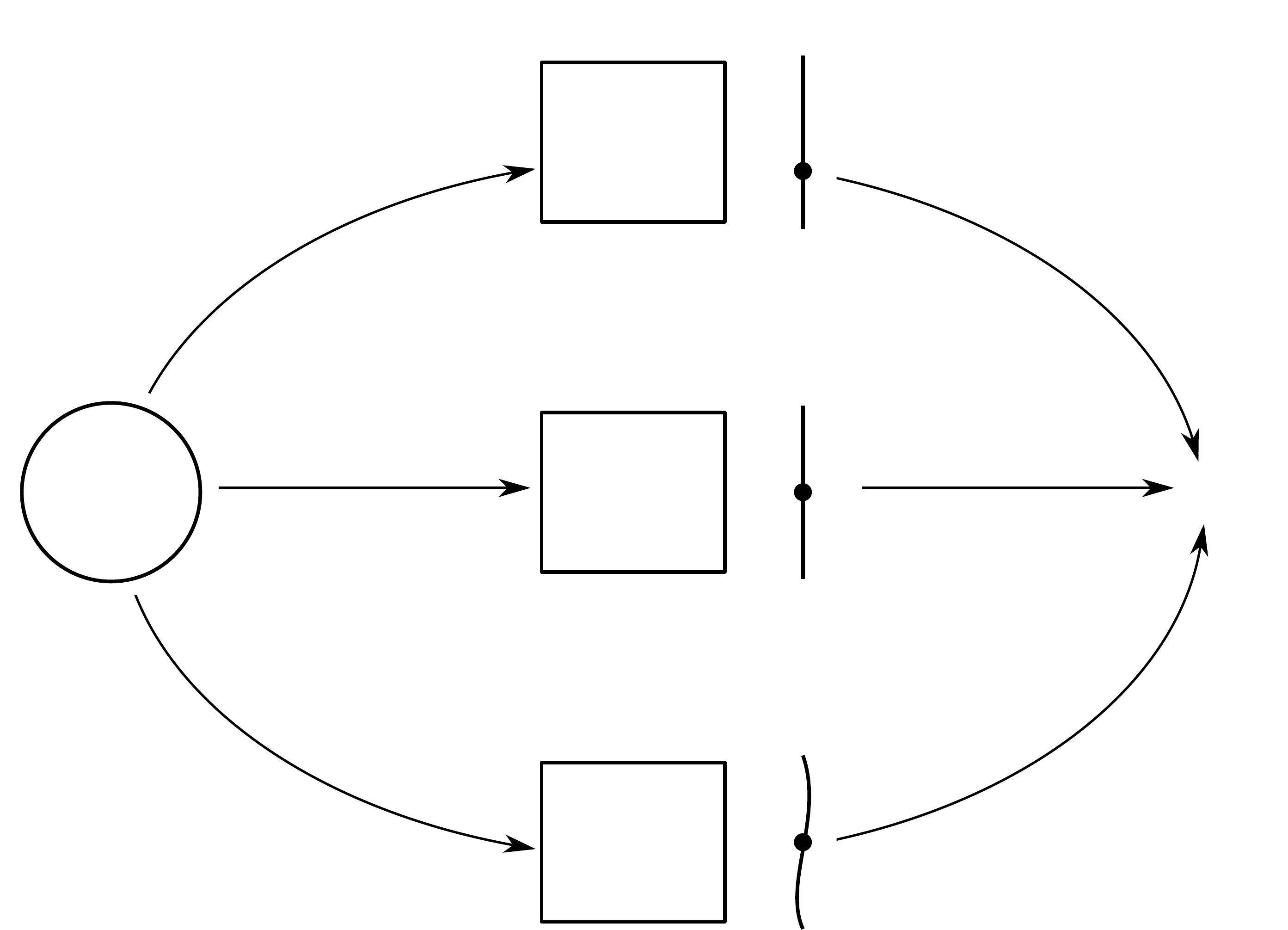_tex}
\caption{For every regular irreducible representation there is a vector \( \fockv \) (resp. \( y_{F} \), \( z_{F} \)), that defines the same algebraic state \( \fock \). By the GNS construction all the representations are then equivalent, since they appear as the GNS representations of the state \( \fock \).}
\label{fig:sonte-von-neumann}
\end{figure}

To construct such a vector \( \fockv \) we use the integral construction that we discussed above.
Let \( \pi \) be a regular irreducible representation of \( \mfW \) on \( H \). By Proposition \ref{prop: weyl} \ref{prop: weyl it:2} and since \( \pi \) is a \( \star \)-homomorphism the operator \( \pi \big(W(v) \big) \) is unitary and thus bounded for all \( v \in \bbR^{2d} \) such that \( \sup \{ \| \pi \big( W(v) \big) \| \colon v \in \bbR^{2d} \}\leq1 \).
Since the function \( v \to e^{- \| v \|^{2} / 4} \) is continuous and integrable on \( \bbR^{2d} \) the conditions of our integral construction are satisfies and we can define the operator
\begin{align}
    \proj = \frac{1}{2\pi} \int_{\bbR^{2d}} e^{-\frac{| v|^{2}}{4}} \, \pi \big(W(v) \big) \ d\mu(v).
\end{align}
For \( x,y \in H \) we have
\begin{align}
    \scalp{x, \proj y}
    &= \frac{1}{2\pi}
        \int_{\bbR^{2d}}
            e^{
            \frac{
             - | v |^{2}
             }{%
             4
            }
            } \ \scalp{x,\pi \big( W(v) \big) y } \ d \mu \\
    &= \frac{1}{2\pi}
        \int_{\bbR^{2d}}
            e^{
            \frac{
             - | v |^{2}
             }{%
             4
            }
            } \ \scalp{\pi \big( W(-v) \big)x, y } \ d \mu\\
    &= \frac{1}{2\pi}
        \int_{\bbR^{2d}}
            e^{
            \frac{
             - | v |^{2}
             }{%
             4
            }
            } \ \scalp{\pi \big( W(v) \big)x, y } \ d \mu
    = \scalp{ \proj x , y},
\end{align}
implying that \( \proj \) is self-adjoint.
Moreover, \( \proj \) is not identically zero, otherwise for every \( w \in \bbR^{2d} \) and \( x,y \in H \) we had
\begin{align}\label{eq: zero fourier}
    0
    = \scalp{\pi \big( W(w)   \big)x, \proj \pi \big( W(w) \big) y}
    = \frac{1}{2\pi} \int_{\bbR^{2d}} e^{- \frac{| v |^{2}}{4}}  \scalp{x, \pi \big(W(v) \big)y} \,  e^{\imath \sigma(v, w)} \ d \mu(v).
\end{align}
Writing \( v = (\alpha, \beta) \) and using the definition of \( \sigma \) we see that the right hand side of the above equation is the Fourier transform of the element \( e^{- \frac{| v |^{2}}{4}}  \scalp{x, \pi \big(W(v) \big)y} \). Thus \eqref{eq: zero fourier} would imply that for every \( x,y \in H \) the matrix element \(\scalp{x, \pi \big( W(v) \big)y} \) must vanish for every \( v \) yielding that \(  \pi \big(W(v) \big) = 0 \) for every \( v \in \bbR^{2d} \), which is absurd.

Next, we notice that by looking at the matrix elements \(     \scalp{\proj x,  \pi \, \big( W(z) \big) \proj y} \) for every \( x,y \in H \) we get the operator equality
\begin{align*}
    \proj \, \pi \, \big( W(z) \big) \proj
    &=
        \frac{1}{4\pi^{2}} \int_{\bbR^{2d}\times \bbR^{2d}}
        e^{- \frac{| v |^{2}}{4} - \frac{| w |^{2}}{4} }
        \pi \big( W(z + v + w) \big) \
        e^{-\imath (\sigma(z, w) + \sigma(v, z + w))}
 \ d\mu(w) d\mu(v)
\end{align*}
Using the coordinate transformation \( v \to \gamma \tfrac{1}{2}(\gamma - z + \delta) \) and \( w \to \tfrac{1}{2}(\gamma - z - \delta) \), a lot of patience, and Gaussian integrals, we eventually get (for a details see e.g.\ \cite[p. 575]{Neumann:1931ve})
\begin{align}
    \proj \, \pi \big( W(z) \big) \, \proj =
    e^{- \frac{| z |^{2}}{4}} \proj.
\end{align}
Setting \( z = 0 \) we get
\begin{align}\label{eq:proj idem}
    \proj^{2} = \proj.
\end{align}
Since \( \proj \) is self-adjoint and non-zero it defines a non-trivial orthogonal projection on \( H \). In particular, there exists at least one vector \( x \in H  \) such that \( \proj x \neq 0 \). Define
\begin{align}
    \fockv = \frac{\proj x}{ \| \proj x \|}.
\end{align}
Then \( \fockv = \proj \fockv \) and for every \( v \in \bbR^{2d} \) we get,
\begin{align}
    \scalp{\fockv, \pi \big( W(v) \big) \fockv}
    =\scalp{ \fockv, \proj \pi \big( W(v) \big) \proj \fockv}
    = e^{- \frac{| v |^{2}}{4}}  \scalp{ \fockv, \proj \fockv}.
    =e^{- \frac{| v |^{2}}{4}}
\end{align}

To finish the proof, we need to show that such an \( \fockv \) is unique, that is that the range of \( \proj \) is one-dimensional. Assume this is not the case, then there is a vector \( y_{F} \in H \) in the range of \( \proj \) (i.e.\ \( \proj y_{F} = y_{F} \)), and orthogonal to \( \fockv \). Then,
\begin{align*}
    \scalp{y_{F}, \pi \big( W(v) \big) \fockv}
    = \scalp{y_{F} , \proj \, \pi \big( W(v) \big) \proj \fockv}
    = e^{
     \frac{
      -| v |^{2}
      }{%
       4
     }
     } \scalp{y_{F}, \fockv} = 0,
\end{align*}
for every \( v \in \bbR^{2d} \), and thus by continuity \( \scalp{y_{F}, \pi(A)\fockv} \) for every \( A \in \mfW \). If \( y_{F} \neq 0 \), then \( y_{F} \) is in the orthogonal complement of the closure of the set of vectors \( \{ \pi(A)\fockv \colon A \in \mfW\} \). In particular, this set is a proper subset of \( H \), invariant under the action of \( \pi (\mfW) \). This, however, contradicts irreducibility of \( \pi \). Therefore, \( y_{F} = 0 \).
\end{proof}

\begin{info}[Remark:]
It is a general fact that if \( \pi \) is an irreducible representation of an algebra \( \mfA \) on a Hilbert space \( H \) then every vector of \( H \) is cyclic. To see this assume that there exists a non-cyclic vector \( z \in H \). In this case, the closure of \(\{ \pi(A)z \colon A \in \mfW\} \subset H \) is a closed proper subspace of \( H \) that is invariant under the action of \( \pi(A) \) for each \( A \in \mfW \). This contradicts irreducibility.
\end{info}

The state \( \fock \) from the proof of the Stone-von Neumann Theorem is called the \textit{Fock state}. It is almost exclusively the only state that we use for explicit calculations in physics.

The GNS representation induced by the Fock state is called \textit{the Fock representation} and the GNS Hilbert space is (called) \textit{the Fock space}.

\paragraph{In conclusion:}
To summarize the entire construction of quantum mechanics let us review the steps:
\begin{enumerate}
\item
Using physical intuition we argued that a generic physical system can be described by a \Cs-algebra that defines observables of that system.

\item
We realized that classical mechanics is described by commutative \Cs-algebras, and quantum mechanics is described by a non-commutative \Cs-algebra.

\item
To construct an explicit algebra for quantum mechanics we used the CCRs as a guideline. From commutation of ``position'' and ``momentum'' we ``guessed'' the product relations for the generating elements of the algebra (the Weyl relations). Then theorem \ref{thm:unique weyl} showed us that there is essentially a unique \Cs-algebra generated by elements that satisfy the Weyl relations --- the Weyl algebra.

\item
The Stone-von Neumann theorem showed us that there is essentially only one regular irreducible representation of the Weyl algebra. So there is only one Hilbert space (up to isometric isomorphism) that describes a quantum particle.

\item
Using the GNS construction we can explicitly construct the representation. At is the point we are fully equipped to do explicit calculations.
\end{enumerate}

\subsection{The Schr\"odinger representation of the Weyl algebra}
We finish this lecture by introducing the Schr\"odinger representation of quantum mechanics. It is the formulation of quantum mechanics that is commonly used in physics. Opposed to the physics approach, however, in the algebraic approach we derived this formulation from fundamental physical principles and rigorous mathematical analysis. In particular, we know that the Schr\"odinger quantum mechanics is a \textit{representation} of an underlying algebra that describes a quantum particle.
In this section, we only show the general concepts, as we do not have time develop the details of the construction.

We begin with the Hilbert space \( \lt{\bbR^{d}} \) of square integrable functions with respect to the Lebesgue measure.
In the following, it will be convenient to get back to the original Weyl operators \( U \) and \( V \). So with \( v \) of the form \( v = (\alpha, 0) \in \bbR^{2d} \) we define \( U(\alpha) = W(v) \) and with \( w \) of the form \( w = (0,\beta) \in \bbR^{2d} \) we define \( V(\beta) = W(w) \).
Now we define a representation of the Weyl algebra. That is, we define the \textit{action} of the Weyl operators on the Hilbert space \( \lt{\bbR^{d}} \). For \( \psi \in \lt{\bbR^{d} }\) we set
\begin{align}\label{eq:schroedinger rep}
    \big( U(\alpha) \psi \big)(x) = e^{\imath \alpha x} \psi(x),
    &&
    \big( V(\beta) \psi \big)(x) = \psi(x + \beta),
\end{align}
where the equality is meant to hold almost everywhere.
After analyzing this action we arrive at the conclusion that \( U(\alpha) \) and \( V(\beta) \) define a strongly continuous irreducible representation of the Weyl algebra.

Using the Stone theorem, we can derive the operators \( U(\alpha) \) and \( V(\beta) \) with respect to their parameters. The derivatives at the origin yield the generators of the Weyl operators in the Schr\"odinger representation. We can explicitly calculate,
\begin{align}\label{eq:Schroedinger operators}
    \Big(\Big.-\imath \frac{\partial U(\alpha)}{\partial \alpha_{i}}\Big\vert_{\alpha = 0} \psi \Big)(x) = x \psi(x)
&&
    \Big(\Big.-\imath \frac{\partial V(\beta)}{\partial \beta{i}}\Big\vert_{\beta=0} \psi \Big)(x) = \Big(- \imath \frac{\partial}{\partial x_{i}}\Big) \psi (x),
\end{align}
for \( i \in \{1, \dots , d \} \). For each \( i \) we call the generator of \( U(\alpha_{i}) \) the position operator \( q_i \) and group all these generators as \( q = (q_{1},\dots, q_{d}) \) so that \( U(\alpha) = e^{\imath \alpha q} \). Similarly, for each \( i \) we call the generator of \( V(\beta_{i}) \) the momentum operator \( p_{i}\) and group them to \( p = (p_{1},\dots, p_{d}) \) so that \( V(\beta) = e^{\imath \beta p} \). If we use our notation \( \id(x) = x \) to denote the identity function and \( \nabla \) to denote the Euclidean gradient on \( \bbR^{d} \) then \eqref{eq:Schroedinger operators} defines the action of \( q \) and \( p \) on \( L^{2} \)-functions as
\begin{align}
    q\psi  = \id \cdot \psi
    &&
    p \psi  = - \imath \nabla \psi.
\end{align}
These operators map \( L^{2} \)-functions into \( d \)-copies of \( L^{2} \), that is into the space of square integrable vector fields \( \lt{\bbR^{d}; \bbR^{d}} \).
In particular, we see that neither \( q \) nor \( p \) are bounded. Their domains of definition follow from the Stone's theorem such that
\begin{align}
    dom(q) &= \{ \psi \in \lt{\bbR^{d}} \colon x \psi \in \lt{\bbR^{d}; \bbR^{d}} \},\\
    dom(p) &= \{ \psi \in \lt{\bbR^{d}} \colon \nabla \psi \in \lt{\bbR^{d}; \bbR^{d}}\} = H^{1}(\bbR^{d}),
\end{align}
where \( H^{1}(\bbR^{d}) \) denotes the Sobolev space of square integrable functions, whose derivatives are also square integrable.
Especially, on these domains \( q \) and \( p \) are self-adjoint. Their product can be defined on the common core domain of rapidly decreasing functions --- the Schwarz space \( \mcS(\bbR^{d}) \) --- and on that space we can easily calculate
\begin{align}
    (qp - pq) f =
    - \imath x \nabla f + \imath \mathds{1}f + \imath x \nabla f
    =
    \imath \mathds{1}f
    \qquad \big( f \in \mcS(\bbR^{d}) \big).
\end{align}
So the CCRs are satisfied on the core of \( q \) and \( p \).
\begin{info}[Remark:]
Notice, that we could as easily choose a different representation in which \( U(\alpha) \) would act by transition and \( V(\beta) \) would act by multiplication. In this case, we would obtain the Fourier transform of the above choice. Notice, that this also implies that the definition of position operator as a multiplication operator is ambiguous. It really depends on the Hilbert space we choose. In fact, because of this symmetry between representations, it is hardly possible to distinguish between position and momentum operator at all. The only distinguishing feature will be the dynamics of the system, so we can realize which is which, by looking at their time evolution.
\end{info}

Pure state on the algebra of observables correspond to unit-normalized functions in \( \lt{\bbR^{d}} \).
For each unit vector \( \psi \in H \) we can define the state \( \omega_{\psi} \) such for \( A \in B(H) \) we get
\begin{align}
    \omega_{\psi}(A) = \int_{\bbR^{d}} \overline{\psi}(x) (A\psi)(x) \ d \lambda(x).
\end{align}
If \( A \) is a multiplication operator by some function \( F(x) \) then \( A \) generates a commutative \Cs-subalgebra of \( B(H) \), and \( \omega_{\psi} \) defines a state on this algebra. In particular, as in the case of the classical mechanics we get the probabilistic interpretation of the state \( \omega_{\psi} \) as
\begin{align}
    \omega_{\psi} (A) =
    \int_{\bbR^{d}} \overline{\psi}(x) (A\psi)(x)
    =
    \int_{\bbR^{d}} | \psi(x) |^{2} F(x) \ d \lambda (x).
\end{align}
where \( | \psi |^{2}(x) \) defines the probability distribution finding the particle at position \( x \).
This interpretation is possible for the algebra generated by \( A \). In general, however, it is not possible to find a state that has a probabilistic interpretation for all observables. This distinguishes quantum mechanics from the classical case.

The next step is be to define the dynamics of the particle by defining the evolution operator.
This, however, is beyond the scope of this lecture.

\newpage
\printbibliography
\end{document}